\newcommand{\cG}{\mathscr{G}}
\renewcommand{\P}{\mathbb{P}}
\newcommand{\N}{\mathbb{N}}
\newcommand{\E}{\mathbb{E}}
\newcommand{\Z}{\mathbb{Z}}
\newcommand{\ind}[1]{\mathbbm{1}_{#1}}
\newcommand{\var}{\textup{var}}
\newcommand{\LRU}{\textup{LRU}}
\newcommand{\TTL}{\textup{TTL}}
\newcommand{\F}{\mathscr{F}}
\newtheorem*{rep@theorem}{\rep@title}
\newcommand{\newreptheorem}[2]{%
\newenvironment{rep#1}[1]{%
 \def\rep@title{#2 \ref{##1}}%
 \begin{rep@theorem}}%
 {\end{rep@theorem}}}
 \DeclareRobustCommand*\cal{\@fontswitch\relax\mathcal}
\newtheorem{result}{Result}
\theoremstyle{remark}
\newtheorem{remark}{Remark}
\begin{document}
\title[On the Convergence of the TTL Approximation for an LRU Cache]{On the Convergence of the TTL Approximation for an LRU Cache under Independent Stationary Request Processes}
\author{Bo Jiang}
\affiliation{%
  \department{College of Information and Computer Sciences}
  \institution{University of Massachusetts Amherst}
  \city{Amherst}
  \state{MA}
  \postcode{01003}
  \country{USA}}
 \email{bjiang@cs.umass.edu}
\author{Philippe Nain}
\affiliation{%
  \institution{Inria}
  \streetaddress{Ecole Normale Sup\'erieure de Lyon, LIP, 46 all\'ee d'Italie}
  \city{69364 Lyon} 
  \country{France}}
 \email{philippe.nain@inria.fr}
\author{Don Towsley}
\affiliation{%
  \department{College of Information and Computer Sciences}
  \institution{University of Massachusetts Amherst}
  \city{Amherst}
  \state{MA}
  \postcode{01003}
  \country{USA}}
\email{towsley@cs.umass.edu}

\begin{abstract}

The modeling and analysis of an LRU cache is extremely challenging as exact results for the main performance metrics (e.g. hit rate) are either lacking or cannot be used because of their high computational complexity for large caches. As a result, various approximations have been proposed. 
The state-of-the-art method is the so-called TTL approximation,  first proposed and shown to be asymptotically exact for IRM requests by Fagin \cite{fagin1977asymptotic}. It has been applied to various other workload models and numerically demonstrated  to be accurate but without theoretical justification. 
 In this paper we provide theoretical justification for the approximation in the case where distinct contents are described by independent stationary and ergodic processes.  We show that this approximation is exact as the cache size and the number of contents go to infinity.  This extends earlier results for the independent reference model.  Moreover, we establish results not only for the aggregate cache hit probability but also for every individual content.  Last, we obtain bounds on the rate of convergence.

\end{abstract}

%
%
\begin{CCSXML}
<ccs2012>
<concept>
<concept_id>10002950.10003648.10003700</concept_id>
<concept_desc>Mathematics of computing~Stochastic processes</concept_desc>
<concept_significance>500</concept_significance>
</concept>
<concept>
<concept_id>10003033.10003079.10003080</concept_id>
<concept_desc>Networks~Network performance modeling</concept_desc>
<concept_significance>500</concept_significance>
</concept>
<concept>
<concept_id>10003752.10003809.10010047.10010049</concept_id>
<concept_desc>Theory of computation~Caching and paging algorithms</concept_desc>
<concept_significance>500</concept_significance>
</concept>
</ccs2012>
\end{CCSXML}

\ccsdesc[500]{Mathematics of computing~Stochastic processes}
\ccsdesc[500]{Networks~Network performance modeling}
\ccsdesc[500]{Theory of computation~Caching and paging algorithms}

%
%


\keywords{Cache, LRU, Characteristic time, TTL approximation, Stationary request processes, Convergence, Asymptotic exactness}

\thanks{This research was sponsored by the U.S. ARL and the U.K. MoD under
Agreement Number W911NF-16-3-0001 and by the NSF under Grants
CNS-1413998 and CNS-1617437. The views and conclusions contained in this document are those of the authors and should not be interpreted as representing the official policies, either expressed or implied, of the National Science Foundation, U.S. Army Research Laboratory, the U.S. Government, the U.K. Ministry of Defence or the U.K. Government. The U.S. and U.K. Governments are authorized to reproduce and distribute reprints for Government purposes notwithstanding any copyright notation hereon.}
%

\maketitle



\section{Introduction}
\label{sec:intro}

Caches are key components of many computer networks and systems. Moreover, they are becoming increasingly more important with the current development of new content-centric network architectures. A variety of cache replacement algorithms have been introduced and analyzed over the last few decades, mostly based on the least recently used algorithm (LRU).  Considerable work has focused on analyzing these policies \cite{bitner79,burville73,flajolet,king72,jelenkovic99,jelenkovic04,jelenkovic06}. 
Since exact results for the main performance metrics (e.g. hit rate) are either lacking or cannot be used because of their high computational complexity for large caches, approximations have been proposed \cite{fagin1977asymptotic,Dan90,jelenkovic1999asymptotic,che2002hierarchical,hirade2010analysis,osogami2010fluid, rosensweig}.
Of all the approximation techniques developed, the 
state of the art is provided by  the so-called TTL approximation based on time-to-live (TTL) caches, which has been demonstrated to be accurate for various caching policies and traffic models \cite{fagin1977asymptotic,che2002hierarchical,fricker2012versatile,bianchi2013check,fofackTTL,dehghan2015analysis,Leonardi16,leonardi2017modeling,gast2017ttl}. In this paper, we focus on the 
TTL approximation for the LRU cache with stationary requests.    
In a TTL cache, a time-to-live timer is set to its maximum value $T$ each time the content is requested. The content is evicted from the cache when the timer expires.

The link between an LRU cache and a TTL\footnote{Fagin worked with the so-called working-set policy, which is the discrete time version of the TTL policy. The result can be easily translated into one for the TTL approximation - also referred to as the Che's approximation in the literature, following the work of Che et al. in \cite{che2002hierarchical} -  under Poisson requests.} cache was first pointed out in \cite{fagin1977asymptotic} for  i.i.d. requests (the so-called independence reference model - IRM). In this paper, Fagin introduced the concept of a characteristic time (our terminology) and showed asymptotically that the performance of LRU converges to that of a TTL cache with a timer set to the characteristic time.   With the exception of an application to caching in \cite{flajolet}, this work {went unnoticed} and \cite{che2002hierarchical}  reintroduced the approximation, without theoretical justification, for LRU under Poisson requests. Fricker et al \cite{fricker2012versatile} provided some theoretical justification for the approximation by establishing a central limit theorem of the characteristic time under Poisson requests (see Remark \ref{remark:Fricker-et-al} in \prettyref{subsec:proof-convergence} for a brief discussion). More recently, \cite{Leonardi16} extended the TTL approximation to a setting where requests for distinct contents are independent and described by renewal processes.  The accuracy of this approximation is supported by simulations but a theoretical basis is lacking.  For independent Markovian Arrival Processes, \cite{gast2017ttl} developed TTL approximations for the more complicated LRU(m) and h-LRU policies, both including LRU as a special case. All the aforementioned  work focused on  stationary request processes with no dependence between different contents. Dependent and so-called time-asymptotically stationary requests were considered in \cite{osogami2010fluid}, but the results therein do not apply to the TTL approximation (see Section \ref{sec:convergence} for a brief discussion of this work).  Non-stationary request processes were considered in \cite{leonardi2017modeling}, where  a TTL approximation is developed for the hit probability in a single LRU cache and in a tandem of LRU caches, under the so-called shot noise request model. It is also shown  in \cite{leonardi2017modeling} that the cache eviction time converges to the characteristic time of the TTL approximation as the cache size goes to infinity.

The objective of the present paper is to provide a rigorous theoretical justification of the  TTL approximation for LRU in \cite{Leonardi16} and its generalization to independent stationary content request processes. To the best of our knowledge, such a justification was only provided in \cite{fagin1977asymptotic}, and later on in  \cite{hirade2010analysis}, under IRM (see Section \ref{ssec:ttl-approximation} for a discussion of Theorem 1 in \cite{hirade2010analysis}).
 
We make the following contributions in this paper.  First, we prove under the assumption that requests to distinct contents are described by mutually independent stationary and ergodic point processes, that the hit probability for each content under LRU converges to that for a TTL cache operating with a single timer value, called the LRU characteristic time, independent of the content.  Moreover, we derive rates of convergence for individual content hit probabilities under LRU to those under TTL using the LRU characteristic time. Under additional mild conditions, we then derive expressions for the characteristic time and the aggregate hit probability in the limit as the cache size and the number of contents go to infinity.  This last result extends the results of Fagin \cite{fagin1977asymptotic} for the independence reference model to a more general setting of independent stationary and ergodic content request processes. 

The rest of the paper is organized as follows.   Section \ref{sec:model} presents our model of an LRU cache under a general request model.  Section \ref{sec:overview} presents the main results of our paper. Section \ref{sec:convergence} proves the main result of the paper, namely the convergence of hit probabilities under LRU to those under TTL with bounds on the rate of convergence given in Section \ref{sec:rate}. Section \ref{sec:Fagin} extends Fagin's results to the more general case of stationary and ergodic request processes.  Last concluding statements are provided in Section \ref{sec:concl}.


\section{Model and Background}\label{sec:model}

We introduce the model for content request processes in \prettyref{subsec:request} and the content popularity in \prettyref{subsec:popularity}. \prettyref{subsec:CTA} presents the TTL approximation that approximates hit probabilities of an LRU  cache by those of a TTL cache with an appropriately chosen timer value.

\subsection{Content Request Process}\label{subsec:request}

We consider a cache of size $C_n$ serving $n$ unit sized contents labelled $i=1,\ldots , n$, where $C_n \in (0,n)$.
We assume that $C_n\to\infty$ as $n\to \infty$. 
In particular, several results will be obtained under the assumption that  $C_n \sim \beta_ 0 n$ with $\beta_0\in (0,1)$.
Requests for the contents are described by $n$ independent  stationary and ergodic simple point processes $N_{n,i}:=\{t_{n,i}(k),k\in \Z\}$, 
where $-\infty\leq \cdots < t_{n,i}(-1)< t_{n,i}(0)\leq 0< t_{n,i}(1)<\cdots\leq \infty$ represent successive request times to content $i=1,\ldots,n$. We assume the point processes are defined on a common probability space with probability measure $\P$ and associated expectation operator $\E$.
Let $0<\lambda_{n,i}<\infty$ denote the intensity of request process $N_{n,i}$, i.e., the long term average request rate for content $i$  (see  e.g. \cite[Sections 1.1 and 1.6]{BB-book-2003} for an introduction to stationary and ergodic point processes). Note that $\P[t_{n,i}(0)=0] = 0$ for all $i$ \cite[Section 1.1.4]{BB-book-2003}, i.e. no request arrives precisely at time 0. The same request processes were considered in \cite{ferragut2016optimizing} for TTL caches.

Following \cite{fofackTTL}, we will use Palm calculus for stationary and ergodic point processes \cite{BB-book-2003}. Let $\P^0_{n,i}$  be the Palm probability\footnote{Readers unfamiliar with Palm probability can think of $\P^0_{n,i}$ as being defined by $\P^0_{n,i}[A]=\P[A\mid t_{n,i}(0)=0]$ for any event $A$, i.e.~the conditional probability conditioned on the event that content $i$ is requested at time $0$, although the definition is more general.} associated with the point process $N_{n,i}$ (see e.g. \cite[Eq.~(1.2.1)]{BB-book-2003}). 
In particular, $\P^0_{n,i}[t_{n,i}(0)=0]=1$, i.e.~under $\P^0_{n,i}$ content $i$ is requested at time $t=0$.  It is known
that  \cite[Exercice 1.2.1]{BB-book-2003}
\begin{equation}
\E^0_{n,i}[t_{n,i}(1)]=\frac{1}{\lambda_{n,i}},
\label{mean}
\end{equation}
where $\E^0_{n,i}$ is the expectation operator associated with $\P^0_{n,i}$.  Define 
\begin{equation}\label{eq:G_i}
G_{n,i}(t)=\P^0_{n,i}[t_{n,i}(1)\leq  t],
\end{equation}
the cdf of the inter-request time for content $i$ under $\P^0_{n,i}$.

For any distribution $F$, we denote its mean by $m_F$ and the corresponding ccdf by $\bar F :=1-F$. For any $F$ with support in $[0,\infty)$ and $m_F\in (0,\infty)$, we define an associated  distribution $\hat F$ by
\begin{equation}\label{eq:F-hat}
\hat F(t) = \frac{1}{m_F} \int_0^t \bar F(z) dz, \quad t\geq 0.
\end{equation}
It is  well-known that (see e.g. \cite[Section 1.3.4]{BB-book-2003}) 
\begin{equation}\label{eq:age}
\P[-t_{n,i}(0) \leq t] = \hat G_{n,i}(t) = \lambda_{n,i}  \int_0^t \bar G_{n,i}(z)dz,
\end{equation}
with  $m_{G_{n,i}} = 1/\lambda_{n,i}$ from \eqref{mean}. Note that  $\P[-t_{n,i}(0) \leq t]$ is the cdf  of the time elapsed since content $i$ was last requested before the random observation time $t=0$ (recall that the system is in steady state at time $t=0$), often referred to as the age distribution of the last request for content $i$.

We assume all cdfs $G_{n,i}$ are continuous.  Let 
\begin{equation}\label{eq:G_ast}
 G_{n,i}^{\ast}(t) = G_{n,i}(t/\lambda_{n,i})
\end{equation}
be the scaled version of $G_{n,i}$ that is standardized in the sense that it has  unit mean. 
We assume that there exists a continuous cdf $\Psi$ with support in $[0,\infty)$ and mean $m_{\Psi}>0$  such that 
\begin{equation}
\label{eq:assumption-Psi}
\bar G_{n,i}^{\ast}(t) \geq \bar \Psi(t), \quad \forall t,n,i,
\end{equation}
or, by the definition of $G_{n,i}^\ast$,
\begin{equation}\label{eq:Psi-bar}
\bar G_{n,i}(t) \geq \bar \Psi(\lambda_{n,i} t), \quad \forall t, n,i,
\end{equation}
which, by \prettyref{eq:F-hat}, implies 
\begin{equation}\label{eq:Psi-hat}
 \hat G_{n,i}(t)\geq m_{\Psi} \hat \Psi(\lambda_{n,i} t), \quad \forall t, n, i.
\end{equation}

Let us elaborate a bit on the assumption in \eqref{eq:assumption-Psi}. Consider the $L_1$ distance between $\Psi$ and $G_{n,i}^{\ast}$, which, by \eqref{eq:assumption-Psi}, is given by
\[
\|G_{n,i}^{\ast} - \Psi\|_1=\|\bar G_{n,i}^{\ast} - \bar \Psi\|_1 = \int_0^\infty [\bar G_{n,i}^{\ast}(t) - \bar \Psi(t)]dt = 1 - m_\Psi.
\]
Since $\|G_{n,i}^{\ast} - \Psi\|_1\geq 0$, it follows that $m_\Psi\leq 1$.  Note that all $G_{n,i}^\ast$ live on the sphere of radius $1-m_\Psi$ centered at $\Psi$.  Since both $G_{n,i}^\ast$ and $\Psi$ are continuous, $m_\Psi=1$ if and only if $G_{n,i}^{\ast}(t)=\Psi(t)$ or, equivalently, if and only if
$G_{n,i}(t)=\Psi (\lambda_{n,i} t)$ for all $t$, $n$ and $i$. Intuitively, the function $\Psi$ controls the variability within the family of cdfs $\cG=\{G_{n,i}^\ast: n\geq i\geq 1\}$, and $m_\Psi$ is a measure of this variability. When $m_\Psi\to 0$, the constraint \eqref{eq:assumption-Psi} becomes empty, and $G_{n,i}^\ast$ could be very different from each other. As $m_\Psi$ increases, $G_{n,i}^\ast$ become more and more similar to each other. When $m_\Psi=1$, $G_{n,i}^\ast$ degenerates to a single distribution $\Psi$, in which case, $G_{n,i}$ are all from the scale family\footnote{Recall that a family of cdfs $F(st)$, indexed by a \emph{scale parameter} $s>0$, is called the scale family with standard cdf $F$.} as $G_{n,i}(t)=\Psi(\lambda_{n,i} t)$ from (\ref{eq:G_ast}). 

The most important example of the degenerate case $m_\Psi=1$ is when all request processes are Poisson, i.e.~$G_{n,i}(t) = 1-e^{-\lambda_{n,i}t}$ with $\Psi(t) = 1-e^{-t}$. Non-Poisson examples include Erlang distributions with the same number of stages, Gamma distributions with the same shape parameter, and Weibull distributions with the same shape parameter.

An important example of the non-degenerate case is when $G_{n,i}$ are from a finite number, $J$, of scale families, i.e.~$\cG=\{\Psi_1,\dots,\Psi_J\}$ for some distinct cdfs $\Psi_j$ with $m_{\Psi_j}=1$. More specifically, let ${\cal P}_1,\ldots,{\cal P}_J$ be a partition of the set $\{(n,i)\in \N^2:n\geq i\geq 1\}$ such that $G^\ast_{n,i}=\Psi_j$ for  all $(n,i)\in {\cal P}_j$.
Note that  \eqref{eq:assumption-Psi} holds with $\Psi(t)=\max_{1\leq j\leq J} \Psi_j(t)$ in this case. However, $m_\Psi<1$ unless $J=1$, which reduces to the degenerate case.

\vspace{.5em}
Let $N_n:=\{t_n(k),k\in\Z\}$ be the point process resulting from the superposition of the $n$ independent point processes $N_{n,1},\ldots,N_{n,n}$, where $-\infty\leq \cdots < t_n(-1)< t_n(0)\leq 0< t_n(1)<\cdots\leq \infty$. Note that we have used the fact that the points $t_n(k)$ are distinct with probability one \cite[Property 1.1.1]{BB-book-2003}.
Let $\P_n^0$ be the Palm probability\footnote{Again,  readers unfamiliar with  Palm probability can think of $\P_n^0$ as being defined by $\P_n^0[A]=\P[A\mid t_n(0)=0]$ for any event $A$, i.e.~the conditional probability conditioned on the event that a request arrives at time $0$.} associated with $N_n$, and $\E_n^0$ the associated expectation operator. Under $\P_n^0$ a content is requested at $t=0$, i.e.~$\P_n^0[t_n(0)=0]=1$. Let $X_{n}^0\in\{1,\ldots,n\}$
denote this content. It is known that (see e.g. \cite[Section 1.4.2]{BB-book-2003})
\begin{equation}
 \label{eq:p_i}
 \P_n^0[X_{n}^0=i]=\frac{\lambda_{n,i}}{\Lambda_n}:=p_{n,i},
\end{equation}
where $\Lambda_n:=\sum_{i=1}^n \lambda_{n,i}$, and
\begin{equation}\label{eq:total-prob}
\P_n^0[A] = \sum_{i=1}^n p_{n,i} \P_{n,i}^0[A]
\end{equation}
for any event $A$. 

\subsection{Content Popularity}\label{subsec:popularity}

The probability $p_{n,i}$ defined in \eqref{eq:p_i} gives the popularity of content $i$. Previous work (see e.g.  \cite{fricker2012versatile} and references therein)  shows  that the popularity distribution $\{p_{n,1},\dots,p_{n,n}\}$ usually follows Zipf's law, 
\begin{equation}\label{eq:Zipf}
p_{n,i} = \frac{i^{-\alpha}}{\sum_{j=1}^n j^{-\alpha}},
\end{equation}
where $\alpha \geq 0$ and most often $\alpha\in (0,1)$. This will be the main example of popularity distribution used throughout the rest of the paper.

In \cite{fagin1977asymptotic}, the popularity distribution
is assumed to be given by
\begin{equation}\label{eq:Fagin-pmf}
p_{n,i} = F\left(\frac{i}{n}\right) - F\left(\frac{i-1}{n}\right),
\end{equation}
where $F$ is a continuously differentiable cdf with support in $[0,1]$. With some slight modification, \eqref{eq:Fagin-pmf} can be extended to include \eqref{eq:Zipf} as a special case. Note that \eqref{eq:Fagin-pmf} does not assume the $p_{n,i}$'s are ordered in $i$.

In this paper, we consider more general popularity distributions, which include as special cases both \eqref{eq:Zipf} and \eqref{eq:Fagin-pmf} with the mild condition that $F' > 0$ a.e.~on $[0,1]$. Let $\sigma_i$ be the index of the $i$-th most popular content, i.e.
\begin{equation}\label{eq:p-decreasing}
p_{n,\sigma_1}\geq p_{n,\sigma_2}\geq  \dots   \geq p_{n,\sigma_n}
\end{equation}
is the sequence $p_{n,1}, \dots, p_{n,n}$ rearranged in decreasing order.  
Define the tail $\bar P_n$ of the content popularity distribution by
\begin{equation}\label{eq:P-bar}
\bar P_n(i) = \sum_{k=i+1}^n p_{n,\sigma_k},
\end{equation}
which is the aggregate popularity of the $n-i$ least popular contents. 
Roughly speaking, we will focus on popularity distributions whose values $\bar P_n(i)$ are of the same order for $i$ around $C_n$. This will be made more precise later; see assumption (P1) in \prettyref{subsubsec:popularity}. 

\subsection{TTL Approximation}\label{subsec:CTA}
\label{ssec:ttl-approximation}

Let $Y_{n,i}(t)=1$ if content $i$ is requested during the interval $[-t,0)$ and $Y_{n,i}(t) = 0$ otherwise.  With this notation, 
\begin{equation}\label{eq:Y}
Y_n(t) := \sum_{i=1}^n Y_{n,i}(t)
\end{equation}
is the number of distinct contents requested during $[-t,0)$.
Let $[-\tau_n,0)$ be the smallest past interval in which $C_n$ distinct contents are referenced, i.e., 
\begin{equation}\label{eq:tau}
\tau_n = \inf\{t: Y_n(t) \geq C_n\}.
\end{equation}
Note that if we reverse the arrow of time, we obtain statistically the same request processes, and $\tau_n$ is a stopping time for the process $Y_n(t)$.  

In an LRU cache, a content that is least recently referenced is evicted when another content needs to be added to the full cache. Thus a request for content $i$ results in a cache hit if and only if $i$ is among the $C_n$ distinct most recently referenced contents. By stationarity, we can always assume that this request arrives at $t=0$. Thus
the stationary hit probability of an LRU cache is given by
\begin{equation}\label{eq:H-LRU}
H_n^{\LRU}= \P_n^0[Y_{n,X_n^0}(\tau_n) = 1]. 
\end{equation}
Similarly,  the stationary hit probability of content $i$ in an LRU cache is  given by
\begin{equation}\label{eq:H_i-LRU}
H^{\LRU}_{n,i} = \P^0_{n,i}[Y_{n,i}(\tau_n) = 1],
\end{equation}
By \eqref{eq:total-prob}, $H_n^\LRU$ and $H_{n,i}^\LRU$ are related by
\begin{equation}\label{eq:H-H_i-LRU}
H_n^{\LRU}=\sum_{i=1}^n p_{n,i}  H^{\LRU}_{n,i}.
\end{equation}

In a TTL cache, when a content is added to the cache, its associated time-to-live timer is set to its maximum value $T$. The content is evicted from the cache when the timer expires. The capacity of the cache is assumed to be large enough to hold all contents with non-expired timers. In this paper, we consider the so-called TTL cache with reset, which always resets the associated timer to $T$ when a cache hit occurs. Thus a request for content $i$ results in a cache hit if and only if $i$ is referenced in a past window of length $T$. The stationary hit probability is then given by
\begin{equation}\label{eq:H-TTL}
H_n^{\TTL}(T) = \P_n^0[Y_{n,X_n^0}(T) = 1],
\end{equation}
and that for content $i$ by
\begin{equation}\label{eq:H_i-TTL}
H^{\TTL}_{n,i}(T) = \P^0_{n,i}[Y_{n,i}(T) = 1],
\end{equation}
which will be shown to equal $G_{n,i}(T)$ in \prettyref{lem:Y}.
By \eqref{eq:total-prob}, $H_n^\TTL(T)$ and $H^\TTL_{n,i}(T)$ are related by
\begin{equation} \label{eq:H-H_i-TTL}
H_n^{\TTL}(T) =\sum_{i=1}^n p_{n,i} H^{\TTL}_{n,i}(T).
\end{equation}

The TTL approximation was  first introduced by Fagin for IRM requests \cite{fagin1977asymptotic}, later rediscovered for independent Poisson request processes  \cite{che2002hierarchical} and extended to renewal request processes \cite{Leonardi16}, in the latter two cases without theoretical basis.  It should be noticed that Fagin's result can be reproduced \cite{Jelenkovic-private} by restricting the support of the distribution to $[0,1]$ in Theorem 4 in \cite{jelenkovic1999asymptotic}. 
Also, Theorem 1 in \cite{hirade2010analysis} proves that the individual content hit probability in an LRU cache converges to the corresponding quantity in a TTL cache as the number  of items increases to infinity, when contents are requested according to independent Poisson processes and when there is only a finite number of types of contents; see discussion after Example \ref{ex:heavy-tail-Zipf}. 

We now present it for general independent stationary and  ergodic request processes. Let 
\begin{equation}\label{eq:K}
K_n(T):=\E[Y_n(T)]
\end{equation}
denote the expected number of contents in a TTL cache with  timer value $T$, where $Y_n$ is defined in \eqref{eq:Y}. It will be shown in \prettyref{lem:K} that $K_n(T) = \sum_{i=1}^n \hat G_{n,i}(T)$. Given the size $C_n$ of an LRU cache, let $T_n$ satisfy 
\begin{equation}\label{eq:CT}
C_n = K_n(T_n) = \sum_{i=1}^n \hat G_{n,i}(T_n).
\end{equation}
The time $T_n$ is the \emph{characteristic time} of the LRU cache. The TTL approximation then approximates the hit probabilities of the LRU cache by those of a TTL cache with timer value $T_n$, i.e.
\[
H_{n,i}^\LRU \approx H_{n,i}^\TTL(T_n), \quad \forall i=1,\dots,n.
\]
For Poisson requests, \eqref{eq:CT} takes the familiar form
\[
C_n = \sum_{i=1}^n (1-e^{-\lambda_{n,i} T_n}).
\]
Note that the TTL approximation for general independent stationary and ergodic processes takes the same form as for renewal processes \cite{Leonardi16}, which is not surprising in view of Theorem 2 in \cite{gast2017ttl}.

In \prettyref{sec:convergence}, we show that, as $C_n$ and $n$ become large, the TTL approximation becomes exact, i.e. an LRU cache behaves like a TTL cache with a TTL approximation timer value equal to the LRU characteristic time.


\section{Overview of Main Results}\label{sec:overview}

In this section we present the main results of the paper. \prettyref{subsec:assumptions} collects various assumptions used in the main results and discusses their relations. The mains results are presented in \prettyref{subsec:main-results}.

\subsection{Assumptions}\label{subsec:assumptions}

We divide the assumptions into three categories according to whether they concern  cache size,  request processes, or content popularity distribution.

\subsubsection{Cache size}

Throughout the paper, it is assumed that the cache size $C_n\in (0,n)$ and $C_n \to \infty$ as $n\to\infty$. In addition, each result assumes one of the following conditions.

\begin{enumerate}[label=(C\arabic*), ref=C\arabic*]
\item\label{ass:C1} $C_n \leq \beta_1 n$ for some $\beta_1 \in (0,m_\Psi)$  and $n$ large enough, where $m_\Psi$ is the mean of $\Psi$ in \eqref{eq:assumption-Psi}.
\item\label{ass:C2} $C_n \sim \beta_0 n$ for some $\beta_0 \in (0,1)$.
\end{enumerate}

\vspace{.2em}
Note that \eqref{ass:C2} requires $C_n$ to scale linearly in $n$ while \eqref{ass:C1} only requires $C_n$ to scale at most linearly. For $\beta_0 < m_\Psi$, \eqref{ass:C2}$\implies$\eqref{ass:C1}.

\subsubsection{Request processes}\label{subsubsec:R-conditions}
The requests for different contents follow independent stationary and ergodic simple point processes. The request process for content $i$ has continuous inter-request distribution satisfying \eqref{eq:assumption-Psi}. 
In addition, each result assumes one of the following conditions, with  $\cG_i:= \{G^{\ast}_{n,i}\,:\, n\geq i\}$ and $\cG :=\bigcup_{i=1}^\infty \cG_i$, where $G^\ast_{n,i}$ is defined in \eqref{eq:G_ast},
\begin{enumerate}[label=(R\arabic*),ref=R\arabic*]
\item\label{ass:R1} Given $i$, $\cG_i$ is equicontinuous\footnote{A family of functions $\F$ is  \emph{equicontinuous} if for every $\epsilon>0$, there exists a $\delta>0$ such that $|x_1-x_2|<\delta$ implies $|f(x_1)-f(x_2)|<\epsilon$ for every $f\in \F$. There is another commonly used definition of equicontinuity, which is a weaker notion in general but turns out to be equivalent to the former in our setting.}.
\item\label{ass:R2}  $\cG$ is equicontinuous.
\item\label{ass:R3} $|\cG|<\infty$, i.e. the inter-request distributions are from a finite number of scale families.
\item\label{ass:R4} $\cG=\{\Psi\}$, i.e. the inter-request distributions are from a single scale family.
\item\label{ass:R5}$\cG$ is uniformly Lipschitz continuous\footnote{A family of functions $\F$ is \emph{uniformly Lipschitz continuous} if there exists an $M>0$ such that $|f(x_1)-f(x_2)|<M |x_1-x_2|$ for every $x_1, x_2$ and every $f\in \F$.}.
\item\label{ass:R6} There exist a constant $B$ and 
 $\rho \in (0,1]$  such that
 \begin{equation}\label{eq:Lipschitz}
 \left|G(t) - G(t\pm x t)\right| \leq B x, \quad \text{for\; $x \in [0,\rho]$, $\forall t$ and $\forall G\in \cG$}.
\end{equation}
\end{enumerate}

\vspace{.2em}
By \prettyref{lem:equicontinuity},  \eqref{ass:R1} (resp. \eqref{ass:R2}) holds if $\cG_i$ (resp. $\cG$) is composed of a finite family of continuous cdfs. Hence,
 \eqref{ass:R4}$\implies$\eqref{ass:R3}$\implies$\eqref{ass:R2}$\implies$\eqref{ass:R1}. Note also that  \eqref{ass:R5}$\implies$\eqref{ass:R2}. Examples of \eqref{ass:R5} include families of distributions that have densities with a common upper bound. The last condition \eqref{ass:R6} can be thought of as some kind of uniform Lipschitz continuity, where the bound depends on the relative deviation of the arguments rather than on the absolute deviation as in \eqref{ass:R5}. 
 Condition \eqref{ass:R6} is satisfied if the inter-request distributions are all exponential, which corresponding to Poisson requests (\prettyref{ex:Poisson}), 
or, more generally,  if \eqref{ass:R3} holds with every $G\in \cG$ having a continuous density (see \prettyref{ex:general}, which also includes an example with infinite $\cG$).
Note that \eqref{ass:R6} implies uniform Lipschitz continuity for $t$ strictly bounded away from zero, which is in fact all we need when working with \eqref{ass:R5}, so  for our purpose \eqref{ass:R6} is stronger than \eqref{ass:R5}. 

\subsubsection{Popularity distribution}\label{subsubsec:popularity}
Each result assumes one of the following conditions for content popularity distribution.

\begin{enumerate}[label=(P\arabic*),ref=P\arabic*]
\item\label{ass:P1} There exist constants  $\kappa_1 \in (\frac{1}{m_\Psi}, \frac{1}{\beta_1})$ for $\beta_1$ in \eqref{ass:C1}, $\kappa_2\in [0,1]$ and $\gamma\in (0,1)$ such that for all sufficiently large $n$, the tail popularity $\bar P_n$ defined in \eqref{eq:P-bar} satisfies
\begin{equation}\label{eq:heavy-tail}
\bar P_n(\lceil \kappa_1 C_n\rceil) > \gamma \bar P_n(\lfloor \kappa_2 C_n\rfloor).
\end{equation}

\item\label{ass:P2} Fagin's condition: for some continuous function $f$ defined on $(0,1]$ such that $f>0$ a.e.~and $\lim_{x\to 0+} f(x)\in [0,+\infty]$, and for some $z_{n,i}\in [\frac{i-1}{n},\frac{i}{n}]$,  the popularities $p_{n,i} \sim g_nf(z_{n,i})$ uniformly in $i$, i.e.
\begin{equation}\label{eq:p_i-gf}
\max_{1\leq i\leq n} \left|\frac{g_nf(z_{n,i})}{p_{n,i}} - 1\right|\to 0, \quad \text{as } n\to\infty.
\end{equation}

\item\label{ass:P3} The generalization \eqref{eq:p_i-gf-multi-F} of \eqref{ass:P2} from a single function $f$ to a finite number of functions $f_j$'s.
\end{enumerate}

\vspace{.3em}
For a discussion of \eqref{ass:P1}, see \prettyref{rmk:mu} after \prettyref{prop:convergence}. Note that \eqref{ass:P2} is slightly more general than Fagin's original condition \eqref{eq:Fagin-pmf}. Note also \eqref{ass:P2}$\implies$\eqref{ass:P3}$\implies$ \eqref{ass:P1}. \prettyref{ex:heavy-tail-Zipf}  shows that the Zipfian popularity distribution in \eqref{eq:Zipf} with $\alpha\geq 0$ satisfies \eqref{ass:P1}. \prettyref{ex:Fagin-Zipf} shows that it also satisfies \eqref{ass:P2} and hence \eqref{ass:P3}. 

\vspace{.5em}
The common assumptions that $C_n\to\infty$ as $n\to\infty$ and that requests for different contents are described by mutually independent stationary and ergodic processes satisfying \prettyref{eq:assumption-Psi} will be assumed without explicit mentioning throughout the rest of the paper.

\subsection{Main Results}\label{subsec:main-results}

In this section we present the main results of the paper. 
The first establishes that individual content hit probabilities under LRU converge to those under TTL as the cache size $C_n$ and the number of contents $n$ go to infinity, provided the timer values for all contents are set to the LRU characteristic time $T_n$ introduced in the previous section, and  provided the inter-request time distributions satisfy certain continuity properties.

\begin{result}[\prettyref{prop:convergence}]\label{result:convergence}
Under assumptions \eqref{ass:C1}, \eqref{ass:R1} and \eqref{ass:P1}, TTL approximation is asymptotically exact for content $i$, i.e.
\[
\left|H^\LRU_{n,i} - H^\TTL_{n,i}(T_n)\right| \to 0, \quad\hbox{as } n\to\infty.
\]
Under assumptions \eqref{ass:C1}, \eqref{ass:R2} and \eqref{ass:P1}, TTL approximation is asymptotically exact uniformly for all contents, i.e.
\[
\max_{1\leq i\leq n}\left|H^\LRU_{n,i} - H^\TTL_{n,i}(T_n)\right| \to 0, \quad\hbox{as } n\to\infty.
\]
\end{result}

The next result provides a uniform bound for the rates at which individual content hit probabilities under LRU converge to those under TTL under a slightly stronger Lipschitz continuity property.  

\begin{result}[\prettyref{prop:rate}]\label{result:rate}
Under assumptions \eqref{ass:C1}, \eqref{ass:R5} and \eqref{ass:P1}, the following holds,
\[
\max_{1\leq i\leq n}\left|H^\LRU_{n,i} - H^\TTL_{n,i}(T_n)\right|  = O\left(\left(\frac{\log C_n}{C_n}\right)^{\frac{1}{4}}\right).
\]
\end{result}

The above rate of convergence is slow.  This is improved in the next result where it is shown to be $O((\log C_n/C_n)^{1/2})$ under slightly stronger assumptions regarding the marginal inter-request time distributions. However, numerical results (see e.g \cite{fricker2012versatile}) suggest that the convergence rate might be faster than proved here.

\begin{result}[\prettyref{prop:rate-faster}]\label{result:rate-faster}
Under assumptions \eqref{ass:C1}, \eqref{ass:R6} and \eqref{ass:P1}, the following holds,
\[
\max_{1\leq i\leq n}\left|H^\LRU_{n,i} - H^\TTL_{n,i}(T_n)\right|  = O\left(\sqrt{\frac{\log C_n}{C_n}} \right).
\]
\end{result}

The last two results include extensions of Fagin's results for IRM 
to the case where content requests are described by mutually independent stationary and ergodic processes where the marginal inter-request time distributions satisfy mild continuity properties.

\begin{result}[\prettyref{prop:H-limit}]\label{result:H-limit}
Under assumptions \eqref{ass:C2}, \eqref{ass:R4} and \eqref{ass:P2}, the following holds,
\[
H_n^\LRU \to  \int_0^1 f(x) \Psi(\nu_0 f(x)) dx,  \quad\hbox{as } n\to\infty,
\]
where $\nu_0$ the unique real number in $(0,\infty)$ that satisfies
\[
\int_0^1 \hat \Psi(\nu_0 f(x))dx = \beta_0.
\]
\end{result}

\prettyref{result:H-limit} considers a single class of contents in the sense that there is a single $f$ and a single $\Psi$ for all contents. The following result extends it to $J$ classes of contents, where class $j$ has a fraction $b_j$ of the total contents, and each class $j$ satisfies the assumptions in \prettyref{result:H-limit} with potentially different $f_j$ and $\Psi_j$.  See \prettyref{prop:H-limit-J} for a more precise statement of   \eqref{ass:R3} and \eqref{ass:P3}.

\begin{result}[\prettyref{prop:H-limit-J}]\label{result:H-limit-multi-F} Under assumptions \eqref{ass:C2}, \eqref{ass:R3} and \eqref{ass:P3}, the following holds,
\[
 H^{\LRU}_{n} \to \sum_{j=1}^J  b_j\int_0^1  f_j(x)\Psi_j(\nu_0 f_j(x)) dx,  \quad\hbox{as } n\to\infty,
\]
where $\nu_0$ the unique real number in $(0,\infty)$ that satisfies 
\[
 \sum_{j=1}^J b_j\int_0^1  \hat \Psi_j(\nu_0 f_j(x))dx=\beta_0.
\]
\end{result}


\section{Asymptotic Exactness}
\label{sec:convergence}

It has been observed numerically in \cite{fricker2012versatile} that the TTL approximation is very accurate uniformly for contents of a wide range of popularity rank when the request processes are all Poisson. In this section, we prove that under some general conditions, the TTL approximation is exact in the large system regime, in the sense that individual content hit probabilities under LRU converge uniformly to those under TTL using the LRU characteristic time.

The following bounds on the LRU characteristic time $T_n$, which may be of interest in their own right,
will be used in the proof of the main result, \prettyref{prop:convergence}. The proof is found in \prettyref{subsec:proof-T_n-bounds}.

\begin{proposition}\label{prop:T_n-bounds}
The characteristic time $T_n$ defined by \eqref{eq:CT} exists and is unique. 
For any $n_1 \in (C_n/m_\Psi, n]$, which exists if $C_n < n m_\Psi$, we have
\begin{equation}\label{eq:T_n-upper}
T_n \leq \frac{\nu_0}{\lambda_{n,\sigma_{n_1}}},
\end{equation}
where $\sigma_{n_1}$ is defined in \eqref{eq:p-decreasing}, and $\nu_0$, which exists, is any constant that satisfies 
\[
\hat \Psi(\nu_0) \geq \frac{C_n}{n_1 m_\Psi}.
\]
For any $n_2 \leq C_n$,
\begin{equation}\label{eq:T_n-lower}
T_n \geq \frac{C_n-n_2}{\Lambda_n \bar P_n(n_2)}.
\end{equation}
\end{proposition}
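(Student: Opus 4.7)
The plan is to treat the three assertions separately, using monotonicity of $K_n$ as the central tool, together with the two-sided controls on $\hat G_{n,i}$ that are already built into the model.

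For existence and uniqueness, I would observe that each $\hat G_{n,i}$ is, by \eqref{eq:age}, the integral of the continuous nonnegative function $\lambda_{n,i}\bar G_{n,i}$, hence continuous and nondecreasing. It is strictly increasing wherever $\bar G_{n,i}>0$, and since $m_{G_{n,i}}=1/\lambda_{n,i}$ we have $\hat G_{n,i}(0)=0$ and $\hat G_{n,i}(\infty)=1$. Summing over $i$, $K_n$ is continuous and strictly increasing from $0$ to $n$ on the support where at least one $\bar G_{n,i}$ is positive, so the equation $K_n(T_n)=C_n$ has a unique solution whenever $C_n\in(0,n)$.

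For the upper bound, set $T^{\ast}=\nu_0/\lambda_{n,\sigma_{n_1}}$ and show $K_n(T^{\ast})\geq C_n$, which by monotonicity forces $T_n\leq T^{\ast}$. I would first note that the choice $n_1>C_n/m_\Psi$ gives $C_n/(n_1 m_\Psi)<1$, so a constant $\nu_0$ with $\hat\Psi(\nu_0)\geq C_n/(n_1 m_\Psi)$ exists because $\hat\Psi$ is a continuous cdf with limit $1$. Then I apply \eqref{eq:Psi-hat} to obtain $\hat G_{n,i}(T^{\ast})\geq m_\Psi\hat\Psi(\lambda_{n,i}T^{\ast})$, and restrict the sum to the $n_1$ most popular contents. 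For $k\leq n_1$, $\lambda_{n,\sigma_k}\geq\lambda_{n,\sigma_{n_1}}$, so $\lambda_{n,\sigma_k}T^{\ast}\geq\nu_0$; using the monotonicity of $\hat\Psi$ and the defining property of $\nu_0$,
\[
K_n(T^{\ast})\;\geq\;\sum_{k=1}^{n_1} m_\Psi\,\hat\Psi(\lambda_{n,\sigma_k}T^{\ast})\;\geq\;n_1\,m_\Psi\cdot\frac{C_n}{n_1 m_\Psi}\;=\;C_n,
\]
which yields \eqref{eq:T_n-upper}.

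For the lower bound, I would exploit the trivial one-sided estimate $\hat G_{n,i}(t)\leq\min(1,\lambda_{n,i}t)$: the bound by $1$ is automatic since $\hat G_{n,i}$ is a cdf, and the bound by $\lambda_{n,i}t$ follows from $\bar G_{n,i}\leq 1$ in \eqref{eq:age}. Splitting the sum $K_n(t)=\sum_{k=1}^{n_2}\hat G_{n,\sigma_k}(t)+\sum_{k=n_2+1}^n\hat G_{n,\sigma_k}(t)$ and using the $1$-bound on the first group and the $\lambda t$-bound on the second,
\[
K_n(t)\;\leq\;n_2+t\!\!\sum_{k=n_2+1}^n\!\!\lambda_{n,\sigma_k}\;=\;n_2+\Lambda_n\bar P_n(n_2)\,t,
\]
after identifying $\sum_{k=n_2+1}^n\lambda_{n,\sigma_k}=\Lambda_n\bar P_n(n_2)$ via \eqref{eq:p_i}--\eqref{eq:P-bar}. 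Evaluating at $t=T_n$ and solving gives \eqref{eq:T_n-lower}.

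The only subtle point is the upper-bound step, where the assumption $n_1>C_n/m_\Psi$ must be used to ensure that $\nu_0$ is finite; everything else reduces to pointwise comparisons of monotone functions and an elementary splitting argument, so I do not foresee a serious obstacle.
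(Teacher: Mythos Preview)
Your proposal is correct and follows essentially the same approach as the paper's proof: existence/uniqueness via continuity and strict monotonicity of $K_n$, the upper bound via \eqref{eq:Psi-hat} restricted to the $n_1$ most popular contents, and the lower bound via $\hat G_{n,i}(t)\le\min(1,\lambda_{n,i}t)$ with the same splitting. The only cosmetic difference is that for \eqref{eq:T_n-upper} the paper evaluates the inequality chain at $T_n$ and then inverts $\hat\Psi$, whereas you evaluate at $T^{\ast}=\nu_0/\lambda_{n,\sigma_{n_1}}$ and invert $K_n$; your version is arguably a touch cleaner since it relies on the already-established strict monotonicity of $K_n$ rather than on invertibility of $\hat\Psi$.
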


The following examples show that \prettyref{prop:T_n-bounds} yields the same scaling order of $T_n$ as in \cite[Eq.~(7)]{fricker2012versatile} for Zipfian popularity distribution with $\alpha\neq 1$, but for request processes more general than Poisson.

\begin{example}
Consider Zipfian popularity distribution in \eqref{eq:Zipf} with $\alpha \in (0,1)$. In this case, we need $C_n = \Omega(n)$ so that the cache stores a nonnegligible fraction of the files in the sense that $P_n(C_n)$ does not vanish as $n$ increases. Assume $C_n\sim \beta_0 n$ with $\beta_0 \in (0, m_\Psi)$. Setting $n_1 = n$ in \eqref{eq:T_n-upper}, we obtain
\[
T_n \leq \frac{\nu_0}{p_{n,n}\Lambda_n } \sim \frac{\nu_0  n}{(1-\alpha)\Lambda_n},
\]
where $\nu_0$ satisfies $\hat\Psi(\nu_0) > \beta_0/m_\Psi$. Setting $n_2=0$ in \eqref{eq:T_n-lower}, we obtain
\[
T_n \geq \frac{C_n}{\Lambda_n} \sim \frac{\beta_0 n}{\Lambda_n}.
\]
Note that
\[
p_{n,\sigma_n} = p_{n,n} = \frac{n^{-\alpha}}{\sum_{j=1}^n j^{-\alpha}} \sim (1-\alpha)n^{-1}, 
\]
where the last step follows from the well-known asymptotics (see e.g. \cite[Theorem 3.2]{apostol1976})
$\sum_{j=1}^n j^{-\alpha} \sim  n^{1-\alpha}/(1-\alpha)$ for large $n$.
Therefore,
$T_n = \Theta\left(n\Lambda_n^{-1}\right)$.
In particular, if $\lambda_{n,i} = i^{-\alpha}$, then $\Lambda_n\sim  n^{1-\alpha}/(1-\alpha)$ and hence $T_n = \Theta(n^{\alpha})$. 
\end{example}

\begin{example}
Consider Zipfian popularity distribution in \eqref{eq:Zipf} with $\alpha > 1$. In this case, $P_n(C_n)$ never vanishes as long as $C_n\geq 1$. Assume $C_n\leq \beta_0 n$ with $\beta_0 \in (0, m_\Psi)$. Consider the limit $C_n\to\infty$. Setting $n_1 \sim \kappa_1 C_n$ in \eqref{eq:T_n-upper} with $\kappa_1 \in (\frac{1}{m_\Psi},\frac{1}{\beta_0})$, we obtain
\[
p_{n,\sigma_{n_1}} = p_{n,n_1} = \frac{n_1^{-\alpha}}{\sum_{j=1}^n j^{-\alpha}} \sim \frac{1}{\kappa_2^{\alpha} C_n^{\alpha}\zeta(\alpha)}, 
\]
and hence
\[
T_n \leq \frac{\nu_0  \kappa_1^{\alpha}\zeta(\alpha) C_n^\alpha}{\Lambda_n},
\]
where $\nu_0$ satisfies $\hat\Psi(\nu_0) > (\kappa_1 m_\Psi)^{-1}$.  Setting $n_2\sim \kappa_2 C_n$ in \eqref{eq:T_n-lower} with $\kappa_2 \in (0,1)$, we obtain
\[
\bar P_n(n_2) \sim \frac{n_2 ^{1-\alpha}}{(1-\alpha)\zeta(\alpha)},
\]
where $\zeta(\alpha) = \sum_{j=1}^\infty j^{-\alpha}$ is the Riemann zeta function. Thus
\[
T_n \geq \frac{C_n-n_2}{\Lambda_n\bar P(n_2)} \sim \frac{(1-\alpha)\zeta(\alpha)(C_n-n_2)}{\Lambda_n  n_2^{1-\alpha} } \sim \frac{(1-\alpha)\zeta(\alpha)(1-\kappa_2)C_n^\alpha}{\Lambda_n  \kappa_2^{1-\alpha} },
\]
Therefore, 
$T_n = \Theta\left(C_n^\alpha \Lambda_n^{-1}\right)$.
In particular, if $C_n=\Theta(n)$ and $\lambda_i = i^{-\alpha}$, then $\Lambda_n \sim \zeta(\alpha)$ and hence $T_n = \Theta(n^\alpha)$. However, we do not need to have $C_n$ scale linearly in $n$.
\end{example}

\prettyref{prop:convergence} is the main result, which provides sufficient conditions for the hit probabilities  in the TTL approximation to converge to the corresponding hit probabilities in the LRU cache. The proof is found in \prettyref{subsec:proof-convergence}.

\begin{proposition}
\label{prop:convergence}
Under assumptions \eqref{ass:C1}, \eqref{ass:R1} and \eqref{ass:P1}, TTL approximation is asymptotically exact for content $i$, i.e.
\begin{equation}
\label{eq:convergence}
\left|H^\LRU_{n,i} - H^\TTL_{n,i}(T_n)\right| \to 0, \quad\hbox{as }\ n\to\infty.
\end{equation}
Under assumptions \eqref{ass:C1}, \eqref{ass:R2} and \eqref{ass:P1}, TTL approximation is asymptotically exact uniformly for all contents, i.e.
\begin{equation}\label{eq:uniform-convergence}
\max_{1\leq i\leq n}\left|H^\LRU_{n,i} - H^\TTL_{n,i}(T_n)\right| \to 0, \quad\hbox{as }\ n\to\infty.
\end{equation}
\end{proposition}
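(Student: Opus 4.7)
My plan is to couple the LRU hit event with the TTL hit event, reducing the proposition to (a) concentration of $\tau_n$ around $T_n$, and (b) small oscillation of $G_{n,i}$ on intervals of the form $((1-\delta)T_n,(1+\delta)T_n]$. For the coupling, observe that under $\P^0_{n,i}$ content $i$ is requested at $0$, so $Y_{n,i}$ is the non-decreasing indicator that jumps from $0$ to $1$ at the age $-t_{n,i}(-1)$, whose distribution is $G_{n,i}$ by stationarity. Hence $\{Y_{n,i}(\tau_n)\neq Y_{n,i}(T_n)\}$ is the event that $-t_{n,i}(-1)$ lies between $\tau_n$ and $T_n$, and for any $\delta\in(0,1)$,
\[
\left|H^{\LRU}_{n,i}-H^{\TTL}_{n,i}(T_n)\right| \le \P^0_{n,i}\left[|\tau_n-T_n|>\delta T_n\right] + \left(G_{n,i}((1+\delta)T_n)-G_{n,i}((1-\delta)T_n)\right),
\]
so the proof reduces to making both terms vanish.

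For the oscillation term, rewrite it as $G^\ast_{n,i}((1+\delta)s_{n,i})-G^\ast_{n,i}((1-\delta)s_{n,i})$ with $s_{n,i}:=\lambda_{n,i}T_n$, and split by whether $s_{n,i}$ exceeds a threshold $s_0$. For large $s_{n,i}$, Markov's inequality applied to the mean-one cdf $G^\ast_{n,i}$ gives $\bar G^\ast_{n,i}((1-\delta)s_{n,i})\le 1/((1-\delta)s_{n,i})$, bounding the oscillation by $1/((1-\delta)s_{n,i})$ uniformly in $\delta$ and in $\cG_i$. For bounded $s_{n,i}\in[0,s_0]$, the uniform equicontinuity of $\cG_i$ from (R1) on the compact window $[0,2s_0]$ makes the oscillation arbitrarily small once $\delta$ is small enough. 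So, given $\epsilon>0$, I first fix $s_0$ (to control the tail regime) and then $\delta=\delta(\epsilon,s_0)$ (to control the bounded regime), after which the oscillation is at most $\epsilon$ uniformly in $n$. For the uniform statement under (R2), replacing $\cG_i$ by $\cG$ makes the choice of $\delta$ valid for every $i$ simultaneously.

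For the concentration term, $Y_n$ is non-decreasing so $\{\tau_n>t\}=\{Y_n(t)<C_n\}$. Under $\P^0_{n,i}$ the $Y_{n,j}(t)$ remain mutually independent $\{0,1\}$-valued (only the law of $Y_{n,i}$ switches from $\hat G_{n,i}$ to $G_{n,i}$ by independence of contents), giving $\E^0_{n,i}[Y_n(t)]=K_n(t)+O(1)$ and $\var^0_{n,i}(Y_n(t))\le K_n(t)+O(1)=O(n)$ uniformly in $i$. Chebyshev's inequality then yields
\[
\P^0_{n,i}\left[|\tau_n-T_n|>\delta T_n\right] \le \frac{O(C_n)}{(K_n((1+\delta)T_n)-C_n)^2} + \frac{O(C_n)}{(C_n-K_n((1-\delta)T_n))^2},
\]
so it suffices to show the gap $|K_n((1\pm\delta)T_n)-C_n|=\Omega(C_n)$. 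Using the envelope $\bar G^\ast_{n,i}\ge \bar\Psi$ and a change of variables,
\[
K_n((1+\delta)T_n)-K_n(T_n) \ge m_\Psi\sum_{i=1}^n \left(\hat\Psi((1+\delta)\lambda_{n,i}T_n)-\hat\Psi(\lambda_{n,i}T_n)\right),
\]
with a symmetric lower bound on the $(1-\delta)$ side. Applying \prettyref{prop:T_n-bounds} together with the popularity condition (P1), one identifies $\Theta(C_n)$ indices whose scaled rates $\lambda_{n,i}T_n$ lie in a fixed transition window $[a,b]\subset(0,\infty)$ on which $\hat\Psi((1+\delta)s)-\hat\Psi(s)\ge c(\delta)>0$; summing over these indices gives the required $\Omega(C_n)$ gap, so the concentration term is $O(1/C_n)$ uniformly in $i$.

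The main obstacle is establishing this $\Omega(C_n)$ gap under the mild popularity condition (P1): one must combine the two-sided bounds on $T_n$ from \prettyref{prop:T_n-bounds} with the envelope $\bar\Psi$ to pinpoint a linear-in-$C_n$ collection of contents whose scaled rates remain in a bounded transition window as $n\to\infty$. Everything else is routine once the gap and oscillation are under control: the Palm corrections are $O(1)$ and thus negligible against the $\Omega(C_n)$ signal, and the passage from (R1) to (R2) for the uniform version follows mechanically by inserting equicontinuity of $\cG$ in place of $\cG_i$ in the oscillation step.
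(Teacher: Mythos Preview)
Your proposal is essentially the paper's proof: both split into (a) concentration of $\tau_n$ about $T_n$ via the $\Omega(C_n)$ gap $|K_n((1\pm\delta)T_n)-C_n|$, and (b) oscillation of $G_{n,i}$ on $[(1-\delta)T_n,(1+\delta)T_n]$ handled by equicontinuity for bounded $\lambda_{n,i}T_n$ and a Markov-type tail bound otherwise. Two minor differences are worth noting. First, you use Chebyshev where the paper uses Kolmogorov's exponential inequality; Chebyshev is enough here (the exponential bound matters only for the convergence rates in \prettyref{sec:rate}). Second, your description of the gap mechanism---``$\Theta(C_n)$ indices with $\lambda_{n,i}T_n\in[a,b]$''---is not quite the argument that works under \eqref{ass:P1}: in general there need not be $\Theta(C_n)$ indices whose scaled rates are bounded \emph{below} by a fixed $a>0$ (think of $C_n=o(n)$ with many low-rate tail contents). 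The paper instead lower-bounds $K_n'(T)=\Lambda_n\mu_n(T)$ directly: restricting to indices $i>n_1=\lceil\kappa_1 C_n\rceil$ gives $\lambda_{n,\sigma_i}T_n\le\nu_0$ by \eqref{eq:T_n-upper}, hence $\mu_n(T)\ge \bar\Psi((1+\delta)\nu_0)\,\bar P_n(n_1)$, and then \eqref{eq:T_n-lower} together with \eqref{ass:P1} yields $\Lambda_n T_n\,\bar P_n(n_1)\ge \gamma(1-\kappa_2)C_n$, so integrating $K_n'$ over $[T_n,(1+\delta)T_n]$ gives the $\Omega(\delta C_n)$ gap (concavity of $K_n$ handles the $(1-\delta)$ side). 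Your inequality $K_n((1+\delta)T_n)-K_n(T_n)\ge m_\Psi\sum_i(\hat\Psi((1+\delta)\lambda_{n,i}T_n)-\hat\Psi(\lambda_{n,i}T_n))$ leads to the same conclusion if you bound the increment by $\frac{\delta}{m_\Psi}\lambda_{n,i}T_n\,\bar\Psi((1+\delta)\nu_0)$ and sum the \emph{weighted} rates rather than count indices; this is the paper's argument in different clothing.
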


\begin{remark}\label{rmk:mu}
Condition \eqref{ass:P1} requires that the popularity distribution $\bar P_n(i)$  take values of the same order for $i$ around $C_n$, as alluded to in \prettyref{subsec:popularity}. Intuitively, this means $\bar P_n(i)$ should not change abruptly around $i=C_n$. In a stronger form obtained by setting $\kappa_2=0$, \eqref{eq:heavy-tail} reads $\bar P_n(\lceil \kappa_1 C_n\rceil) > \gamma$, which means that even with a slightly larger cache, the contents that cannot fit into the cache have an aggregate probability at least $\gamma$, or equivalently, the optimal static caching policy has a miss probability at least $\gamma$. For Zipfian popularity in \eqref{eq:Zipf}, this stronger form is satisfied only for $\alpha \leq 1$, while \eqref{eq:heavy-tail} is satisfied for all $\alpha\geq 0$ as shown in \prettyref{ex:heavy-tail-Zipf}.
\end{remark}

\begin{example}\label{ex:heavy-tail-Zipf}
Consider the Zipfian popularity distribution in \eqref{eq:Zipf}.  We first check that assumption \eqref{ass:P1} is satisfied for all $\alpha\geq 0$. For large $n$,
\[
\bar P_n(i) \sim 
\begin{cases}
\frac{n^{1-\alpha} - i^{1-\alpha}}{n^{1-\alpha}}, &\text{if } 0\leq \alpha <1;\\[5pt]
\frac{\log n - \log i}{\log n}, &\text{if } \alpha = 1;\\[6pt]
\frac{i^{1-\alpha}}{(1-\alpha)\zeta(\alpha)}, &\text{if } \alpha > 1.
\end{cases}
\]
Thus 
\[
\liminf_{n\to\infty} \frac{\bar P_n(\lceil \kappa_1 C_n\rceil)}{\bar P_n(\lfloor \kappa_2 C_n\rfloor)} \geq
\begin{cases}
1-(\kappa_1 \beta_1)^{1-\alpha}, &\text{if } 0\leq \alpha <1;\\
1, &\text{if } \alpha = 1;\\
\left(\frac{\kappa_2}{\kappa_1}\right)^{\alpha-1}, &\text{if } \alpha > 1.\\
\end{cases}
\]
In all cases, the above guarantees the existence of a $\gamma\in (0,1)$ for which \eqref{eq:heavy-tail} holds. Note that for $\alpha\leq 1$, we can set $\kappa_2=0$.  If $m_\Psi = 1$, then $\cG = \{\Psi\}$, which satisfies \eqref{ass:R2} by \prettyref{lem:equicontinuity}. Thus \eqref{eq:uniform-convergence} holds for any $C_n$ satisfying \eqref{ass:C1}. In particular, \eqref{eq:uniform-convergence} holds when all request processes are Poisson.
\end{example}

As indicated in Section \ref{subsec:CTA},   Hirade and Osogami  proved in \cite{hirade2010analysis} that the individual content hit probability in an LRU cache converges to the individual content hit probability in a TTL cache for Poisson requests as the number of contents increases to infinity. More precisely, they consider $nN$ contents, $e_{i,j}$, $i=1,\ldots,N$, $j=1,\ldots,n$, each of size $1/n$,
where successive requests for content $e_{i,j}$ follow a Poisson process with rate $\lambda_i$. These Poisson processes are assumed to be mutually independent.
Note that in this setting there is only a finite number of types of requests ($=N$)\footnote{This can be considered as a special case of the setting in \prettyref{prop:H-limit-J} with $N$ classes, each consisting of $n$ equally popular contents. However, Theorem 1 of \cite{hirade2010analysis} concerns hit probabilities of individual contents, while \prettyref{prop:H-limit-J} concerns average hit probability.}. Define $F_i(t)=1-\exp(-\lambda_i t)$. It is shown in \cite[Theorem 1]{hirade2010analysis} that the probability, $p^{(n)}_{i,j}$, that content $e_{i,j}$ is in an LRU cache converges to $F_i(T)$ as $n\to\infty$, where $T$ is the unique solution of the equation $\sum_{i=1}^N F_i(T)=K$, with $K<N$ being the size of the cache. 
By performing the substitutions  $n\to nN$,  $G_{nN, (n-1)i+j}(\cdot)\to F_i(\cdot)$ for $j=1,\ldots,n$, $i=1,\ldots,N$ and $C_n\to nK$ (with these substitutions the ratio ``cache size/content size $=nK$'' is the same as in \cite{hirade2010analysis}), we get from \eqref{eq:convergence},
\[
H^{\LRU}_{nN,i}\sim H^{\TTL}_{nN,i}(T_{nN}) = F_i (T_{nN}) \quad \hbox{as }n\to\infty,
\]
where (see (\ref{eq:CT})) $T_{nN}$ is the unique $t$ satisfying the equation $Kn=\sum_{i=1}^{nN} \hat{G}_{nN,i}(t)=\sum_{i=1}^N nF_i(t)$, or equivalently, $K=\sum_{i=1}^N F_i(t)$.
We now check the conditions \eqref{ass:C1}, \eqref{ass:R1} and \eqref{ass:P1} for  \eqref{eq:convergence}.  
Condition \eqref{ass:C1} reads $nK\leq \beta_1 nN$,  which holds for any $K/N\leq \beta_1 <1$ (note that $m_{\Psi}=1$ since requests are Poisson).  By \prettyref{lem:equicontinuity}, $\cG_i=\{\Psi\}$ with $\Psi(t)=1-e^{-t}$ is equicontinuous, satisfying \eqref{ass:R1}.  To check
\eqref{ass:P1}, we first observe that contents $e_{i,1},\ldots,e_{i,n}$ have the same popularity $r_i/n\in (0,1)$ with $\sum_{i=1}^N r_i=1$.  Hence, 
$\bar  P_{nN}(\lceil \kappa_1 C_n \rceil)  \gtrsim N(1- \kappa_1\beta_1)\min_{1\leq i\leq N} r_i:= \gamma$. Since one  can find $\kappa_1 \in (1,1/\beta_1)$ such that $\gamma\in (0,1)$,
we have shown that \eqref{eq:heavy-tail} holds with this $\gamma$ and $\kappa_2=0$.

Note that a similar  fluid approximation for an LRU cache is developed in \cite{osogami2010fluid}, which considers dependent and so-called time-asymptotically stationary requests. However, the modification introduced to deal with the dependence structure renders the new approximation unsuitable for a re-interpretation as above.  Thus the results therein do not apply to TTL approximations.  Observe also that there is only empirical evidence but no theoretical proof that the fluid limit is an accurate approximation of the original LRU cache.

The following corollary considers the convergence of the aggregate hit probability.

\begin{corollary}\label{corollary:mean-convergence}
Assume \eqref{ass:C1} and \eqref{ass:P1}. Then as $n\to\infty$,  
\begin{equation}\label{eq:mean-convergence}
 \left|H_n^\LRU - H_n^\TTL(T_n)\right| \to 0,
 \end{equation}
if either \eqref{ass:R2} holds, or for each $i$, \eqref{ass:R1} and the following hold
\begin{equation}\label{eq:light-tail}
\lim_{m\to\infty}\limsup_{n\to\infty} \bar P_n(m) = 0.
\end{equation}
\end{corollary}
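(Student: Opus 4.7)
The plan is to start from the elementary bound
\[
\bigl|H_n^{\LRU} - H_n^{\TTL}(T_n)\bigr| \le \sum_{i=1}^n p_{n,i}\, \Delta_{n,i}, \qquad \Delta_{n,i} := \bigl|H_{n,i}^{\LRU} - H_{n,i}^{\TTL}(T_n)\bigr| \in [0,1],
\]
obtained from \eqref{eq:H-H_i-LRU} and \eqref{eq:H-H_i-TTL} by the triangle inequality. Under \eqref{ass:R2} this is immediate: \prettyref{prop:convergence} yields $\max_{1\le i\le n}\Delta_{n,i} \to 0$, and since the $p_{n,i}$ form a probability vector, the right-hand side is bounded by $\max_i\Delta_{n,i}$, which tends to $0$ and establishes \eqref{eq:mean-convergence}.

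Under \eqref{ass:R1} together with \eqref{eq:light-tail}, I would reindex contents by decreasing popularity using $\sigma$ from \eqref{eq:p-decreasing} and, for any $m$, split the sum at rank $m$:
\[
\sum_{k=1}^n p_{n,\sigma_k}\Delta_{n,\sigma_k} \;\le\; \sum_{k=1}^m p_{n,\sigma_k}\Delta_{n,\sigma_k} + \bar P_n(m),
\]
using $\Delta_{n,\sigma_k}\le 1$ and the definition \eqref{eq:P-bar} for the tail. Given $\epsilon>0$, \eqref{eq:light-tail} supplies $m$ with $\limsup_n \bar P_n(m) < \epsilon/2$, controlling the tail piece for all large $n$. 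Relabelling contents so that labels coincide with popularity ranks (as in the Zipf example \eqref{eq:Zipf}), the head is a finite sum over the fixed labels $\{1,\dots,m\}$; the per-$i$ convergence from \prettyref{prop:convergence}, applied to each of these $m$ labels in turn, then gives $\max_{1\le i\le m}\Delta_{n,i} \to 0$, so that the head is eventually below $\epsilon/2$. Since $\epsilon>0$ was arbitrary, this yields \eqref{eq:mean-convergence}.

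The principal subtlety, which I expect to be the main obstacle, is reconciling the label-based equicontinuity in \eqref{ass:R1} --- attached to each fixed content label $i$ --- with the rank-based tail condition \eqref{eq:light-tail}, whose indices $\sigma_k(n)$ may in general depend on $n$. Adopting the natural convention that labels are ordered by popularity makes this reduction trivial; without that convention, one must verify that the set $\bigcup_n\{\sigma_1(n),\dots,\sigma_m(n)\}$ sits inside some fixed bounded set $\{1,\dots,M\}$, so that \eqref{ass:R1} applied to $i=1,\dots,M$ (and the fact that a finite union of equicontinuous families is equicontinuous) again delivers the required finite-sum estimate for the head. This is the only place where care is needed; the tail estimate and the reduction for the \eqref{ass:R2} case are purely bookkeeping.
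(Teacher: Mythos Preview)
Your proof is essentially identical to the paper's: the paper also bounds $|H_n^{\LRU}-H_n^{\TTL}(T_n)|$ by $\max_{1\le i\le m}\Delta_{n,i}+\bar P_n(m)$, takes $m=n$ under \eqref{ass:R2}, and under \eqref{ass:R1}+\eqref{eq:light-tail} fixes $m$, lets $n\to\infty$, then $m\to\infty$. The label/rank subtlety you flag is glossed over in the paper as well; the implicit convention there is that labels coincide with popularity ranks (so $\sigma_i=i$), under which both arguments are clean.
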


\begin{proof}
By \eqref{eq:H-H_i-LRU} and \eqref{eq:H-H_i-TTL}, for any $m$,
\[
\left|H_n^\LRU - H_n^\TTL(T_n)\right| \leq \max_{1\leq i\leq m} \left|H^\LRU_{n,i} - H^\TTL_{n,i}(T_n)\right| + \bar P_n(m).
\]

Suppose \eqref{ass:R2} holds. Let $m = n$. Since $\bar P_n(n) = 0$, \eqref{eq:mean-convergence} follows from \eqref{eq:uniform-convergence}.

Suppose for each $i$, \eqref{ass:R1} and \eqref{eq:light-tail} hold. Fix $m$ and let $n\to\infty$. By \eqref{eq:convergence},
\[
\limsup_{n\to\infty} \left|H_n^\LRU - H_n^\TTL(T_n)\right| \leq  \limsup_{n\to\infty} \bar P_n(m).
\]
Now let $m\to\infty$ and \eqref{eq:mean-convergence} follows \eqref{eq:light-tail}.
\end{proof}

\begin{example}
For the Zipfian popularity distribution in \eqref{eq:Zipf},
\[
\limsup_{n\to\infty} \bar P_n(m) = 
\begin{cases}
1, &\text{if } 0\leq \alpha \leq 1;\\
\frac{m^{1-\alpha}}{(1-\alpha)\zeta(\alpha)}, &\text{if } \alpha > 1.\\
\end{cases}
\]
Thus \eqref{eq:light-tail} holds for $\alpha>1$ but fails for $\alpha\in [0,1]$. For each $i$, if the standardized cdf $G^\ast_{n,i}$ is the same for all $n$, then $\cG_i$ is a singleton and hence equicontinuous by \prettyref{lem:equicontinuity}. In this case,  \eqref{eq:mean-convergence} holds for $\alpha>1$, but we cannot conclude the same for $\alpha\leq 1$ without further assuming that $\cG$ is equicontinuous. When $m_\Psi=1$, in particular, when all request processes are Poisson, \eqref{eq:mean-convergence} holds. For Poisson requests, Fagin \cite{fagin1977asymptotic} has established the convergence for $\alpha\in (0,1)$ and $C_n\sim \beta_0 n$. We now see this is also true for $\alpha\geq 1$ and for $C_n$ scaling sublinearly in $n$. 
\end{example}

\subsection{Proof of \prettyref{prop:T_n-bounds}}\label{subsec:proof-T_n-bounds}

We need the following two simple lemmas.

\begin{lemma}
\label{lem:Y}
For $i,j=1,\ldots,n$, and $t>0$,
\begin{equation} \label{eq:Y_i-conditional-pmf}
\P^0_{n,j}[Y_{n,i}(t) = 1] = \ind{\{j=i\}} G_{n,i}(t) + \ind{\{j\neq i\}} \hat G_{n,i}(t).
\end{equation}
\end{lemma}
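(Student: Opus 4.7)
The plan is to split the computation according to whether $j=i$ or $j\neq i$, using Palm calculus together with the independence of the point processes $N_{n,1},\dots,N_{n,n}$.

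For the case $j=i$, I would observe that under $\P^0_{n,i}$ we have $t_{n,i}(0)=0$ almost surely, so content $i$ is requested in the interval $[-t,0)$ precisely when the preceding request time $t_{n,i}(-1)$ lies in $[-t,0)$, i.e.\ when $-t_{n,i}(-1)\leq t$. By the definition of the Palm probability and the fact that the inter-request times form a stationary sequence under $\P^0_{n,i}$, the backward spacing $t_{n,i}(0)-t_{n,i}(-1)=-t_{n,i}(-1)$ has the same law as the forward spacing $t_{n,i}(1)-t_{n,i}(0)=t_{n,i}(1)$, whose cdf is $G_{n,i}$ by \eqref{eq:G_i}. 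This gives $\P^0_{n,i}[Y_{n,i}(t)=1]=G_{n,i}(t)$.

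For the case $j\neq i$, the key input is that $N_{n,i}$ and $N_{n,j}$ are independent. A standard consequence of Palm calculus for a product of independent stationary point processes is that conditioning on an atom of $N_{n,j}$ (i.e.\ working under $\P^0_{n,j}$) leaves the distribution of the other process $N_{n,i}$ unchanged; equivalently, the Palm distribution of the pair at a point of $N_{n,j}$ is the product of $\P^0_{n,j}$ restricted to $N_{n,j}$ and the stationary law of $N_{n,i}$. Since the event $\{Y_{n,i}(t)=1\}$ depends only on $N_{n,i}$, this yields $\P^0_{n,j}[Y_{n,i}(t)=1]=\P[Y_{n,i}(t)=1]$. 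Now under $\P$ we have $t_{n,i}(0)<0$ almost surely (as $\P[t_{n,i}(0)=0]=0$), so $Y_{n,i}(t)=1$ iff $-t_{n,i}(0)\leq t$, whose probability is $\hat G_{n,i}(t)$ by \eqref{eq:age}. Combining the two cases yields \eqref{eq:Y_i-conditional-pmf}.

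The only delicate point is justifying the independence statement used in the $j\neq i$ case — namely that the Palm distribution of $N_{n,i}$ at a point of the independent process $N_{n,j}$ coincides with the stationary distribution of $N_{n,i}$. I would cite the corresponding result from \cite{BB-book-2003} (independence of Palm transforms of independent point processes); everything else is direct unpacking of definitions.
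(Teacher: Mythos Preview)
Your proof is correct and follows essentially the same route as the paper: both treat the cases $j=i$ and $j\neq i$ separately, using that $t_{n,i}(0)=0$ under $\P^0_{n,i}$ together with the Palm-stationarity of the inter-request sequence for the first case, and the independence of $N_{n,i}$ and $N_{n,j}$ (citing the relevant Palm-calculus fact from \cite{BB-book-2003}) plus the age distribution \eqref{eq:age} for the second. The only difference is that you make explicit the step $\P^0_{n,i}[-t_{n,i}(-1)\leq t]=\P^0_{n,i}[t_{n,i}(1)\leq t]$ via stationarity of the spacings, which the paper states without comment.
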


\begin{proof}  For $i=j$, since $t_{n,i}(0)=0$ a.s. under $\P_{n,i}^0$, we have
\[
\P^0_{n,i}[Y_{n,i}(t) = 1]=\P^0_{n,i}[-t_{n,i}(-1) \leq  t] =G_{n,i}(t).
\]
For $i\neq j$, the independence of the point processes $N_{n,i}$ and $N_{n,j}$ yields
\[
\P^0_{n,j}[Y_{n,i}(t) = 1] = \P[Y_{n,i}(t) = 1];
\]
see \cite[Eq.~(1.4.5)]{BB-book-2003} for a more formal statement. 
Since $t_{n,i}(0) < 0$ a.s. under $\P$, we obtain 
\begin{equation}\label{eq:Y_i-pmf}
\P^0_{n,j}[Y_{n,i}(t) = 1] = \P[Y_{n,i}(t) = 1]= \P[-t_{n,i}(0)\leq t] = \hat G_{n,i}(t),
\end{equation}
where the last equality follows from \eqref{eq:age}.
This completes the proof of \eqref{eq:Y_i-conditional-pmf}.
\end{proof}

\begin{lemma} \label{lem:K}
The function $K_n$ defined in \eqref{eq:K} satisfies the following,
\begin{align} 
K_n(T) &= \sum_{i=1}^n \hat G_{n,i}(T), \label{eq:K-T}\\
K'_n(T) & = \sum_{i=1}^n \lambda_{n,i} \bar G_{n,i}(T). \label{eq:K-prime}
\end{align}
The function $K_n$ is concave on $[0,\infty)$ and strictly increasing at all $T\in [0,\infty)$ such that $K_n(T)<n$.
\end{lemma}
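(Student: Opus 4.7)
The plan is to prove the three assertions in order, each a short computation building on what precedes. The main obstacle is really just bookkeeping; no deep idea is needed, because the previous lemma (more precisely \eqref{eq:Y_i-pmf} together with the identity \eqref{eq:age}) already gives the marginal distribution of each $Y_{n,i}(T)$, and the integral representation of $\hat G_{n,i}$ gives all the differential information we need.

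First I would establish \eqref{eq:K-T}. By definition $Y_n(T)=\sum_{i=1}^n Y_{n,i}(T)$ with each $Y_{n,i}(T)\in\{0,1\}$, so linearity of expectation gives $K_n(T)=\sum_{i=1}^n \P[Y_{n,i}(T)=1]$. The event $\{Y_{n,i}(T)=1\}$ is exactly $\{-t_{n,i}(0)\leq T\}$ (content $i$ was requested during $[-T,0)$), so by \eqref{eq:age} each summand equals $\hat G_{n,i}(T)$.

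Next I would derive \eqref{eq:K-prime}. Since $\hat G_{n,i}(T)=\lambda_{n,i}\int_0^T \bar G_{n,i}(z)\,dz$ by \eqref{eq:age} and $\bar G_{n,i}$ is continuous (as $G_{n,i}$ is assumed continuous), the fundamental theorem of calculus yields $\hat G_{n,i}'(T)=\lambda_{n,i}\bar G_{n,i}(T)$, and summing gives the formula.

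Finally I would verify the qualitative properties. Each function $T\mapsto \lambda_{n,i}\bar G_{n,i}(T)$ is nonnegative and non-increasing because $G_{n,i}$ is a cdf, so $K_n'$ is nonnegative and non-increasing on $[0,\infty)$, which implies $K_n$ is non-decreasing and concave there. For the strict monotonicity claim, suppose $K_n(T)<n$. Then by \eqref{eq:K-T} there exists some $i$ with $\hat G_{n,i}(T)<1$; by the integral representation and the fact that $m_{G_{n,i}}=1/\lambda_{n,i}$, this means
\[
\int_0^T \bar G_{n,i}(z)\,dz < \int_0^\infty \bar G_{n,i}(z)\,dz,
\]
so $\bar G_{n,i}(z)>0$ for some $z>T$, and since $\bar G_{n,i}$ is non-increasing we must have $\bar G_{n,i}(T)>0$. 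Therefore $K_n'(T)\geq \lambda_{n,i}\bar G_{n,i}(T)>0$, so $K_n$ is strictly increasing at $T$.
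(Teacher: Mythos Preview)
Your proof is correct and follows essentially the same route as the paper: the formula \eqref{eq:K-T} via linearity of expectation and \eqref{eq:age}, the derivative \eqref{eq:K-prime} via the fundamental theorem of calculus, concavity from the monotonicity of $\bar G_{n,i}$, and strict increase from the observation that $\hat G_{n,i}(T)<1$ forces $\bar G_{n,i}(T)>0$. The only cosmetic difference is that you argue the last point directly whereas the paper phrases it as a contrapositive ($K_n'(T)=0 \Rightarrow K_n(T)=n$), but the underlying logic is identical.
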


\begin{proof}
Using \eqref{eq:K}, \eqref{eq:Y} and \eqref{eq:Y_i-pmf}, we obtain
\[
K_n(T)=\E[Y_n(T)]=\sum_{i=1}^n  \P[Y_{n,i}(T) = 1] =\sum_{i=1}^n \hat G_{n,i}(T),
\]
proving \eqref{eq:K-T}.
Taking the derivative of \eqref{eq:K-T} w.r.t.~$T$ and using \eqref{eq:age} yield \eqref{eq:K-prime}.
Note that $K'_n$ is a decreasing function of $T$, from which it follows that $K_n(T)$ is concave.

Now we show that $K'_n(T)>0$ at all $T$ such that $K_n(T) < n$, from which it will follow that $K_n$ is strictly increasing at all such $T$.
Clearly $K_n^\prime (T)\geq 0$ from \eqref{eq:K-prime}.
Assume that $K_n^\prime(T)=0$ for some $T>0$. Then, $\bar G_{n,i}(T)=0$ for all $i$, which, by monotonicity of $\bar G_{n,i}$, yields $\bar G_{n,i}(y) = 0$ for all $y\geq T$. Thus, by \eqref{eq:age},
\[
1-\hat G_{n,i}(T) = \lambda_i \int_T^\infty \bar G_{n,i}(y) dy = 0,
\]
which implies $K_n(T)=n$ by \eqref{eq:K-T}. Therefore, $K_n^\prime(T)>0$  for all $T$ such that $K_n(T)<n$.
\end{proof}

Now we prove \prettyref{prop:T_n-bounds}.

\begin{proof}[Proof of \prettyref{prop:T_n-bounds}]
The existence of $T_n$ follows from the continuity of $K_n$, the facts $K_n(0) = 0$ and $\lim_{T\to\infty} K_n(T) = n$, and the Intermediate Value Theorem. Uniqueness follows from the strict monotonicity of $K_n$ given by \prettyref{lem:K}.

By \eqref{eq:age} and the fact $\bar G_{n,i}(y)\leq 1$, we have
\[
\hat G_{n,i}(T_n) =\lambda_{n,i} \int_0^{T_n} \bar G_{n,i}(y) dy \leq \lambda_{n,i} T_n.
\]
Thus
\begin{align*}
C_n = K_n(T_n) &= \sum_{i=1}^n \hat G_{n,i}(T_n)  \leq  \sum_{i=1}^n \min\{1,\lambda_{n,i} T_n\} \leq  \sum_{i=1}^{n_2} 1 +  \sum_{i=n_2+1}^n  \lambda_{n,i} T_n = n_2 + \Lambda_n T_n\bar P_n(n_2),
\end{align*}
from which \eqref{eq:T_n-lower} follows.

To prove \eqref{eq:T_n-upper}, note that
\[
C_n= \sum_{i=1}^n \hat G_{n,i} (T_n) \geq  m_\Psi \sum_{i=1}^n \hat \Psi(\lambda_{n,i} T_n) \geq n_1 m_\Psi\hat \Psi(\lambda_{n,\sigma_{n_1}} T_n),
\]
where the first inequality follows from \eqref{eq:Psi-hat}, and the second  from  \eqref{eq:p_i}, \eqref{eq:p-decreasing}, and the monotonicity of  $\hat\Psi$.
Since $C_n/(n_1 m_\Psi) < 1$ and $\hat\Psi$ is a continuous cdf, there exists a $\nu_0$ such that 
\[
\hat \Psi(\nu_0) \geq \frac{C_n}{n_1 m_\Psi}.
\] 
For any such $\nu_0$,
\[
\hat\Psi(\lambda_{n,\sigma_{n_1}} T_n) \leq \frac{C_n}{n_1 m_\Psi}\leq \hat\Psi(\nu_0),
\]
which, together with the monotonicity of $\hat \Psi$, yields \eqref{eq:T_n-upper}.
\end{proof}

\subsection{Proof of \prettyref{prop:convergence}}\label{subsec:proof-convergence}

The proof of Proposition \ref{prop:convergence} relies on the four lemmas  below.

Note by (\ref{eq:K-prime}) that $K_n'(T)$ is the aggregate miss rate of a TTL cache with timer $T$, and
\begin{equation}\label{eq:mu}
\mu_n(T) = \frac{K'_n(T)}{\Lambda_n} =\sum_{i=1}^n p_{n,i} \bar G_{n,i}(T)
\end{equation}
is the aggregate miss probability.

\begin{lemma}
\label{lem:mu}
Assume \eqref{ass:C1} and \eqref{ass:P1}. 
Then there exist strictly positive constants $x_0, \phi$ that do not depend on $n$,
such that for $T\leq (1+ x_0)T_n$ and sufficiently large $n$,
\begin{equation}\label{eq:mu-lower}
\mu_n(T) \geq \frac{ \phi  C_n }{ \Lambda_n T_n }.
\end{equation}
\end{lemma}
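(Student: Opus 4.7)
The plan is to reduce, by monotonicity, to the single endpoint $T=(1+x_0)T_n$ and then lower-bound $\mu_n$ there by the contribution of the less popular contents, linking the two directions of \prettyref{prop:T_n-bounds} through the tail-popularity control in \eqref{ass:P1}. Since every $\bar G_{n,i}$ is non-increasing in $T$, so is $\mu_n$, hence it is enough to prove the bound at $T=(1+x_0)T_n$.

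For the core estimate, I would set $n_1=\lceil\kappa_1 C_n\rceil$, which lies in $(C_n,n]$ for large $n$ because $\kappa_1>1/m_\Psi\geq 1$ and $\kappa_1\beta_1<1$ by \eqref{ass:C1} and \eqref{ass:P1}. Taking $\nu_0$ as the smallest value with $\hat\Psi(\nu_0)=C_n/(n_1 m_\Psi)$, \eqref{eq:T_n-upper} gives $\lambda_{n,\sigma_{n_1}}T_n\leq \nu_0$. Since $C_n/(n_1 m_\Psi)\to 1/(\kappa_1 m_\Psi)<1$, I can fix $\eta\in(1/(\kappa_1 m_\Psi),1)$ and $\nu_*=\hat\Psi^{-1}(\eta)<\infty$, so that $\nu_0\leq\nu_*$ for all large $n$. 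The strict inequality $\hat\Psi(\nu_*)=\eta<1$ places $\nu_*$ strictly inside the support of $\Psi$, so I can choose $x_0>0$ small enough that $c:=\bar\Psi((1+x_0)\nu_*)>0$. For $k>n_1$, the popularity ordering gives $\lambda_{n,\sigma_k}\leq\lambda_{n,\sigma_{n_1}}$, whence $\lambda_{n,\sigma_k}(1+x_0)T_n\leq(1+x_0)\nu_*$; combining \eqref{eq:Psi-bar} with the monotonicity of $\bar\Psi$ then gives $\bar G_{n,\sigma_k}((1+x_0)T_n)\geq c$. Truncating the sum in \eqref{eq:mu} to $k>n_1$ yields
\[
\mu_n((1+x_0)T_n)\,\geq\, c\,\bar P_n(\lceil\kappa_1 C_n\rceil)\,>\,c\gamma\,\bar P_n(\lfloor\kappa_2 C_n\rfloor)
\]
by \eqref{ass:P1}, and \eqref{eq:T_n-lower} applied with $n_2=\lfloor\kappa_2 C_n\rfloor$ delivers $\bar P_n(\lfloor\kappa_2 C_n\rfloor)\geq(1-\kappa_2)C_n/(\Lambda_n T_n)$. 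Chaining these bounds gives the claim with $\phi=c\gamma(1-\kappa_2)$, at least when $\kappa_2<1$.

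The main obstacle is selecting a single $x_0>0$ that works uniformly in $n$: this is precisely the role of the strict inequality $\kappa_1>1/m_\Psi$ in \eqref{ass:P1}, which keeps $\hat\Psi(\nu_0)$ bounded away from $1$ and thus confines $\nu_0$ inside a region strictly interior to the support of $\Psi$, where a small multiplicative inflation by $(1+x_0)$ still leaves $\bar\Psi$ strictly positive. The borderline case $\kappa_2=1$ needs separate treatment — e.g.\ applying \eqref{eq:T_n-lower} at a smaller $n_2$ and exploiting the monotonicity of $\bar P_n$ — but I expect only routine additional care to be required there.
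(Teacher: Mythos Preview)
Your proposal is correct and follows essentially the same route as the paper: truncate $\mu_n$ at $n_1=\lceil\kappa_1 C_n\rceil$, bound $\lambda_{n,\sigma_{n_1}}T_n$ via \eqref{eq:T_n-upper}, invoke \eqref{eq:Psi-bar} for a uniform lower bound on the remaining $\bar G_{n,\sigma_k}$, and then chain \eqref{ass:P1} with \eqref{eq:T_n-lower} at $n_2=\lfloor\kappa_2 C_n\rfloor$ to extract the factor $C_n/(\Lambda_n T_n)$, arriving at the same constant $\phi=(1-\kappa_2)\,\gamma\,\bar\Psi((1+x_0)\nu_0)$. Your caveat about $\kappa_2=1$ is well taken --- the paper's displayed $\phi$ also vanishes in that borderline case, so the tacit working assumption is $\kappa_2<1$.
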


  \begin{proof}
 Recall the definition of $\kappa_1$ and $\kappa_2$ in the statement of Proposition \ref{prop:convergence}.
 Let $n_1 =\lceil \kappa_1 C_n\rceil$,  $n_2 =\lfloor \kappa_2 C_n\rfloor$. As  $C_n/(n m_\Psi) \leq \beta_1/ m_\Psi< 1$  for sufficiently large $n$ by \eqref{ass:C1}
 and $\hat \Psi$ is a
continuous cdf with $\hat \Psi(0) = 0$, there exist $\nu_0$ and $x_0 > 0$  such that
 \begin{equation}
 \label{bound-psi-hat}
 1> \hat \Psi((1+x_0)\nu_0) \geq \hat \Psi(\nu_0) \geq \beta_1 /m_\Psi \geq  C_n/(n m_\Psi)
 \end{equation}
 for sufficiently large $n$. Recall the content ordering \eqref{eq:p-decreasing}. For sufficiently large $n$,
\begin{align}
 \mu_n(T) &=\sum_{i=1}^n p_{n,\sigma_i} \bar G_{n,\sigma_i}( T) \nonumber\\
 &\geq \sum_{i=1}^n p_{n,\sigma_i} \bar \Psi(\lambda_{n,\sigma_{i} }T)\hspace{12mm} \text{by \eqref{eq:Psi-bar}}\nonumber\\
 &\geq \sum_{i=n_1+1}^n  p_{n,\sigma_i} \bar \Psi(\lambda_{n,\sigma_{i} }T) \nonumber\\
 &\geq \bar \Psi(\lambda_{n,\sigma_{n_1} }T)\sum_{i=n_1+1}^n p_{n,\sigma_i} \hspace{4mm}\text{by \eqref{eq:p-decreasing}}\nonumber\\
 &=\bar \Psi( \lambda_{n,\sigma_{n_1}} T) \bar P_n(n_1).
 \label{mun-T}
 \end{align}
 Since $\mu_n(T)$ is monotonically decreasing in $T$, we obtain for $T\leq (1+x_0)T_n$ and all sufficiently large $n$,
 \begin{align*}
 \mu_n(T) &\geq \mu_n((1+x_0)T_n)\\
 &\geq \bar \Psi((1+x_0)\lambda_{n,\sigma_{n_1}} T_n) \bar P_n(n_1)\hspace{8.7mm}  \text{by (\ref{mun-T})} \\
 &\geq \bar \Psi((1+x_0)\nu_0) \bar P_n(n_1) \hspace{17.8mm} \text{by \eqref{eq:T_n-upper}}\\
 & \geq \frac{C_n-n_2}{\Lambda_n T_n} \bar \Psi((1+x_0)\nu_0) \frac{\bar P_n(n_1)}{\bar P_n(n_2)} \hspace{5.1mm} \text{by \eqref{eq:T_n-lower}}\\
 &\geq \frac{(1-\kappa_2)  C_n}{\Lambda_n T_n}   \bar \Psi((1+x_0)\nu_0)  \frac{\bar P_n(n_1)}{\bar P_n(n_2)} \nonumber\\
 & \geq  \frac{(1-\kappa_2)  C_n}{\Lambda_n T_n}  \bar \Psi((1+x_0)\nu_0) \gamma \hspace{9mm} \text{by \eqref{eq:heavy-tail}}.
 \end{align*}
 The last inequality yields \eqref{eq:mu-lower} with $\phi = (1-\kappa_2) \gamma \bar \Psi((1+x_0) \nu_0)$ if $\bar \Psi((1+x_0) \nu_0)>0$.
 Assume that $\bar \Psi((1+x_0) \nu_0)=0$. This would imply that $\Psi(x)=1$ for all $x\geq (1+x_0) \nu_0$ by monotonicity of $\Psi$, which would in turn imply that
 $1-\hat \Psi((1+x_0)\nu_0)=(1/m_\Psi)\int_{(1+x_0)\nu_0}^\infty \bar \Psi(t)dt =0$, contradicting \eqref{bound-psi-hat}.
Therefore, we indeed have
 $\bar \Psi((1+x_0) \nu_0)>0$, which completes the proof.
 \end{proof}

\begin{lemma}[Kolmogorov's inequality \hbox{\cite[Section 19.1]{loeve1977probability}}]
Let $X_1,\dots,X_n$ be independent random variables such that $\E X_i = 0$ and $|X_i|\leq b$ for all $i$. Then for any $x> 0$,
\begin{equation}\label{eq:Kolmogorov}
\P\left[\sum_{i=1}^n X_i \geq x\right] \leq \exp\left\{-\frac{x^2}{4 \max\{s_n^2, bx\}}\right\},
\end{equation}
where $s_n^2 = \sum_{i=1}^n \E X_i^2$ is the variance of $\sum_{i=1}^n X_i$. 
\end{lemma}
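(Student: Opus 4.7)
The plan is to give a classical Chernoff--Cram\'er argument: bound each moment generating function using only $\E X_i=0$ and $|X_i|\leq b$, and then optimize the Chernoff parameter in two regimes determined by the comparison of $s_n^2$ and $bx$.

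First, Markov's inequality applied to the exponential gives, for every $\theta>0$,
\[
\P\!\left[\sum_{i=1}^n X_i \geq x\right] \leq e^{-\theta x}\prod_{i=1}^n \E e^{\theta X_i},
\]
where independence of the $X_i$ has been used. Next, I would bound each $\E e^{\theta X_i}$ by expanding the exponential in a power series, using $\E X_i=0$ to kill the linear term, and bounding $|X_i|^k\leq b^{k-2}X_i^2$ for $k\geq 2$, to obtain
\[
\E e^{\theta X_i}\leq 1+\frac{\E X_i^2}{b^2}\bigl(e^{\theta b}-1-\theta b\bigr)\leq \exp\!\left(\frac{\E X_i^2}{b^2}\bigl(e^{\theta b}-1-\theta b\bigr)\right),
\]
where the second step uses $1+y\leq e^y$. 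Multiplying across $i=1,\dots,n$ produces the master bound
\[
\P\!\left[\sum_{i=1}^n X_i \geq x\right]\leq \exp\!\left(-\theta x+\frac{s_n^2}{b^2}\bigl(e^{\theta b}-1-\theta b\bigr)\right).
\]

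The final step is to optimize $\theta$, distinguishing two cases. In the subgaussian regime $s_n^2\geq bx$, I would take $\theta=x/(2s_n^2)$, so that $\theta b\leq 1/2$; the elementary inequality $e^u-1-u\leq u^2$ on $[0,1]$ then reduces the exponent to $-\theta x+s_n^2\theta^2=-x^2/(4s_n^2)$. In the Poisson regime $s_n^2<bx$, I would take $\theta=\alpha/b$ for a fixed constant $\alpha$ chosen so that $e^\alpha-1-2\alpha\leq -\tfrac14$; for instance $\alpha=\ln 2$ works, since $1-2\ln 2\approx -0.386$. Using $s_n^2/b^2<x/b$ in this regime, the exponent is at most $(x/b)(e^\alpha-1-2\alpha)\leq -x/(4b)=-x^2/(4bx)$. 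Combining the two cases yields the stated bound with $\max\{s_n^2,bx\}$ in the denominator.

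There is no real obstacle, since the proof is a textbook Chernoff optimization. The only point requiring mild care is the calibration of numerical constants so that the factor $4$ and the $\max\{s_n^2,bx\}$ emerge cleanly, and so that the two sub-bounds match at the crossover $bx=s_n^2$ (both give $\exp(-x/(4b))$ there).
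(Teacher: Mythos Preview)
The paper does not prove this lemma at all; it simply quotes it from Lo\`eve's textbook \cite[Section~19.1]{loeve1977probability}. Your Chernoff--Cram\'er argument is correct and is essentially the classical proof that appears in that reference: the MGF bound $\E e^{\theta X_i}\leq \exp\bigl(\tfrac{\E X_i^2}{b^2}(e^{\theta b}-1-\theta b)\bigr)$ is standard, the choice $\theta=x/(2s_n^2)$ in the subgaussian regime together with $e^u-1-u\leq u^2$ on $[0,1]$ yields exactly $-x^2/(4s_n^2)$, and the choice $\theta=(\ln 2)/b$ in the Poisson regime works because $1-2\ln 2\approx -0.386\leq -\tfrac14$. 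The two cases splice correctly at $s_n^2=bx$, so the stated bound follows.
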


The next lemma shows that $\tau_n$ is concentrated around $T_n$.

\begin{lemma}
\label{lemma:tau-concentration}
Assume \eqref{eq:mu-lower} holds for $T\leq (1+x_0)T_n$. Then for $0\leq x\leq \min\{1,x_0\}$, 
\[
\P^0_{n,i}[\tau_n > (1+x)T_n]  \leq \exp\left\{-\frac{(\phi xC_n)^2}{4(1+x)C_n + 4}\right\}.
\]
If, in addition, $\phi x C_n \geq 1$, then
\[
\P^0_{n,i}[\tau_n < (1-x)T_n] \leq \exp\left\{-\frac{(\phi xC_n-1)^2}{4C_n + 4}\right\}.
\]
\end{lemma}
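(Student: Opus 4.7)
The plan is to translate the two tail events of $\tau_n$ into large-deviation events for the independent Bernoulli sum $Y_n(T)=\sum_{j=1}^n Y_{n,j}(T)$ and apply Kolmogorov's inequality. Since $Y_n$ is integer-valued, non-decreasing and right-continuous, $\{\tau_n > T\} = \{Y_n(T) \leq C_n - 1\}$ and $\{\tau_n < T\} \subseteq \{Y_n(T) \geq C_n\}$. By the mutual independence of the $N_{n,j}$ together with \prettyref{lem:Y}, under $\P^0_{n,i}$ the $Y_{n,j}(T)$ are independent Bernoulli with mean $\hat G_{n,j}(T)$ for $j\neq i$ and $G_{n,i}(T)$ for $j=i$, so
\[
\E^0_{n,i}[Y_n(T)] = K_n(T) + G_{n,i}(T) - \hat G_{n,i}(T), \qquad \var^0_{n,i}[Y_n(T)] \leq \E^0_{n,i}[Y_n(T)].
\]

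Next, I would quantify the separation of $\E^0_{n,i}[Y_n((1\pm x)T_n)]$ from $C_n$. Since $K_n' = \Lambda_n \mu_n$ by \prettyref{lem:K} and the hypothesis $\mu_n(T) \geq \phi C_n/(\Lambda_n T_n)$ holds on $[0,(1+x_0)T_n]$, integrating from $T_n$ to $(1\pm x)T_n$ yields $|K_n((1\pm x)T_n) - C_n| \geq \phi x C_n$ for $0\leq x\leq \min\{1,x_0\}$. The Palm correction $G_{n,i}(T)-\hat G_{n,i}(T)\in[-1,1]$ is handled asymmetrically: for the upper tail the integer rewriting $\{Y_n(T) < C_n\}=\{Y_n(T) \leq C_n - 1\}$ absorbs a $+1$, so the effective gap $\E^0_{n,i}[Y_n(T)] - C_n + 1$ is at least $\phi x C_n$; for the lower tail no analogous integer slack is available and one gets only $C_n - \E^0_{n,i}[Y_n(T)] \geq \phi x C_n - 1$. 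This asymmetry explains the extra hypothesis $\phi x C_n \geq 1$, needed so that Kolmogorov's inequality is invoked with a strictly positive deviation.

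With $W_j := \E^0_{n,i}[Y_{n,j}(T)] - Y_{n,j}(T)$ for the upper tail (and $-W_j$ for the lower tail) as mean-zero summands bounded by $1$, Kolmogorov's inequality produces bounds of the form $\exp\{-y^2/(4\max\{s^2, y\})\}$ with $y = \phi x C_n$ (upper) or $y = \phi x C_n - 1$ (lower). To get the stated denominators I would bound $\max\{s^2,y\}$ using $s^2 \leq \E^0_{n,i}[Y_n(T)] \leq K_n(T) + 1$; by concavity of $K_n$ with $K_n(0)=0$, $K_n((1+x)T_n) \leq (1+x)C_n$, giving $(1+x)C_n + 1$ for the upper tail, while monotonicity gives $K_n((1-x)T_n) \leq C_n$, giving $C_n + 1$ for the lower tail. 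These reproduce exactly the denominators $4(1+x)C_n + 4$ and $4C_n + 4$ of the lemma.

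The main obstacle is the bookkeeping of the $\pm 1$ Palm/integer corrections, which distinguishes the two tails and forces the extra hypothesis in the lower bound; once this is carefully dispatched, the argument reduces to a direct application of Kolmogorov's inequality to an independent Bernoulli sum, with the $\mu_n$ lower bound supplying the mean displacement and concavity/monotonicity of $K_n$ controlling the variance.
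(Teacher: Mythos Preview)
Your proposal is correct and essentially identical to the paper's proof: the same reduction of $\{\tau_n \gtrless (1\pm x)T_n\}$ to deviations of the Bernoulli sum $Y_n((1\pm x)T_n)$, the same Palm-corrected mean $K_n(T)+G_{n,i}(T)-\hat G_{n,i}(T)\in[K_n(T)-1,K_n(T)+1]$, the same variance bound via $\var\leq \E$ and $K_n((1+x)T_n)\leq (1+x)C_n$ by concavity, and the same application of Kolmogorov's inequality with the $\pm 1$ bookkeeping producing the asymmetric extra hypothesis $\phi x C_n\geq 1$. The only cosmetic difference is that you obtain $C_n-K_n((1-x)T_n)\geq \phi x C_n$ by integrating $K_n'=\Lambda_n\mu_n$ over $[(1-x)T_n,T_n]$ directly, whereas the paper deduces it from concavity via $K_n(T_n)-K_n(T_n^-)\geq K_n(T_n^+)-K_n(T_n)$; both are valid since the $\mu_n$ lower bound is assumed on all of $[0,(1+x_0)T_n]$.
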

\begin{proof}
Let $T_n^+ = (1+x) T_n$ and $T_n^- = (1-x)T_n$. 
Note that
\[
K_n(T_n^+) - C_n = K_n(T_n^+) - K_n(T_n) = \int_{T_n}^{T_n^+} K'_n(T) dT,
\]
which, by \eqref{eq:mu} and \eqref{eq:mu-lower}, yields
\begin{equation}
\label{eq:K(T_n^+)-C}
K_n(T_n^+) - C_n \geq \int_{T_n}^{T_n^+}  \frac{\phi C_n}{T_n} dT =  \phi x C_n.
\end{equation}
Since $T_n = (T_n^+ + T_n^-)/2$, the concavity of $K_n$ yields
\begin{equation}\label{eq:C-K(T_n^-)}
C_n - K_n(T_n^-)  = K_n(T_n) - K_n(T_n^-) \geq  K_n(T_n^+) - K_n(T_n)  = K_n(T_n^+) - C_n \geq \phi x C_n
\end{equation}
by (\ref{eq:K(T_n^+)-C}).
Note that by \eqref{eq:Y}, \eqref{eq:Y_i-conditional-pmf} and \prettyref{eq:K-T}, we have
\[
\E^0_{n,i}[Y_n(T)] = \sum_{j=1}^n \E^0_{n,i}[Y_{n,j}(T)] = K_n(T) + G_{n,i}(T) - \hat G_{n,i}(T).
\]
Since $G_{n,i}$ and $\hat G_{n,i}$ are both cdfs, we obtain
\begin{equation}\label{eq:E^0_{n,i}(Y)}
K_n(T)-1 \leq \E^0_{n,i}[Y_n(T)] \leq K_n(T)+1.
\end{equation}
Using the definition of $\tau_n$ in \eqref{eq:tau}, we obtain
\begin{align*}
\P^0_{n,i}[\tau_n > T_n^+] &= \P_{n,i}^0[Y_n(T_n^+) \leq  C_n - 1]  = \P^0_{n,i}\left\{Y_n(T_n^+)- \E^0_{n,i}[Y_n(T_n^+)]  
\leq   C_n - 1 - \E^0_{n,i}[Y_n(T_n^+)]\right\}.
\end{align*}
By \eqref{eq:E^0_{n,i}(Y)} and \eqref{eq:K(T_n^+)-C},
\[
C_n - 1 - \E^0_{n,i}[Y_n(T_n^+)] \leq C_n - K_n(T_n^+) \leq -\phi x C_n.
\]
Thus
\begin{equation}\label{eq:tau-T-bound}
\P^0_{n,i}[\tau_n > T_n^+]    \leq \P^0_{n,i}\left[Y_n(T_n^+)- \E^0_{n,i}[Y_n(T_n^+)]  \leq   - \phi x C_n\right].
\end{equation}
Since the request processes $N_{n,1}, N_{n,2}\dots, N_{n,n}$ are independent,  so are the Bernoulli random variables $Y_{n,1}(t), Y_{n,2}(t), \ldots, Y_{n,n}(t)$ under $\P_{n,i}^0$. Thus
\begin{align}
\var_{n,i}^0[Y_n(T_n^+)] &= \sum_{j=1}^n \var_{n,i}^0[Y_{n,j}(T_n^+)]\leq \sum_{j=1}^n \E_{n,i}^0 [Y_{n,j}(T_n^+)] = \E_{n,i}^0 [Y_{n}(T_n^+)] \nonumber\\
&\leq K_n(T_n^+) + 1 \quad\hbox{by }(\ref{eq:E^0_{n,i}(Y)})\nonumber\\
& \leq (1+x)C_n + 1,
\end{align}
where last step follows from the following consequence of the concavity of $K_n$
\[
\frac{x}{1+x} K_n(0) + \frac{1}{1+x} K_n(T_n^+) \leq K_n\left(\frac{T_n^+}{1+x}\right) = K_n(T_n) = C_n
\]
and the fact $K_n(0) = 0$.\\

Note that $\left|Y_{n,i}(T)-\E_{n,i}^0[Y_{n,i}]\right| \leq 1$.  
By applying Kolmogorov's inequality \eqref{eq:Kolmogorov} with $b=1$ and $s^2_n\leq (1+x)C_n+1$ to the r.h.s.~of \eqref{eq:tau-T-bound}, we obtain
\[
\P^0_{n,i}[\tau_n > T_n^+] \leq \exp\left\{-\frac{(\phi xC_n)^2}{4(1+x)C_n + 4}\right\}.
\]
Similarly, if $\phi x C_n \geq 1$, we have
\begin{align*}
\P^0_{n,i}[\tau_n < T_n^-] &= \P^0_{n,i}[Y_n(T_n^-) \geq C_n]\\
& = \P^0_{n,i}\left[Y_n(T_n^-) - \E^0_{n,i}[Y_n(T_n^-)] \geq C_n - \E^0_{n,i}[Y_n(T_n^-)]\right]\\
&\leq \P^0_{n,i}\left[Y_n(T_n^-) - \E^0_{n,i}[Y_n(T_n^-)] \geq  C_n - K_n(T_n^-)-1\right]\quad \hbox{by  \eqref{eq:E^0_{n,i}(Y)}}\\
&\leq \P^0_{n,i}\left[Y_n(T_n^-) - \E^0_{n,i}[Y_n(T_n^-)] \geq  \phi x C_n - 1\right]\quad \hbox{by  \eqref{eq:C-K(T_n^-)}}\\
&\leq \exp\left\{-\frac{(\phi xC_n-1)^2}{4C_n + 4}\right\}
\quad \hbox{by \eqref{eq:Kolmogorov}}.
\end{align*}
\end{proof}

\begin{lemma}
\label{lem:inq}
\[
\P^0_{n,i}[Y_{n,i}(\tau_n) = 1, \tau_n \leq T] \leq \P^0_{n,i}[Y_{n,i}(T) = 1, \tau_n \leq T],
\]
and
\[
\P^0_{n,i}[Y_{n,i}(\tau_n) = 1, \tau_n \geq T] \geq \P^0_{n,i}[Y_{n,i}(T) = 1, \tau_n \geq T].
\]
\end{lemma}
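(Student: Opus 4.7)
The plan is to observe that $Y_{n,i}(\cdot)$ is, by its very definition in \eqref{eq:Y}, a non-decreasing $\{0,1\}$-valued function of $t$: if content $i$ has been requested in $[-t,0)$, then it has also been requested in $[-s,0)$ for every $s \geq t$. This monotonicity reduces both inequalities to elementary inclusions of events, so no measure-theoretic or probabilistic machinery is needed beyond monotonicity of $\P^0_{n,i}$.

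For the first inequality, I would argue that on the event $\{\tau_n \leq T\}$ we have $Y_{n,i}(\tau_n) \leq Y_{n,i}(T)$ pointwise. Since $Y_{n,i}$ takes only the values $0$ and $1$, this forces $\{Y_{n,i}(\tau_n)=1\} \cap \{\tau_n \leq T\} \subseteq \{Y_{n,i}(T)=1\} \cap \{\tau_n \leq T\}$, and applying $\P^0_{n,i}$ to both sides yields the claim. For the second inequality, on the event $\{\tau_n \geq T\}$ monotonicity gives $Y_{n,i}(T) \leq Y_{n,i}(\tau_n)$, so $\{Y_{n,i}(T)=1\} \cap \{\tau_n \geq T\} \subseteq \{Y_{n,i}(\tau_n)=1\} \cap \{\tau_n \geq T\}$, and again monotonicity of probability finishes the argument.

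The only subtlety I see is being careful about the direction of the inclusion on each half: on $\{\tau_n \leq T\}$ the larger window is $[-T,0)$ (so $Y_{n,i}(T)$ is the larger indicator), whereas on $\{\tau_n \geq T\}$ the larger window is $[-\tau_n, 0)$ (so $Y_{n,i}(\tau_n)$ is the larger indicator). Since $\tau_n$ is almost surely finite under $\P^0_{n,i}$ (it is the time needed to observe $C_n$ distinct contents backward in time, and the superposed process is ergodic with finite intensity), there is no issue evaluating $Y_{n,i}(\tau_n)$. Once these two inclusions are written down, the lemma follows immediately; there is no meaningful obstacle to overcome.
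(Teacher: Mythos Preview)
Your argument is correct and is exactly the sample-path argument the paper gives: the monotonicity of $t\mapsto Y_{n,i}(t)$ yields the two event inclusions, and the inequalities follow. The paper's proof is a one-line version of what you wrote.
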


\begin{proof}
Since $Y_{n,i}(t)$ is increasing in $t$, the inequalities follow from a sample path argument.
\end{proof}

Now we prove \prettyref{prop:convergence}.
 
\begin{proof}[Proof of \prettyref{prop:convergence}]
Fix an arbitrary $\epsilon > 0$. We show that for large enough $n$,
\begin{equation}\label{eq:H-epsilon}
\left|H_{n,i}^\LRU - H_{n,i}^\TTL(T_n)\right| \leq 2\epsilon.
\end{equation}
The proof consists of two steps. We first show that $H_{n,i}^\TTL(T_n)$ is within $\epsilon$ distance from both $H_{n,i}^\TTL(T^+_n)$ and $H_{n,i}^\TTL(T^-_n)$ for some $T^+_n$ and $T^-_n$ to be defined below. We then show that $H_{n,i}^\LRU$ is within $\epsilon$ distance from at least one of $H_{n,i}^\TTL(T^+_n)$ and $H_{n,i}^\TTL(T^-_n)$.

Let $x_0$ and $\phi$ be given by \prettyref{lem:mu}. 
Since the family $\cG_i$ is equicontinuous by \eqref{ass:R1}, there exists $\xi_i(\epsilon)>0$ such that $|t_1 - t_2| \leq \xi_i(\epsilon)$ implies $|G^\ast_{n,i}(t_1) -  G^\ast_{n,i}(t_2)|\leq \epsilon$. Since $C_n\to\infty$ as $n\to\infty$, let $n$ be sufficiently large so that 
\[
C_n \geq \max\left\{\frac{1}{\phi x_0}, \frac{1+\epsilon \xi_i(\epsilon)}{\phi \epsilon \xi_i(\epsilon)}\right\},
\]
which guarantees the existence of an $x$ satisfying the following,
\begin{equation}\label{eq:x-bound}
\frac{1}{\phi C_n}\leq x\leq \min\left\{x_0, \frac{\epsilon  \xi_i(\epsilon)}{1+\epsilon \xi_i(\epsilon)}\right\}.
\end{equation}
Fix such an $x$. Let $T_n^+ = (1+x) T_n$, $T_n^-  = (1-x) T_n$.

We first show 
\begin{equation}\label{eq:H_T_n-lower}
H^\TTL_{n,i}(T_n) - H^\TTL_{n,i}(T_n^-)\leq \epsilon,
\end{equation}
and
\begin{equation}\label{eq:H_T_n-upper}
H^\TTL_{n,i}(T_n^+) - H^\TTL_{n,i}(T_n)\leq \epsilon.
\end{equation}
We only shown \eqref{eq:H_T_n-lower}, as \eqref{eq:H_T_n-upper} follows from the same argument. By \prettyref{lem:Y}, \eqref{eq:H_T_n-lower} is the same as $G_{n,i}(T_n) - G_{n,i}(T_n^-)\leq \epsilon$.
Note that (this result holds regardless of the values of $\epsilon$, $\xi_i(\epsilon)$ and $\lambda_{n,i}T_n$)
\[
\max\left\{1-\frac{1}{\epsilon \lambda_{n,i} T_n}, \frac{\xi_i(\epsilon)}{\lambda_{n,i} T_n}\right\}\geq \frac{\epsilon \xi_i(\epsilon)}{1+\epsilon \xi_i(\epsilon)}.
\]
Since $x$ satisfies \eqref{eq:x-bound}, there are two cases: either $x\leq \xi_i(\epsilon)/ (\lambda_{n,i} T_n)$ or $x\leq 1- (\epsilon\lambda_{n,i} T_n)^{-1}$. 
In the first case, $|\lambda_{n,i}T_n -  \lambda_{n,i}T_n^-| = x \lambda_{n,i}T_n \leq \xi_i(\epsilon)$. Since $G_{n,i}(t) = G^\ast_{n,i}(\lambda_{n,i} t)$, using the definition of $\xi_i(\epsilon)$, we obtain $G_{n,i}(T_n) - G_{n,i}(T_n^-)\leq \epsilon$.
In the second case, note that 
\[
G_{n,i}(T_n) - G_{n,i}(T_n^-)\leq 1-G_{n,i}(T_n^-)=   \bar G_{n,i}(T_n^-),
\]
and
\[
1/\lambda_{n,i} = \int_0^\infty \bar G_{n,i}(y)dy \geq \int_0^{T^-_n} \bar G_{n,i}(y)dy \geq T^-_n \bar G_{n,i}(T^-_n).
\]
Thus
\[
G_{n,i}(T_n) - G_{n,i}(T_n^-) \leq \bar G_{n,i}(T_n^-) \leq \frac{1}{\lambda_{n,i} T_n^-}\leq \epsilon,
\]
where the last inequality follows from the definition $T^-_n=(1-x)T_n$ and the condition $x\leq 1- (\epsilon\lambda_{n,i} T_n)^{-1}$. This proves \eqref{eq:H_T_n-lower}.

 Next we show \eqref{eq:H-epsilon}.
 By \prettyref{lemma:tau-concentration}, for sufficiently large $C_n$,
 \begin{equation}\label{eq:T-tail}
\left. \begin{aligned}
 \P^0_{n,i}[\tau_n > T_n^+]\\
 \P^0_{n,i}[\tau_n < T_n^-]
 \end{aligned} \right\} \leq \exp\left\{-\frac{(\phi xC_n-1)^2}{4(1+x)C_n + 4}\right\} \leq \epsilon.
\end{equation}

 Note that
\begin{align*}
H^\LRU_{n,i} &=\P^0_{n,i}[Y_{n,i}(\tau_n) = 1] \\
&\geq \P^0_{n,i}[Y_{n,i}(\tau_n) = 1, \tau_n \geq T_n^-] \\
&\geq \P^0_{n,i}[Y_{n,i}(T_n^-) = 1, \tau_n \geq T_n^-]\quad  \hbox{by Lemma \ref{lem:inq}} \\
&\geq \P^0_{n,i}[Y_{n,i}(T_n^-) = 1] - \P^0_{n,i}[\tau_n < T_n^-]\\
& = H^\TTL_{n,i}(T_n^-) - \P^0_{n,i}[\tau_n < T_n^-],
\end{align*}
which, by \eqref{eq:H_T_n-lower} and \eqref{eq:T-tail}, yields
\begin{align*}
 H_{n,i}^\TTL(T_n) - H_{n,i}^\LRU  \leq H^\TTL_{n,i}(T_n) - H_{n,i}^\TTL(T_n^-)  + \P^0_{n,i}[\tau_n < T_n^-] \leq 2\epsilon.
\end{align*}
Note that similar bounds have been used for the shot noise model in \cite{leonardi2017modeling}.

For the other direction, note that
\begin{align*}
H^\LRU_{n,i} &\leq \P^0_{n,i}[Y_{n,i}(\tau_n) = 1, \tau_n \leq T_n^+] + \P^0_{n,i}[\tau_n > T_n^+]\\
&\leq \P^0_{n,i}[Y_{n,i}(T_n^+)= 1, \tau_n \leq T_n^+] + \P^0_{n,i}[\tau_n > T_n^+] \quad \hbox{by \prettyref{lem:inq}}\\
&\leq \P^0_{n,i}[Y_{n,i}(T_n^+) = 1] + \P^0_{n,i}[\tau_n > T_n^+]\\
&=H^\TTL_{n,i}(T_n^+) + \P^0_{n,i}[\tau_n > T_n^+],
\end{align*}
which, by \eqref{eq:H_T_n-upper} and \eqref{eq:T-tail}, yields
\begin{align*}
H_{n,i}^\LRU - H_{n,i}^\TTL(T_n)  \leq H^\TTL_{n,i}(T_n^+) - H_{n,i}^\TTL(T_n)  + \P^0_{n,i}[\tau_n < T_n^+] \leq 2\epsilon.
\end{align*}
Therefore, \eqref{eq:H-epsilon} holds, which proves \prettyref{eq:convergence}.
 
Finally, \eqref{eq:uniform-convergence} follows from the same argument with $\xi_i(\epsilon)$ replaced by $\xi(\epsilon)$, whose existence is guaranteed by \eqref{ass:R2}, i.e. the equicontinuity of the family $\cG$. 
\end{proof}

\begin{remark}
\label{remark-prop.4-4}
In the above proof of \prettyref{prop:convergence}, the conditions \eqref{ass:C1} and \eqref{ass:P1} are used only to establish \eqref{eq:mu-lower} in Lemma \ref{lem:mu}.
Therefore,   \prettyref{prop:convergence} and Corollary \ref{corollary:mean-convergence} will still hold if \eqref{ass:C1} and \eqref{ass:P1}  are replaced by \eqref{eq:mu-lower} or other conditions that imply \eqref{eq:mu-lower}. 
\end{remark}

\begin{remark}
\label{remark:Fricker-et-al}
Note that \cite{fricker2012versatile} provides a more concise argument to justify the TTL approximation in the case of Poisson requests, but the argument does not constitute a rigorous proof of the asymptotic exactness of the approximation for this case. This is so for the following two reasons. First,
Proposition 2 therein assumes the quantity $X(t)$ is precisely Gaussian without investigating the error in this Gaussian approximation. Second, the analysis after Proposition 2 replaces the erfc function by the step function without further investigating the error introduced. 
\end{remark}


\section{Rate of Convergence }\label{sec:rate}

In this section, we provide two bounds on the rate of convergence in the TTL approxmation under different sets of assumptions.

The following proposition provides a convergence rate of order $(\log C_n/C_n)^{1/4}$. It is stated for the uniform convergence of hit probabilities assuming \eqref{ass:R5}, the uniform Lipschitz continuity of $\cG$. The obvious modification gives the convergence rate for content $i$ assuming uniform Lipschitz continuity of $\cG_{n,i}$. Examples of uniformly Lipschitz continuous cdfs include families of distributions that have densities with a common upper bound. 

\begin{proposition}\label{prop:rate}
Under assumptions \eqref{ass:C1}, \eqref{ass:R5} and \eqref{ass:P1}, the following holds,
\begin{equation}\label{eq:rate}
\max_{1\leq i\leq n}\left|H^\LRU_{n,i} - H^\TTL_{n,i}(T_n)\right|  = O\left(\left(\frac{\log C_n}{C_n}\right)^{\frac{1}{4}}\right).
\end{equation}
\end{proposition}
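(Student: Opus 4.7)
The plan is to follow the quantitative template of the proof of \prettyref{prop:convergence}, but to track all error terms explicitly rather than qualitatively. For a parameter $x\in(0,1/2)$ to be optimized at the end, set $T_n^\pm=(1\pm x)T_n$. The sandwich inequalities from \prettyref{lem:inq}, combined with $H_{n,i}^{\TTL}(T)=G_{n,i}(T)$ from \prettyref{lem:Y}, give the clean estimate
\[
\bigl|H_{n,i}^{\LRU}-H_{n,i}^{\TTL}(T_n)\bigr|\;\le\;\bigl(G_{n,i}(T_n^+)-G_{n,i}(T_n^-)\bigr)\;+\;\P_{n,i}^0[\tau_n\notin[T_n^-,T_n^+]],
\]
so it suffices to bound, uniformly in $i$, the oscillation of $G_{n,i}$ over the window and the concentration tail of $\tau_n$.

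For the concentration, I would invoke \prettyref{lemma:tau-concentration}, whose hypothesis holds under \eqref{ass:C1} and \eqref{ass:P1} via \prettyref{lem:mu}. Provided $x\le\min(1,x_0)$ and $\phi xC_n\ge 1$, both tails are at most $\exp\{-(\phi xC_n-1)^2/(8C_n+4)\}$, which is of order $\exp(-cx^2C_n)$ for an absolute constant $c>0$ and $x$ bounded.

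The heart of the argument is the oscillation bound, and this is where \eqref{ass:R5} enters. Write $a_i:=\lambda_{n,i}T_n$ and let $M$ be the uniform Lipschitz constant of the family $\cG$. Since $G_{n,i}(T)=G^\ast_{n,i}(\lambda_{n,i}T)$, the Lipschitz bound gives
\[
G_{n,i}(T_n^+)-G_{n,i}(T_n^-)\;\le\;2Mx\,a_i.
\]
On the other hand, by monotonicity and \eqref{eq:age},
\[
G_{n,i}(T_n^+)-G_{n,i}(T_n^-)\;\le\;\bar G_{n,i}(T_n^-)\;\le\;\frac{1}{\lambda_{n,i}T_n^-}\;\le\;\frac{2}{a_i}.
\]
The first bound is sharp when $a_i$ is small, the second when $a_i$ is large. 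Taking the minimum of the two and noting that $\min(2Mxa,2/a)$ is maximized over $a>0$ at $a=1/\sqrt{Mx}$ (with value $2\sqrt{Mx}$), I obtain
\[
\max_{1\le i\le n}\bigl(G_{n,i}(T_n^+)-G_{n,i}(T_n^-)\bigr)\;\le\;2\sqrt{Mx},
\]
uniformly in $i$.

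Combining the two contributions, the total error is bounded by $2\sqrt{Mx}+2\exp(-cx^2 C_n)$. Choosing $x=\sqrt{K\log C_n/C_n}$ for a constant $K>1/c$ makes the exponential term $O(C_n^{-cK})$, which is dominated by $\sqrt{x}=(K\log C_n/C_n)^{1/4}$, producing the advertised rate. The hypotheses $x\le\min(1,x_0)$ and $\phi xC_n\ge 1$ are clearly satisfied for $n$ large since $x\to 0$ while $xC_n\to\infty$.

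The main conceptual obstacle---or rather the key observation---is the two-regime interpolation in the oscillation bound. A naive application of \eqref{ass:R5} gives only $2Mx a_i$, which cannot be controlled uniformly in $i$ because $a_i=\lambda_{n,i}T_n$ has no \emph{a priori} upper bound across contents of very different intensities. Trading the Lipschitz estimate for the tail estimate $\bar G_{n,i}(T_n^-)\le 1/(\lambda_{n,i}T_n^-)$ in the large-$a_i$ regime is what replaces the non-uniform bound $O(xa_i)$ by the uniform bound $O(\sqrt{x})$, and this $1/2$ loss in the exponent is precisely responsible for the $1/4$ exponent in the final rate. In \prettyref{prop:rate-faster} the stronger hypothesis \eqref{ass:R6} permits a relative-deviation Lipschitz bound $|G_{n,i}(T_n^+)-G_{n,i}(T_n^-)|\le Bx$ that is uniform in $i$ without any interpolation, which is what allows the improved exponent there.
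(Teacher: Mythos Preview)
Your proof is correct and follows essentially the same approach as the paper: the paper obtains the bound by specializing the proof of \prettyref{prop:convergence} with $\xi(\epsilon)=\epsilon/M$, which encodes exactly your two-regime interpolation (Lipschitz for small $a_i=\lambda_{n,i}T_n$, tail bound $\bar G_{n,i}(T_n^-)\le 1/(\lambda_{n,i}T_n^-)$ for large $a_i$), arriving at the same oscillation bound $\epsilon\sim\sqrt{Mx}$ and the same choice $x\asymp\sqrt{\log C_n/C_n}$. Your presentation is more explicit about the minimization over $a_i$, but the underlying argument and the resulting constants are the same.
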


\begin{proof}
Let $M$ be the Lipschitz constant in \eqref{ass:R5}. By setting $\xi(\epsilon) = \epsilon/M$ in the proof of \prettyref{prop:convergence}, we obtain the following,
\[
\max_{1\leq i\leq n}\left|H^\LRU_{n,i} - H^\TTL_{n,i}(T_n)\right| \leq \epsilon + \exp\left\{-\frac{(\phi xC_n-1)^2}{4(1+x)C_n + 4}\right\},
\]
for $\frac{1}{\phi C_n}\leq x\leq \frac{\epsilon^2}{M+\epsilon^2}$. For fixed $x$, the smallest $\epsilon$ is $\epsilon = \sqrt{\frac{xM}{1-x}}$. Thus
\[
\max_{1\leq i\leq n}\left|H^\LRU_{n,i} - H^\TTL_{n,i}(T_n)\right| \leq \sqrt{\frac{xM}{1-x}} + \exp\left\{-\frac{(\phi xC_n-1)^2}{4(1+x)C_n + 4}\right\}.
\]
Let $x = \frac{1}{\phi}\sqrt{\frac{\log C_n}{ C_n}}$, which satisfies $\frac{1}{\phi C_n}\leq x\leq x_0$ when $C_n$ is large enough. Then the first term on the r.h.s.~of the above inequality is asymptotically equal to
\[
\sqrt{\frac{M}{\phi}}\left(\frac{\log C_n}{C_n}\right)^{\frac{1}{4}} = \Theta\left(\left(\frac{\log C_n}{C_n}\right)^{\frac{1}{4}}\right),
\]
while the second term is asymptotically equal to
\[
\exp\left\{-\frac{1}{4} \log C_n + o(1)\right\}\sim C_n^{-1/4}.
\]
It immediately follows that \eqref{eq:rate} holds.
\end{proof}

The next proposition provides a faster rate of convergence  under a different  condition, \eqref{ass:R6}, which says the change in the value of a cdf is bounded by a constant multiple of the relative change in its argument. In fact, we only need \eqref{eq:Lipschitz} to hold with $t=T_n$. Numerical results (see e.g \cite{fricker2012versatile}) show that the approximation may converge faster in practice than suggested by \eqref{eq:rate-quadratic}.

\begin{proposition}\label{prop:rate-faster}
Under assumptions \eqref{ass:C1}, \eqref{ass:R6} and \eqref{ass:P1}, the following holds,
\begin{equation}\label{eq:rate-quadratic}
\max_{1\leq i\leq n}\left|H^\LRU_{n,i} - H^\TTL_{n,i}(T_n)\right|  = O\left(\sqrt{\frac{\log C_n}{C_n}} \right).
\end{equation}
\end{proposition}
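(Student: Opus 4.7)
The plan is to repeat the structure of the proof of \prettyref{prop:convergence}, but to exploit the stronger continuity condition \eqref{ass:R6} to dispense entirely with the two-case analysis built around $\xi_i(\epsilon)$. The key observation is that \eqref{ass:R6} lets us bound the TTL hit probability variation on a neighborhood $[T_n^-, T_n^+] = [(1-x)T_n,(1+x)T_n]$ of $T_n$ by a quantity linear in $x$ that is uniform in $i$ and independent of $T_n$ and $\lambda_{n,i}$. Specifically, writing $G_{n,i}(t) = G_{n,i}^\ast(\lambda_{n,i}t)$, so that by \prettyref{lem:Y} and \eqref{eq:G_ast},
\[
\bigl|H_{n,i}^\TTL(T_n^\pm) - H_{n,i}^\TTL(T_n)\bigr| = \bigl|G_{n,i}^\ast(\lambda_{n,i}T_n) - G_{n,i}^\ast(\lambda_{n,i}T_n(1\pm x))\bigr| \leq B x
\]
for all $x \in [0,\rho]$, directly from \eqref{eq:Lipschitz} applied at $t = \lambda_{n,i}T_n$. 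This replaces the estimates \eqref{eq:H_T_n-lower}--\eqref{eq:H_T_n-upper} from the proof of \prettyref{prop:convergence} with a single bound that is linear in $x$ and is uniform in $i$, which is the crucial improvement over the $\sqrt{x}$ bound extracted from \eqref{ass:R5} in \prettyref{prop:rate}.

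Next, I would rerun the monotonicity and tail arguments from the proof of \prettyref{prop:convergence} verbatim: via \prettyref{lem:inq} and \prettyref{lemma:tau-concentration} (whose hypothesis \eqref{eq:mu-lower} is supplied by \prettyref{lem:mu} under \eqref{ass:C1} and \eqref{ass:P1}), I obtain, for any $x\in[\frac{1}{\phi C_n},\min\{x_0,\rho\}]$ and all $n$ large,
\[
\max_{1\leq i\leq n}\bigl|H^\LRU_{n,i} - H^\TTL_{n,i}(T_n)\bigr| \;\leq\; B x \;+\; \exp\!\left\{-\frac{(\phi xC_n-1)^2}{4(1+x)C_n + 4}\right\}.
\]
Here the first term comes from the new displayed bound above applied at $T_n^+$ and $T_n^-$, and the second term is the uniform-in-$i$ tail bound from \prettyref{lemma:tau-concentration}.

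Finally, I would optimize the choice of $x$. Setting
\[
x \;=\; \frac{c}{\phi}\sqrt{\frac{\log C_n}{C_n}}
\]
for any constant $c \geq \sqrt{2}$ makes the exponent in the second term at most $-\tfrac{c^2}{4}\log C_n + o(\log C_n)$, so the second term is $O(C_n^{-c^2/4}) = O(C_n^{-1/2})$, while the first term equals $(Bc/\phi)\sqrt{\log C_n/C_n}$. Both terms are $O(\sqrt{\log C_n/C_n})$, which establishes \eqref{eq:rate-quadratic}. For large enough $n$ this choice of $x$ satisfies $\frac{1}{\phi C_n}\leq x \leq \min\{x_0,\rho\}$, as required.

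The main work is not really an obstacle but a verification: checking that \eqref{ass:R6} yields the $Bx$ bound on the variation of $G_{n,i}$ over $[T_n^-,T_n^+]$ uniformly in $i$ without losing a factor of $T_n$ or $\lambda_{n,i}$. Once that linear-in-$x$ bound is in hand, the optimization over $x$ balances $x$ against $\exp(-\Theta(x^2 C_n))$, which naturally produces the $\sqrt{\log C_n/C_n}$ rate, in contrast to the $\sqrt{x}$ first term produced by \eqref{ass:R5} in \prettyref{prop:rate}, which forces $x \asymp \sqrt{\log C_n/C_n}$ and thus the slower $(\log C_n/C_n)^{1/4}$ rate.
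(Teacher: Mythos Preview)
Your proposal is correct and follows essentially the same approach as the paper: use \eqref{ass:R6} to replace the $\epsilon$-bounds \eqref{eq:H_T_n-lower}--\eqref{eq:H_T_n-upper} by the uniform linear bound $Bx$, combine with the concentration estimate from \prettyref{lemma:tau-concentration}, and then choose $x$ proportional to $\sqrt{\log C_n/C_n}$. The paper phrases the first step as ``\eqref{eq:Lipschitz} is scale-invariant, hence holds for each $G_{n,i}$,'' while you equivalently apply \eqref{eq:Lipschitz} to $G_{n,i}^\ast$ at $t=\lambda_{n,i}T_n$; the paper takes the specific constant $c=\sqrt{2}$ in your notation.
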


\begin{proof}
Note that the inequality in \eqref{eq:Lipschitz} is invariant under scaling of $t$, so \eqref{ass:R6} implies that \eqref{eq:Lipschitz} holds for $G_{n,i}, \forall n,i$. Replacing the bounds $G_{n,i}(T_n) - G_{n,i}(T_n^+)\leq \epsilon$ and $G_{n,i}(T_n) - G_{n,i}(T_n^-)\leq \epsilon$ by \eqref{eq:Lipschitz} in the proof of \prettyref{prop:convergence}, we obtain the following,
\[
\max_{1\leq i\leq n}\left|H^\LRU_{n,i} - H^\TTL_{n,i}(T_n)\right| \leq B x + \exp\left\{-\frac{(\phi xC_n-1)^2}{4(1+x)C_n + 4}\right\},
\]
for $\frac{1}{\phi C_n}\leq x\leq \min\{\rho,x_0\}$. Let $x = \frac{1}{\phi}\sqrt{\frac{2 \log C_n}{ C_n}}$, which falls in the interval $[(\phi C_n)^{-1}, \min\{\rho, x_0\}]$ when $C_n \geq \max\{2, (\min\{\rho, x_0\} \phi)^{-4}\}$. Then the second term on the r.h.s.~of the above inequality is asymptotically equal to
\[
\exp\left\{-\frac{1}{2} \log C_n + o(1)\right\}\sim C_n^{-1/2}.
\]
It immediately follows that \eqref{eq:rate-quadratic} holds.
\end{proof}

The following examples show that \eqref{ass:R6} holds for a large class of distributions. 

\begin{example}
\label{ex:Poisson}
For Poisson request processes, $\cG = \{\Psi\}$ with $\Psi(t) = 1-e^{-t}$. For any $x \geq 0$,
\begin{align*}
0\leq \Psi(t + x t) - \Psi(t) = e^{- t} (1-e^{- x t}) \leq x t e^{- t}  \leq e^{-1} x,
\end{align*}
where we have used inequalities $e^{-z}\geq 1-z$ and $ze^{-z} \leq e^{-1}$.
For $x \in [0,1]$,
\begin{align*}
0\leq  \Psi(t)-\Psi(t - x t)  \leq \sup_{z \geq 0} e^{-z}(e^{x z} - 1) = (1-x)^{\frac{1}{x}-1} x \leq x.
\end{align*} 
Thus \eqref{ass:R6} holds with $B = 1$ and $\rho = 1$.
\end{example}

\begin{example} \label{ex:general}
Suppose every $G\in \cG$ has continuous density on $(0,\infty)$. By the Mean Value Theorem, there exists $\xi_G \in [1,1+x]$ such that
\begin{align*}
0\leq  G(t + x t) - G(t) = G'(\xi_G t) x t \leq \xi_G t G'(\xi_G t) x    \leq \left[\sup_{t > 0} t  G'(t)\right] x \leq B_0 x,
\end{align*}
where
\[
B_0= \sup_{G\in\cG} \sup_{t > 0} t G'(t).
\]
Similarly, there exists $\zeta_G \in [1-x, 1]$ such that
\begin{align*}
0 \leq  G(t) - G(t - x t) = G'(\zeta_G t) x t \leq \frac{\zeta_G}{1-x}  t  G'(\zeta_G  t) x  \leq \frac{x}{1-x}\left[\sup_{t > 0} t G'(t)\right] \leq \frac{B_0}{1-x} x.
\end{align*}
If $B_0<\infty$, then \eqref{ass:R6} holds with any $\rho \in (0,1)$ and $B = \frac{B_0}{1-\rho}$. When is $B_0<\infty$ then?
Since $G$ has finite mean, $\sup_{t > 0} t G'(t)<\infty$. 
If $\cG$ is finite, i.e.~the $G_{n,i}$'s are from a finite number of scale families, then $B_0<\infty$ after taking the supremum over a finite set. 
In particular, for Poisson request processes, $\cG = \{\Psi\}$ with $\Psi(t) = 1-e^{-t}$, so
\[
B_0 = \sup_{t> 0} t \Psi'(t) = \sup_{t> 0} t e^{-t} = e^{-1} < \infty.
\]
Thus  \eqref{ass:R6} holds with any $\rho\in (0,1)$ and $B = e^{-1}(1-\rho)^{-1}$, which is weaker than what we have obtained in \prettyref{ex:Poisson}. 

However, when $\cG$ is infinite, i.e., the $G_{n,i}$'s are not from a finite number of scale families,  $B_0$ may still diverge to infinity when we take the supremum over $G\in \cG$. An example where we still have finite $B_0$ is provided by an infinite collection of gamma distributions with shape parameters upper bounded by some $\alpha_{\max}<\infty$. Recall that a gamma distribution $G_\alpha$ with unit mean and shape parameter $\alpha>0$ has the following density,
\[
G'_{\alpha}(t) = \frac{\alpha^\alpha}{\Gamma(\alpha)} t^{\alpha - 1} e^{-\alpha t}, \quad t>0.
\]
Hence
\[
\sup_{t > 0} t G_{\alpha}'(t) =  \frac{\alpha^\alpha}{\Gamma(\alpha)} \sup_{t> 0} t^{\alpha} e^{-\alpha t} = \frac{\alpha^\alpha}{\Gamma(\alpha)} \left(\sup_{t > 0} t e^{-t}\right)^\alpha = \frac{\alpha^{\alpha} e^{-\alpha}}{\Gamma(\alpha)},
\]
and
\[
B_0 =   \sup_{\alpha:G_\alpha\in \cG} \frac{\alpha^{\alpha} e^{-\alpha}}{\Gamma(\alpha)} \leq  \sup_{0< \alpha \leq \alpha_{\max}} \frac{\alpha^{\alpha} e^{-\alpha}}{\Gamma(\alpha)}.
\]
Since the function $\alpha^{\alpha} e^{-\alpha}/\Gamma(\alpha)$ is continuous and has limit $0$ as $\alpha\to 0$, we obtain $B_0<\infty$. Note that  as $\alpha\to\infty$, 
\[
\frac{\alpha^{\alpha} e^{-\alpha}}{\Gamma(\alpha)} \sim \sqrt{2\pi \alpha}\to\infty,
\]
so the boundedness of $\alpha$ is essential.
\end{example}

\begin{corollary}
Assume $C_n \leq \beta_1 n$ for some $\beta_1\in (0,1)$ and the popularity distribution is Zipf's law in \eqref{eq:Zipf}. Then \eqref{eq:rate-quadratic} holds if $m_\Psi=1$ and $\Psi$ has a continuous density. In particular,  \eqref{eq:rate-quadratic} holds if all request processes are Poisson.
\end{corollary}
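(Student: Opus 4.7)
The plan is to verify that the three hypotheses \eqref{ass:C1}, \eqref{ass:R6}, \eqref{ass:P1} of \prettyref{prop:rate-faster} all hold under the stated assumptions, and then simply invoke that proposition.

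First I would handle \eqref{ass:C1}. The assumption $m_\Psi = 1$ forces $\cG = \{\Psi\}$ (as noted in the discussion after \eqref{eq:assumption-Psi}, since $m_\Psi = 1$ is the degenerate case). In particular $m_\Psi = 1$, so any $\beta_1 \in (0,1)$ satisfies $\beta_1 \in (0, m_\Psi)$, and the assumption $C_n \leq \beta_1 n$ then gives \eqref{ass:C1}.

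Next I would verify \eqref{ass:R6}. Since $\cG = \{\Psi\}$ is a singleton and $\Psi$ has a continuous density by hypothesis, I would apply the calculation in \prettyref{ex:general} verbatim: with $B_0 = \sup_{t>0} t\Psi'(t)$, which is finite because $\Psi$ has finite mean and a continuous density (so $t\Psi'(t)$ is a continuous function tending to $0$ at both $0$ and $\infty$, hence bounded), one obtains \eqref{ass:R6} with $B = B_0/(1-\rho)$ for any $\rho \in (0,1)$.

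Then I would verify \eqref{ass:P1} for the Zipf popularity distribution. This has already been done in \prettyref{ex:heavy-tail-Zipf} for all $\alpha \geq 0$, where it was shown that one can choose $\kappa_1 \in (1/m_\Psi, 1/\beta_1)$ and an appropriate $\kappa_2 \in [0,1]$ and $\gamma \in (0,1)$ so that \eqref{eq:heavy-tail} holds. With \eqref{ass:C1}, \eqref{ass:R6} and \eqref{ass:P1} all in hand, \prettyref{prop:rate-faster} immediately yields \eqref{eq:rate-quadratic}.

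For the Poisson specialization, I would simply note that $\Psi(t) = 1 - e^{-t}$ has $m_\Psi = 1$ and the continuous density $\Psi'(t) = e^{-t}$, so the hypotheses of the corollary are satisfied. There is no real obstacle here: the proof is a short verification that strings together \prettyref{ex:heavy-tail-Zipf}, \prettyref{ex:general}, and \prettyref{prop:rate-faster}; the only mild point is observing that $m_\Psi = 1$ forces $\cG = \{\Psi\}$ so that \prettyref{ex:general} applies trivially with a finite (indeed singleton) family.
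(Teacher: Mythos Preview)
Your proposal is correct and follows essentially the same approach as the paper: verify \eqref{ass:C1} (using $m_\Psi=1$ so that any $\beta_1\in(0,1)$ works), verify \eqref{ass:P1} via \prettyref{ex:heavy-tail-Zipf}, verify \eqref{ass:R6} via \prettyref{ex:general} (using that $m_\Psi=1$ forces $\cG=\{\Psi\}$, a finite family with continuous density), and then invoke \prettyref{prop:rate-faster}. Your write-up is in fact slightly more explicit than the paper's about why $m_\Psi=1$ makes \eqref{ass:C1} automatic and why \prettyref{ex:general} applies, but the logical skeleton is identical.
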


\begin{proof}
We check the assumptions of \prettyref{prop:rate-faster}. Condition \eqref{ass:C1} is assumed.
Condition \eqref{ass:P1} holds for Zipfian popularity by \prettyref{ex:heavy-tail-Zipf}. By \prettyref{ex:general}, condition \eqref{ass:R6} holds when $m_\Psi=1$ and $\Psi$ has a continuous density.
\end{proof}


\section{Extension of Fagin's Result}\label{sec:Fagin}

In this section, we derive 
expressions for the characteristic time and the aggregate hit probability in the limit as the cache size and the number of contents go to infinity. This extends the results of Fagin \cite{fagin1977asymptotic} for the independence reference model to the more general setting of independent stationary and ergodic content request processes. 

We first consider the case where $m_\Psi = 1$ and $p_{n,i} \sim g_nf(z_{n,i})$ uniformly for some continuous function $f$ defined on $(0,1]$ and $z_{n,i} \in [\frac{i-1}{n},\frac{i}{n}]$, i.e. \eqref{ass:R4} and \eqref{ass:P2} hold. Recall that $m_\Psi = 1$ implies the cdfs $G_{n,i}$ are all from the same scale family, i.e.~$G_{n,i}(t) = \Psi(\lambda_{n,i} t)$ for all $n$ and $i$. 

The following proposition gives the asymptotic expression of $T_n$, which will be used in the proof of \prettyref{prop:H-limit} and is also of independent interest.
Note that \eqref{eq:T_n-asympt} is a generalization of Eq.~(2.2) of \cite{fagin1977asymptotic} and Eq.~(7) of \cite{fricker2012versatile}. We have imposed the inessential condition $f>0$ a.e.~on $[0,1]$, which simplifies the statements and can be easily removed. The proof is found in \prettyref{subsec:proof-T_n-asympt}. 

\begin{proposition}\label{prop:T_n-asympt}
Under assumptions \eqref{ass:C2}, \eqref{ass:R4} and \eqref{ass:P2}, the following holds
\begin{equation}\label{eq:T_n-asympt}
T_n \sim \frac{\nu_0}{g_n\Lambda_n},
\end{equation}
where $\nu_0$ the unique real number in $(0,\infty)$ that satisfies
\begin{equation}\label{eq:value-nu0-prop:T_n-asympt}
\int_0^1 \hat \Psi(\nu_0 f(x))dx = \beta_0.
\end{equation}
\end{proposition}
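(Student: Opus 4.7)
My plan is to rescale $T_n$ so that the characteristic-time equation \eqref{eq:CT} becomes a Riemann-sum approximation of the limiting functional $H(\nu) := \int_0^1 \hat\Psi(\nu f(x))\,dx$. Under \eqref{ass:R4}, $m_\Psi = 1$ and $G_{n,i}(t) = \Psi(\lambda_{n,i}t)$, hence $\hat G_{n,i}(t) = \hat\Psi(\lambda_{n,i}t)$, and \eqref{eq:CT} becomes $C_n = \sum_{i=1}^n \hat\Psi(\lambda_{n,i}T_n)$. I will introduce the rescaled unknown $\nu_n := g_n \Lambda_n T_n$; since $\lambda_{n,i} = \Lambda_n p_{n,i}$, assumption \eqref{ass:P2} yields $\lambda_{n,i}T_n = \nu_n f(z_{n,i})(1+\eta_{n,i})$ with $\eta_n^{\ast} := \max_i |\eta_{n,i}| \to 0$. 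Together with \eqref{ass:C2}, the characteristic-time equation becomes
\[
\beta_0 + o(1) \;=\; \frac{C_n}{n} \;=\; \frac{1}{n}\sum_{i=1}^n \hat\Psi\!\bigl(\nu_n f(z_{n,i})(1+\eta_{n,i})\bigr),
\]
and the proposition reduces to showing $\nu_n \to \nu_0$.

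First I would verify that $\nu_0$ is well-defined. The function $\hat\Psi$ is continuous with $\hat\Psi(0)=0$, $\hat\Psi(\infty)=1$, and strictly increasing on the support of $\bar\Psi$; combined with $f>0$ a.e.\ on $(0,1]$, dominated/monotone convergence show that $H$ is continuous and strictly increasing on $[0,\infty)$ with $H(0)=0$ and $H(\infty)=1$. The intermediate value theorem then produces a unique $\nu_0\in(0,\infty)$ with $H(\nu_0)=\beta_0$.

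The technical heart of the proof will be a Riemann-sum lemma: for each fixed $\nu \geq 0$,
\[
R_n(\nu) := \frac{1}{n}\sum_{i=1}^n \hat\Psi(\nu f(z_{n,i})) \;\longrightarrow\; H(\nu).
\]
This holds because $x \mapsto \hat\Psi(\nu f(x))$ is bounded by $1$ on $[0,1]$ and continuous on $(0,1]$ with a well-defined limit at $0$ (namely $0$, $1$, or $\hat\Psi(\nu \lim_{0+} f)$), hence Riemann integrable, and the tagged partition $z_{n,i}\in[(i-1)/n,i/n]$ has mesh $1/n\to 0$. Since both $R_n(\cdot)$ and $H(\cdot)$ are nondecreasing with $H$ continuous, a Polya-type argument upgrades pointwise to uniform convergence on compact subsets of $[0,\infty)$.

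To conclude, I will sandwich the perturbed sum using monotonicity of $\hat\Psi$ and $1+\eta_{n,i}\in[1-\eta_n^{\ast},1+\eta_n^{\ast}]$:
\[
R_n\!\bigl(\nu_n(1-\eta_n^{\ast})\bigr) \;\leq\; \frac{C_n}{n} \;\leq\; R_n\!\bigl(\nu_n(1+\eta_n^{\ast})\bigr).
\]
If $\nu_n\to\infty$ along a subsequence, then for any fixed $K>0$ we have $\nu_n(1-\eta_n^{\ast})\geq K$ eventually, so $R_n(\nu_n(1-\eta_n^{\ast}))\geq R_n(K)\to H(K)$, and $H(K)\uparrow 1>\beta_0$ contradicts $C_n/n\to\beta_0$; an analogous bound excludes $\nu_n\to 0$. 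Hence $\{\nu_n\}$ is precompact in $(0,\infty)$, and along any subsequential limit $\nu_{n_k}\to\nu$, uniform convergence of $R_n$ on compact sets together with $\eta_n^{\ast}\to 0$ forces both sides of the sandwich to tend to $H(\nu)$, giving $H(\nu)=\beta_0$ and $\nu=\nu_0$ by uniqueness. This yields $\nu_n\to\nu_0$, i.e., \eqref{eq:T_n-asympt}. The main obstacle I anticipate is the Riemann-sum lemma under the hypothesis that $f$ may blow up at $0$ (e.g.\ $f(x)=x^{-\alpha}$), which requires some care in invoking Riemann integrability; the global boundedness of $\hat\Psi$, however, should circumvent the unboundedness of $f$.
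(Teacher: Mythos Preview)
Your proposal is correct and follows essentially the same approach as the paper: introduce the rescaled variable $\nu_n = g_n\Lambda_n T_n$, sandwich $C_n/n$ between Riemann sums via monotonicity of $\hat\Psi$ and \eqref{ass:P2}, pass to the integral $H(\nu)=\int_0^1\hat\Psi(\nu f(x))\,dx$, rule out divergence, and conclude by uniqueness of $\nu_0$. The only minor difference is that you invoke a Polya-type uniform-convergence step to handle the moving argument $\nu_n$, whereas the paper bypasses this by fixing $\epsilon$, passing to a subsequence realizing $\limsup$ (resp.\ $\liminf$), and letting $\epsilon\to 0$ afterward---a slightly more economical route, but the two are interchangeable.
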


\begin{example}\label{ex:Fagin-Zipf}
Consider Zipf's law in \eqref{eq:Zipf} with $\alpha\geq 0$. Then $p_{n,i} \sim g_n f(i/n)$ with $f(x) = x^{-\alpha}$ and 
\[
g_n = \begin{cases}
\frac{1-\alpha}{n}, & \text{if } \alpha<1;\\[2pt]
\frac{1}{n\log n}, & \text{if } \alpha=1;\\[3pt]
\frac{1}{\zeta(\alpha)n^{\alpha}}, & \text{if } \alpha>1.
\end{cases}
\]
It is easy to check that
 \[
\max_{1\leq i \leq n} \left|\frac{g_nf(i/n)}{ p_{n,i}} - 1\right |= \left|g_nn^{\alpha}\sum_{j=1}^n j^{-\alpha} - 1\right |\to 0
 \]
as $n\to \infty$, so \eqref{ass:P2} holds. If \eqref{ass:C2} and \eqref{ass:R4} also hold, then $T_n$ satisfies \eqref{eq:T_n-asympt}. In particular, if $\lambda_{n,i} = i^{-\alpha}$, then $g_n\Lambda_n \sim n^{-\alpha}$ and hence $T_n \sim \nu_0 n^{\alpha}$. For Poisson request processes, $\hat \Psi(t) = 1-e^{-t}$ and we recover Eq.~(7) of \cite{fricker2012versatile}.
\end{example}

The following proposition gives the limiting aggregate hit probability, which generalizes Eq.~(2.3) of \cite{fagin1977asymptotic}. The proof is found in \prettyref{subsec:proof-H-limit}.

\begin{proposition}\label{prop:H-limit}
Assume \eqref{ass:C2}, \eqref{ass:R4} and \eqref{ass:P2} with $g_n = n^{-1}$.  Then,
\begin{equation}
\label{eq:H-limit}
H_n^\LRU \to \int_0^1 f(x) \Psi(\nu_0 f(x)) dx,
\end{equation}
as $n\to \infty$, where $\nu_0$ satisfies \eqref{eq:value-nu0-prop:T_n-asympt}.
\end{proposition}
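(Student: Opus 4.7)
The plan is to combine Corollary~\ref{corollary:mean-convergence}, which collapses the problem to computing the limit of the TTL hit probability $H_n^{\TTL}(T_n)$, with Proposition~\ref{prop:T_n-asympt}, which delivers the asymptotic value of the characteristic time. Since (R4) implies (R2) and (P2) implies (P1) (as noted in Section~\ref{subsubsec:popularity}), Corollary~\ref{corollary:mean-convergence} gives $|H_n^{\LRU} - H_n^{\TTL}(T_n)| \to 0$, so it suffices to prove that $H_n^{\TTL}(T_n)$ converges to the stated integral.

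Using Lemma~\ref{lem:Y} together with (R4), which gives $G_{n,i}(t)=\Psi(\lambda_{n,i}t)$, I would rewrite
\[
H_n^{\TTL}(T_n)=\sum_{i=1}^n p_{n,i}\,\Psi(\lambda_{n,i}T_n).
\]
Combining Proposition~\ref{prop:T_n-asympt} ($T_n \sim \nu_0/(g_n\Lambda_n)$ with $g_n=1/n$) with $\lambda_{n,i}=p_{n,i}\Lambda_n$ from \eqref{eq:p_i} and with (P2), I obtain the key uniform asymptotic
\[
\lambda_{n,i}T_n=\frac{\nu_0 p_{n,i}}{g_n}(1+o(1))=\nu_0 f(z_{n,i})(1+o(1)),
\]
where the $o(1)$ term is uniform in $i$ in the multiplicative sense. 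The goal is then to show that $\sum_{i=1}^n p_{n,i}\,\Psi(\lambda_{n,i}T_n)\to\int_0^1 f(x)\Psi(\nu_0 f(x))\,dx$.

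I would do this by truncating at a small threshold $\delta>0$: split indices according to whether $z_{n,i}\in[\delta,1]$ or $z_{n,i}\in(0,\delta)$. On the compact set $[\delta,1]$, $f$ is continuous and hence bounded by some $M_\delta$; then the multiplicative error above gives an \emph{absolute} uniform error $|\lambda_{n,i}T_n-\nu_0 f(z_{n,i})|\le \nu_0 M_\delta\, o(1)\to 0$, and uniform continuity of $\Psi$ on $[0,\nu_0 M_\delta]$ lets me replace $\Psi(\lambda_{n,i}T_n)$ by $\Psi(\nu_0 f(z_{n,i}))$ with negligible error. Using $p_{n,i}\sim f(z_{n,i})/n$ uniformly from (P2), the restricted sum becomes the Riemann sum $\frac{1}{n}\sum_{z_{n,i}\ge\delta} f(z_{n,i})\Psi(\nu_0 f(z_{n,i}))$, which converges to $\int_\delta^1 f(x)\Psi(\nu_0 f(x))\,dx$ by the continuity of $f\cdot\Psi\circ(\nu_0 f)$ on $[\delta,1]$. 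The tail contribution over $\{i:z_{n,i}<\delta\}$ is bounded above by $\sum_{z_{n,i}<\delta} p_{n,i}$, which, by the same Riemann argument applied to $f$ itself together with $\sum_i p_{n,i}=1$, tends to $\int_0^\delta f(x)\,dx$; the corresponding piece of the limiting integral is bounded by the same quantity since $\Psi\le 1$. Letting $\delta\to 0$ (where $\int_0^\delta f\,dx\to 0$ because $\int_0^1 f=\lim_n\sum_i p_{n,i}=1$) then closes the estimate.

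The main obstacle is the potential singularity of $f$ at $0$: the multiplicative convergence $\lambda_{n,i}T_n/(\nu_0 f(z_{n,i}))\to 1$ does \emph{not} yield a small absolute error on $\{i:z_{n,i}<\delta\}$, so the naive uniform-continuity substitution for $\Psi$ fails there, and standard Riemann-sum convergence for $f\Psi(\nu_0 f)$ over $[0,1]$ is not automatic. The truncation argument sketched above sidesteps this by exploiting the bound $\Psi\le 1$ on the small-$z_{n,i}$ tail, so that both sides are simply dominated by $\sum_{z_{n,i}<\delta}p_{n,i}$ which vanishes as $\delta\to 0$ uniformly in $n$ via the mass constraint $\sum_i p_{n,i}=1$.
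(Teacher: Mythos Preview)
Your two-step strategy---reduce to $H_n^{\TTL}(T_n)$ via Corollary~\ref{corollary:mean-convergence}, then compute the limit---matches the paper's. One caution on the first step: while Section~\ref{subsubsec:popularity} does assert (P2)$\Rightarrow$(P1), no proof is given there; the paper actually \emph{establishes} this implication inside its proof of Proposition~\ref{prop:H-limit}, via a Lebesgue-measure argument bounding $\bar P_n(\lceil\kappa_1 C_n\rceil)$ from below using $f>0$ a.e.\ and uniform continuity of $f$ on compacta in $(0,1]$. You should also record that (C2)$\Rightarrow$(C1) because $m_\Psi=1$ under (R4). So what you cite as a black box is in fact part of the work of this proposition; your proposal is fine as a plan, but be aware that this verification is owed.

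For the second step, your $\delta$-truncation argument is correct but differs from the paper's Lemma~\ref{lem:H-limit-TTL}, which is shorter: the paper exploits the \emph{monotonicity} of $\Psi$ to sandwich $H_n^{\TTL}(T_n)$ directly between
\[
\frac{1\pm\epsilon}{n}\sum_{i=1}^n f(z_{n,i})\,\Psi\bigl((1\pm\epsilon)(\nu_0\pm\epsilon)f(z_{n,i})\bigr),
\]
passes to the Riemann integrals $\int_0^1 f(x)\Psi(c_\pm f(x))\,dx$, and then sends $\epsilon\to 0$ by Dominated Convergence with the integrable majorant $f$ (integrability of $f$ being deduced from $\sum_i p_{n,i}=1$). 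Because monotonicity absorbs the multiplicative error $(1\pm\epsilon)$ globally, no truncation, no uniform-continuity estimate on $\Psi$, and no separate tail bound are needed, and the possible blow-up of $f$ at $0$ never enters. Your route trades this for an extra $\delta\to 0$ layer; what it buys you is that it uses only continuity of $\Psi$, not monotonicity, so it is in principle slightly more robust.
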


\prettyref{prop:H-limit} considers a single class of contents in the sense that there is a single $f$ and a single $\Psi$ for all contents. Consider the following generalization to a setting with multiple classes of contents, which may arise from a situation where multiple service providers share a common LRU cache. More precisely, consider $J$ classes of contents, where class $j$ has $b_j n$ contents\footnote{We assume $b_j n$ is an integer for ease of presentation, but this can easily relaxed by requiring class $j$ to have a fraction $b_j$ of the contents asymptotically.} with $b_j>0$ and $\sum_{j=1}^J b_j=1$. Instead of labeling  contents by a single index $i$, we label them by a double index so that $(j,k)$ is the $k$-th content belonging to class $j$. Correspondingly, we have $\lambda_{n,j,k}$ instead of $\lambda_{n,i}$, and similarly for other quantities. For each class $j$, 
\begin{enumerate}[label=(\alph*), ref=\alph*]

\item\label{ass:a} the inter-request distributions are from the same scale family, i.e. $G_{n,j,k}(x) = \Psi_j(\lambda_{n,j,k} x)$ for some continuous cdf $\Psi_j$ with support in $[0,\infty)$ and $m_{\Psi_j}=1$;

\item\label{ass:b} the content popularities $p_{n,j,k} \sim n^{-1} f_j(z_{n,j,k})$ uniformly in $k$ for $z_{n,j,k}\in [\frac{k-1}{b_j n},\frac{k}{b_j n}]$ and  continuous function $f_j$ defined on $(0,1]$ such that $f_j>0$ a.e.~and $\lim_{x\to 0+} f_j(x)\in [0,+\infty]$, i.e.
\begin{equation}\label{eq:p_i-gf-multi-F}
\max_{1\leq k\leq n_j} \left|\frac{f_j(z_{n,j,k})}{n p_{n,j,k}} - 1\right|\to 0, \quad \text{as }\ n\to\infty;
\end{equation}
\end{enumerate}
Note that \eqref{ass:a} implies \eqref{ass:R3} and \eqref{ass:b} is the precise statement of \eqref{ass:P3}.
We have the following generalization of \prettyref{prop:H-limit}. The proof is found in \prettyref{app:proof-H-limit-J}.

\begin{proposition}\label{prop:H-limit-J}\label{PROP:H-LIMIT-J}
Assume \eqref{ass:C2}, and conditions \eqref{ass:a} and \eqref{ass:b} above. Then
\begin{equation}\label{eq:H-limit-LRU-J}
H^{\LRU}_{n} \to \sum_{j=1}^J b_j  \int_0^1  f_j(x)\Psi_j(\nu_0 f_j(x)) dx,
\end{equation}
as $n\to\infty$, where $\nu_0$ is the unique real number in $(0,\infty)$ that satisfies
\begin{equation}\label{eq:nu0-prop:H-limit-multi-F}
 \sum_{j=1}^J b_j \int_0^1  \hat \Psi_j(\nu_0 f_j(x))dx=\beta_0.
\end{equation}
\end{proposition}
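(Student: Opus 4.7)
The plan is to follow the two-step strategy of \prettyref{prop:H-limit}, adapted to $J$ classes: first determine the asymptotic scaling of $T_n$, then invoke \prettyref{corollary:mean-convergence} to reduce $H_n^\LRU$ to $H_n^\TTL(T_n)$, and finally compute the limit of $H_n^\TTL(T_n)$ as a Riemann sum.

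For the scaling of $T_n$, set $\nu_n := \Lambda_n T_n / n$. Since $\lambda_{n,j,k} = p_{n,j,k}\Lambda_n$, condition \eqref{ass:b} gives $\lambda_{n,j,k}T_n = n p_{n,j,k}\nu_n \sim \nu_n f_j(z_{n,j,k})$ uniformly in $(j,k)$. Substituting into the characteristic time equation \eqref{eq:CT} and dividing by $n$:
\[
\frac{C_n}{n} = \frac{1}{n}\sum_{j=1}^J\sum_{k=1}^{b_j n}\hat\Psi_j(\lambda_{n,j,k}T_n) \sim \sum_{j=1}^J b_j\cdot\frac{1}{b_j n}\sum_{k=1}^{b_j n}\hat\Psi_j(\nu_n f_j(z_{n,j,k})).
\]
The left side converges to $\beta_0$ by \eqref{ass:C2}; the right side is, for each class $j$, a Riemann sum on $[0,1]$ with $b_j n$ nodes. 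Strict monotonicity of each $\hat\Psi_j$ makes \eqref{eq:nu0-prop:H-limit-multi-F} uniquely solvable for $\nu_0 \in (0,\infty)$. Extracting a priori upper and lower bounds on $\nu_n$ from \prettyref{prop:T_n-bounds} (applied within each class), monotonicity of $\hat\Psi_j$ then forces $\nu_n \to \nu_0$ rather than any other subsequential limit.

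To pass from LRU to TTL, note that \eqref{ass:a} makes $\cG = \{\Psi_1,\ldots,\Psi_J\}$ a finite family of continuous cdfs, hence equicontinuous, so \eqref{ass:R2} holds; \eqref{ass:b} implies \eqref{ass:P3} and hence \eqref{ass:P1}; and with the choice $\bar\Psi := \min_j\bar\Psi_j$ in \eqref{eq:assumption-Psi}, condition \eqref{ass:C2} implies \eqref{ass:C1}. \prettyref{corollary:mean-convergence} then gives $|H_n^\LRU - H_n^\TTL(T_n)| \to 0$. Plugging in,
\[
H_n^\TTL(T_n) = \sum_{j,k} p_{n,j,k}\Psi_j(\lambda_{n,j,k}T_n) \sim \sum_{j,k}\frac{f_j(z_{n,j,k})}{n}\Psi_j(\nu_0 f_j(z_{n,j,k})) \to \sum_{j=1}^J b_j\int_0^1 f_j(x)\Psi_j(\nu_0 f_j(x))\,dx,
\]
which is the desired conclusion.

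The main obstacle I expect is controlling the Riemann-sum convergence when $f_j$ is unbounded near $0$, since nothing precludes $\lim_{x\to 0^+} f_j(x) = +\infty$. Because $\sum_j b_j\int_0^1 f_j(x)\,dx = 1$ (inherited from $\sum_{j,k} p_{n,j,k}=1$) and both $\Psi_j$ and $\hat\Psi_j$ are bounded by $1$, the integrands are dominated by the integrable function $f_j$, which permits dominated-convergence-style truncation arguments near the singularity. A related subtlety is that the overall $\bar\Psi = \min_j \bar\Psi_j$ in \eqref{eq:assumption-Psi} may have $m_\Psi<1$, so \eqref{ass:C1} strictly speaking requires $\beta_0 < m_\Psi$; covering the full stated regime $\beta_0\in(0,1)$ may require applying \prettyref{prop:convergence} class-by-class and summing the resulting uniform convergences weighted by class fractions.
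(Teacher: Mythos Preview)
Your overall architecture matches the paper's: establish $\Lambda_n T_n/n \to \nu_0$ via the characteristic-time equation and Riemann sums, reduce $H_n^\LRU$ to $H_n^\TTL(T_n)$, and compute the latter's limit. The computation of $H_n^\TTL(T_n)$ and the treatment of the unbounded $f_j$ via domination by $f_j$ itself are exactly what the paper does.

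The issue you flagged at the end is a genuine gap, and your proposed workaround does not close it. You cannot apply \prettyref{prop:convergence} ``class-by-class'': the stopping time $\tau_n$, the characteristic time $T_n$, and the occupancy variable $Y_n(t)$ all involve every content simultaneously, so the hypotheses \eqref{ass:C1} and \eqref{ass:P1} must be checked for the full system with its global $\Psi$. With $\bar\Psi=\min_j\bar\Psi_j$ one has $m_\Psi<1$ whenever the $\Psi_j$ differ, and then \eqref{ass:C1} fails for $\beta_0\in[m_\Psi,1)$. The paper explicitly notes (Remark~\ref{alternative-proof}) that the route through \eqref{ass:C1}/\eqref{ass:P1} only covers $\beta_0<m_\Psi$.

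The paper's actual fix is different and cleaner: it invokes Remark~\ref{remark-prop.4-4}, which says that \eqref{ass:C1} and \eqref{ass:P1} enter the proof of \prettyref{prop:convergence} only through the lower bound \eqref{eq:mu-lower} on the miss probability $\mu_n(T)$. So instead of verifying \eqref{ass:C1}, the paper establishes \eqref{eq:mu-lower} directly. Using the already-proved asymptotic $T_n\sim\nu_0 n/\Lambda_n$ and the same Riemann-sum argument that computes $H_n^\TTL$, one shows
\[
\mu_n((1+x)T_n)\;\longrightarrow\;\mu(x):=\sum_{j=1}^J b_j\int_0^1 f_j(y)\,\bar\Psi_j((1+x)\nu_0 f_j(y))\,dy,
\]
and then argues $\mu(0)>0$: if $\mu(0)=0$, then $\bar\Psi_j(\nu_0 f_j(y))=0$ a.e., which forces $\hat\Psi_j(\nu_0 f_j(y))=1$ a.e.\ and hence $\beta_J(\nu_0)=1$, contradicting $\beta_J(\nu_0)=\beta_0<1$. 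Continuity of $\mu$ then gives $x_0>0$ with $\mu_n(T)\geq\mu(0)/4$ for $T\leq(1+x_0)T_n$ and large $n$, which is \eqref{eq:mu-lower} since $C_n\leq\Lambda_nT_n$.
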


\subsection{Proof of \prettyref{prop:T_n-asympt}}\label{subsec:proof-T_n-asympt}

We need the following lemmas.

\begin{lemma}\label{lem:beta}
The function 
\[
\beta(\nu):=\int_0^1 \hat \Psi(\nu f(x))dx
\]
has the following properties,
\begin{enumerate}[label=(\roman*), ref=\roman*]
\item\label{beta1} $\beta(0) = 0$, $\lim_{\nu\to\infty} \beta(\nu) = 1$;
\item\label{beta2} $\beta$ is continuous;
\item\label{beta3} $\beta$ is increasing in $\nu$;
\item\label{beta4} $\beta$ is strictly increasing at all $\nu$ such that $\beta(\nu) < 1$.
\end{enumerate}
\end{lemma}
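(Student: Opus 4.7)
The four parts of the lemma are each pointwise statements about $x\mapsto\hat\Psi(\nu f(x))$ pushed through the integral on $[0,1]$ via dominated convergence, exploiting that the integrand is uniformly bounded by $1$ and that $f>0$ almost everywhere on $[0,1]$. My plan is to treat each part by recording the corresponding pointwise property of the integrand.

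For (i), $\beta(0)=0$ is immediate from $\hat\Psi(0)=0$. For the limit, the assumption $f>0$ a.e.~on $[0,1]$ gives $\nu f(x)\to\infty$ a.e.~as $\nu\to\infty$, so $\hat\Psi(\nu f(x))\to 1$ a.e.; dominated convergence then yields $\beta(\nu)\to 1$. For (ii), I would fix a sequence $\nu_k\to\nu$ and apply dominated convergence using continuity of $\hat\Psi$ together with the finiteness of $f$ on $(0,1]$. For (iii), monotonicity of $\hat\Psi$ and nonnegativity of $f$ make the integrand nondecreasing in $\nu$ for each $x$, and monotonicity is preserved under integration.

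The substantive part is (iv). The structural fact I would establish first is that, letting $s^\ast:=\sup\{s\geq 0:\bar\Psi(s)>0\}\in(0,\infty]$, one has $\hat\Psi$ strictly increasing on $[0,s^\ast]$ (because $\hat\Psi'=\bar\Psi/m_\Psi$ is strictly positive a.e.~on $[0,s^\ast]$) and constantly equal to $1$ on $[s^\ast,\infty)$; equivalently, $\hat\Psi(t)<1$ iff $t<s^\ast$. Now suppose $\beta(\nu_0)<1$. Then the set $A:=\{x\in(0,1]:\nu_0 f(x)<s^\ast\}$ has strictly positive Lebesgue measure; intersecting with the full-measure set $\{f>0\}$ still leaves a set $A'$ of positive measure. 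For any $\nu>\nu_0$ and any $x\in A'$ we have $\nu f(x)>\nu_0 f(x)$, and strict monotonicity of $\hat\Psi$ on $[0,s^\ast]$ (together with the trivial jump when $\nu f(x)>s^\ast>\nu_0 f(x)$, where $\hat\Psi$ moves from a value strictly less than $1$ up to $1$) yields $\hat\Psi(\nu f(x))>\hat\Psi(\nu_0 f(x))$ pointwise on $A'$. Integrating this strict inequality over a set of positive measure, while using $\hat\Psi(\nu f(x))\geq\hat\Psi(\nu_0 f(x))$ on the complement, gives $\beta(\nu)>\beta(\nu_0)$.

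I do not foresee a serious obstacle. The only mild subtlety is in (iv), where one must identify the structural cutoff $s^\ast$ beyond which $\hat\Psi$ saturates at $1$, and remember to restrict to the a.e.~set $\{f>0\}$ so that increasing $\nu$ actually moves the argument of $\hat\Psi$; once these two points are noted, the calculation is routine.
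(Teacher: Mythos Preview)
Your argument is correct. Parts (i)--(iii) match the paper's proof essentially verbatim (bounded/dominated convergence and monotonicity of $\hat\Psi$). For (iv) the paper argues contrapositively: it assumes $\beta(\nu_1)=\beta(\nu_2)$ for some $\nu_1>\nu_2$, uses nonnegativity of the integrand difference together with continuity of $x\mapsto\hat\Psi(\nu f(x))$ to upgrade a.e.~equality to everywhere equality, and then deduces $\bar\Psi(\nu_2 f(x))=0$ whenever $f(x)>0$, forcing $\hat\Psi(\nu_2 f(x))=1$ and hence $\beta(\nu_2)=1$. Your route is the direct version: you isolate the threshold $s^\ast=\sup\{s:\bar\Psi(s)>0\}$, note that $\hat\Psi$ is strictly increasing on $[0,s^\ast]$ and flat at $1$ beyond, and then work on the positive-measure set $\{\nu_0 f<s^\ast\}\cap\{f>0\}$. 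The two arguments encode the same structural fact about $\hat\Psi$; yours is slightly cleaner in that it is purely measure-theoretic and does not rely on continuity of $f$ to pass from a.e.~to pointwise equality, whereas the paper's phrasing leans on that continuity (which is available here but not strictly needed).
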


\begin{proof}
By \prettyref{eq:F-hat}, $\hat\Psi(0) = 0$, which implies in turn implies that $\beta(0) = 0$. Since $\lim_{t\to\infty} \hat \Psi(t) = 1$, by the Bounded Convergence Theorem,
\[
\lim_{\nu\to\infty} \beta(\nu) = \int_0^1\lim_{\nu\to\infty} \hat \Psi(\nu f(x))dx = 1.
\] 
This proves \eqref{beta1}. \eqref{beta2} follows from the continuity of $\hat\Psi$ and the Bounded Convergence Theorem.

Let $\nu_1 > \nu_2$. Since $f(x) \geq 0$, it follows that $\hat \Psi(\nu_1 f(x))\geq\hat \Psi(\nu_2 f(x))$ and hence $\beta(\nu_1)\geq \beta(\nu_2)$. This proves \eqref{beta3}.

If $\beta(\nu_1) = \beta(\nu_2)$, continuity of $\hat \Psi(\nu f(x))$ implies $\hat \Psi(\nu_1 f(x))=\hat \Psi(\nu_2 f(x))$ for all $x$. If $f(x)>0$, then $ \nu_1 f(x) > \nu_2 f(x)$, and \prettyref{eq:F-hat} implies $\bar \Psi(\nu_2 f(x)) = 0$, which, by monotonicity of $\bar \Psi$, implies $\bar \Psi(t) = 0$ for all $t\geq \nu_2 f(x)$. Thus
\[
1-\hat \Psi(\nu_2 f(x)) = \int_{\nu_2 f(x)}^\infty \bar \Psi(t) dt = 0.
\]
It follows that $\hat \Psi(\nu_2 f(x)) = \ind{\{f(x)>0\}}$ for all $x\in (0,1]$. Since $\hat \Psi(\nu_2 f(x))$ is continuous in $x$ and $f$ is not identically zero, it follows that $\hat \Psi(\nu_2 f(x)) = 1$ and hence $\beta(\nu_2) = 1$. Thus $\beta(\nu_1) > \beta(\nu_2)$ if $\beta(\nu_2)<1$, which completes the proof of \eqref{beta4}.
\end{proof}

Now we prove  \prettyref{prop:T_n-asympt}.

\begin{proof} [Proof of \prettyref{prop:T_n-asympt}]
 Recall  $m_\Psi=1$ implies $G_{n,i}(x) = \Psi(\lambda_{n,i} x)$ and $\hat G_{n,i}(x) = \hat \Psi(\lambda_{n,i} x)$. We obtain from \eqref{eq:K-T} and \eqref{eq:p_i},
\begin{align*}
\frac{C_n}{n}= \frac{1}{n}K_n(T_n) =\frac{1}{n}\sum_{i=1}^n \hat G_{n,i}(T_n) = \frac{1}{n} \sum_{i=1}^n \hat \Psi(\lambda_{n,i} T_n)=\frac{1}{n}\sum_{i=1}^n \hat \Psi(p_{n,i} \Lambda_n T_n).
\end{align*}

Given any $\epsilon>0$, \eqref{eq:p_i-gf} yields that for sufficiently large $n$ and $i=1,\ldots,n$,
\begin{equation}\label{eq:p_i-bound}
(1-\epsilon)g_n f(z_{n,i}) \leq p_{n,i}  \leq (1+\epsilon) g_n f(z_{n,i}).
\end{equation}

Let $\nu_1=\limsup_{n\to\infty}  g_n\Lambda_n T_n$. Let $\{n_\ell:\ell\geq 1\}$ be the indices of a subsequence that converges to $\nu_1$, i.e.~$\nu_1 = \lim_{\ell\to\infty} g_{n_\ell} \Lambda_{n_\ell}  T_{n_\ell} $. First assume $\nu_1 < \infty$. For sufficiently large $\ell$,
\[
(1-\epsilon) (\nu_1-\epsilon) f(z_{n_\ell,i}) \leq p_{n_\ell, i}  \Lambda_{n_\ell}  T_{n_\ell}    \leq (1+\epsilon) (\nu_1+\epsilon) f(z_{n_\ell,i}).
\]
Since $\hat \Psi$ is non-decreasing, for sufficiently large $\ell$,
\begin{align*}
 \frac{1}{n_\ell}  \sum_{i=1}^{n_\ell} \hat \Psi\left((1-\epsilon)(\nu_1-\epsilon) f(z_{n_\ell,i})\right)  \leq \frac{C_{n_\ell}}{n_\ell} &= \frac{1}{n_\ell}\sum_{i=1}^{n_\ell} \hat \Psi(p_{n_\ell,i}  \Lambda_{n_\ell}  T_{n_\ell} )\\
& \leq \frac{1}{n_\ell}  \sum_{i=1}^{n_\ell} \hat \Psi\left((1+\epsilon)(\nu_1+\epsilon) f(z_{n_\ell,i})\right).
\end{align*}
Letting $\ell\to\infty$ and using the definition of the Riemann integral, we obtain 
\begin{align*}
\int_0^1 \hat \Psi((1-\epsilon)(\nu_1-\epsilon) f(x))dx  \leq \lim_{k\to\infty} \frac{C_{n_\ell}}{n_\ell} = \beta_0 \leq \int_0^1 \hat \Psi((1+\epsilon)(\nu_1+\epsilon) f(x))dx.
\end{align*}
Since $\hat \Psi$ is continuous, letting $\epsilon\to 0$ and using the Bounded Convergence Theorem, we obtain 
\[
\beta_0 = \int_0^1 \hat \Psi(\nu_1 f(x))dx=\beta(\nu_1).
\]
If $\nu_1 = +\infty$, repeating the above argument shows that
\[
\beta_0 \geq \beta(\nu)
\]
for any $\nu$, which would imply $\beta_0 \geq \lim_{\nu\to\infty} \beta(\nu) = 1$ by \prettyref{lem:beta}, a contradiction. Therefore, $\nu_1$ is finite and satisfies $\beta(\nu_1) = \beta_0$. The same argument shows that $\nu_2 = \liminf_{n\to\infty} g_n\Lambda_n T_n$ satisfies $\beta_0 = \beta(\nu_2)$. By \prettyref{lem:beta}, $\nu_1 = \nu_2 = \nu_0$, where $\nu_0\in (0,\infty)$ is the unique root of $\beta(\nu) = \beta_0$. It follows that \eqref{eq:T_n-asympt} holds.
\end{proof}

\subsection{Proof of \prettyref{prop:H-limit}}\label{subsec:proof-H-limit}

We will invoke Corollary \ref{corollary:mean-convergence} to show convergence. Assumption \eqref{ass:R2} holds by \prettyref{lem:equicontinuity}. Since $m_\Psi=1$ by \eqref{ass:R4},  \eqref{ass:C2} implies \eqref{ass:C1} for any $\beta_1 \in (\beta_0, 1)$.
 
Now we show that \eqref{ass:P1} holds. Let $A_\ell = \{x\in [0,1]:f(x) \geq 1/\ell\}$. Since $f>0$ a.e., 
$\lim_{\ell\to\infty} Leb(A_\ell)=Leb\{ x\in [0,1]: f(x)>0 \} = 1$, where $Leb$ is the Lebesgue measure on $[0,1]$. Thus there exists an $\ell_0$ such that $Leb(A_{\ell_0}^c)\leq (1-\kappa_1 \beta_0)/4$. Let $I  = [(1-\kappa_1 \beta_0)/4, 1]$ and $I_{n,i} = [\frac{i-1}{n}, \frac{i}{n}]$. Since $f$ is continuous, it is uniformly continuous on $I$ by the Heine-Cantor Theorem. For all sufficiently large $n$, $|f(x)-f(z_{n,i})|\leq \frac{1}{2\ell_0}$ if $ x\in  I_{n,i} \cap I$. Therefore, for all sufficiently large $n$,
\begin{align*}
\frac{1}{n} f(z_{n,i}) = \int_{I_{n,i}} f(z_{n,i}) dx &\geq \int_{I_{n,i}\cap I} \left[f(x) - \frac{1}{2\ell_0}\right] dx \\
&\geq \int_{I_{n,i}\cap I\cap A_{\ell_0}} \left(\frac{1}{\ell_0} - \frac{1}{2\ell_0}\right)dx = \frac{1}{2\ell_0} Leb(I_{n,i}\cap I \cap A_{\ell_0}).
\end{align*}
Summing over $i$, we obtain
\begin{align}
\bar P_n(\lceil \kappa_1 C_n\rceil) &\sim \sum_{i=\lceil \kappa_1 C_n\rceil+1}^n \frac{1}{n} f(z_{n,\sigma_i})\nonumber\\
& \geq \frac{1}{2\ell_0}\sum_{i=\lceil \kappa_1 C_n\rceil+1}^n Leb(I_{n,\sigma_i}\cap I \cap A_{\ell_0})\nonumber\\
& = \frac{1}{2\ell_0} Leb\left(\left(\bigcup_{i=\lceil \kappa_1 C_n\rceil+1}^n I_{n,\sigma_i}\right)\cap I \cap A_{\ell_0}\right)\nonumber\\
& \geq \frac{1}{2\ell_0} \left(Leb \left(\bigcup_{i=\lceil \kappa_1 C_n\rceil+1}^n I_{n,\sigma_i}\right)- Leb(I^c) - Leb(A_{\ell_0}^c)\right)\nonumber\\
& =\frac{1}{2\ell_0}\left( \sum_{i=\lceil \kappa_1 C_n\rceil+1}^n Leb(I_{n,\sigma_i}) -  Leb(I^c) -  Leb(A_{\ell_0}^c)\right)\nonumber\\
& \geq \frac{1}{2\ell_0}\left( \frac{n-\lceil \kappa_1 C_n\rceil}{n} - \frac{1}{4} (1-\kappa_1\beta_0)  - \frac{1}{4} (1-\kappa_1\beta_0)\right)\nonumber\\
&= \frac{1}{4\ell_0} (1-\kappa_1\beta_0) > 0.
\label{bound-Pn}
\end{align}
We conclude that \eqref{eq:heavy-tail} holds for $0<\gamma <\frac{1}{4\ell_0} (1-\kappa_1\beta_0)$.

Therefore, \eqref{eq:mean-convergence} holds by Corollary \ref{corollary:mean-convergence}. Then \prettyref{eq:H-limit} follows from \eqref{eq:mean-convergence} and the following lemma.

\begin{lemma}\label{lem:H-limit-TTL}
Under the assumptions of \prettyref{prop:H-limit}, 
\begin{equation}\label{eq:H-limit-TTL}
H_n^\TTL(T_n) \to \int_0^1 f(x) \Psi(\nu_0 f(x))dx,  \quad\hbox{as } n\to\infty.
\end{equation}
\end{lemma}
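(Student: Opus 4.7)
The plan is to compute $H_n^\TTL(T_n)$ explicitly under the scale-family assumption \eqref{ass:R4}, substitute the asymptotics of \prettyref{prop:T_n-asympt} for $T_n$ and of \eqref{ass:P2} for the popularities, and then identify what remains as a Riemann sum for the target integral.

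Concretely, by \prettyref{lem:Y} together with \eqref{ass:R4}, $H^\TTL_{n,i}(T_n) = G_{n,i}(T_n) = \Psi(\lambda_{n,i}T_n) = \Psi(p_{n,i}\Lambda_n T_n)$, using $\lambda_{n,i}=p_{n,i}\Lambda_n$. Summing via \eqref{eq:H-H_i-TTL} gives
\[
H_n^\TTL(T_n) = \sum_{i=1}^n p_{n,i}\,\Psi\bigl(p_{n,i}\Lambda_n T_n\bigr).
\]
By \prettyref{prop:T_n-asympt} with $g_n=1/n$, $\Lambda_n T_n = \nu_0 n\,(1+o(1))$, while \eqref{ass:P2} gives $n p_{n,i}=f(z_{n,i})(1+\eta_{n,i})$ with $\max_i |\eta_{n,i}|\to 0$. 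Multiplying, $p_{n,i}\Lambda_n T_n = \nu_0 f(z_{n,i})(1+\varepsilon_{n,i})$ uniformly in $i$ with $\max_i|\varepsilon_{n,i}|\to 0$. Substituting both estimates reduces the claim to showing
\[
R_n := \frac{1}{n}\sum_{i=1}^n f(z_{n,i})\,\Psi\bigl(\nu_0 f(z_{n,i})(1+\varepsilon_{n,i})\bigr) \;\longrightarrow\; \int_0^1 f(x)\Psi(\nu_0 f(x))\,dx,
\]
up to a harmless overall $(1+o(1))$ factor coming from $1+\eta_{n,i}$.

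The hard part is precisely this Riemann-sum convergence, because $f$ is only continuous on $(0,1]$ and is allowed to blow up near $0$ (the hypothesis permits $\lim_{x\to 0^+}f(x)=+\infty$), so standard Riemann-sum results for continuous integrands on $[0,1]$ do not apply directly. My plan is to first exploit $\sum_i p_{n,i}=1$ together with \eqref{ass:P2} to obtain $\tfrac{1}{n}\sum_i f(z_{n,i})=1+o(1)$, which by truncation and monotone convergence also yields $f\in L^1([0,1])$ and $\int_0^\delta f(x)\,dx\to 0$ as $\delta\to 0^+$. For any fixed $\delta\in(0,1)$, split $R_n$ according to whether $z_{n,i}\leq\delta$ or $z_{n,i}>\delta$. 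On $(\delta,1]$ the function $f$ is bounded and continuous (Heine--Cantor), $\Psi$ is uniformly continuous on the compact range of $\nu_0 f(z_{n,i})(1+\varepsilon_{n,i})$, and the usual Riemann-sum convergence for continuous integrands delivers $\tfrac{1}{n}\sum_{z_{n,i}>\delta} f(z_{n,i})\Psi(\nu_0 f(z_{n,i})(1+\varepsilon_{n,i}))\to \int_\delta^1 f(x)\Psi(\nu_0 f(x))\,dx$. The tail piece from $z_{n,i}\leq\delta$ is bounded (using $\Psi\leq 1$) by $\tfrac{1}{n}\sum_{z_{n,i}\leq\delta} f(z_{n,i})$, whose $\limsup$ as $n\to\infty$ equals $1-\int_\delta^1 f(x)\,dx$, while the corresponding integral tail satisfies $\int_0^\delta f(x)\Psi(\nu_0 f(x))\,dx\leq \int_0^\delta f(x)\,dx$. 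Letting $n\to\infty$ first and $\delta\to 0^+$ second makes both tails disappear, establishing \eqref{eq:H-limit-TTL}.
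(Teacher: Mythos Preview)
Your proof is correct and follows essentially the same route as the paper's. The only real difference is how the perturbation inside $\Psi$ is removed: the paper sandwiches using the monotonicity of $\Psi$ (bounding $\Psi(p_{n,i}\Lambda_n T_n)$ between $\Psi((1\pm\epsilon)(\nu_0\pm\epsilon)f(z_{n,i}))$) and then lets $\epsilon\to 0$ in the resulting integrals via dominated convergence, whereas you truncate at level $\delta$ and use uniform continuity of $\Psi$ on the resulting bounded range of $\nu_0 f$---but both arguments hinge on the same integrability of $f$ (extracted from $\sum_i p_{n,i}=1$) to dispose of the contribution near $x=0$.
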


\begin{proof}
Recall that $G_{n,i}(x) = \Psi(\lambda_{n,i} x)$. We obtain from \eqref{eq:H-TTL}, \eqref{eq:H_i-TTL} and \eqref{eq:Y_i-conditional-pmf},
\[
H_n^{\TTL}(T_n)= \sum_{i=1}^n p_{n,i} \Psi(\lambda_{n,i} T_n)=\sum_{i=1}^n p_{n,i} \Psi(p_{n,i} \Lambda_n T_n).
\]
From \eqref{eq:T_n-asympt} and \eqref{eq:p_i-bound} the following inequalities hold, for any $\epsilon>0$ and $n$ large enough,
\[
(1-\epsilon) (\nu_0-\epsilon) f(z_{n,i}) \leq p_{n, i}  \Lambda_{n}  T_{n}    \leq (1+\epsilon) (\nu_0+\epsilon) f(z_{n,i}).
\]
The monotonicity of $\Psi$ then yields
\begin{equation}\label{eq:H-TTL-bounds}
\begin{aligned}
\frac{1-\epsilon}{n}  \sum_{i=1}^{n} f(z_{n,i}) \Psi\left((1-\epsilon)(\nu_0-\epsilon) f(z_{n,i})\right)  \leq H_n^{\TTL}(T_n)\\
 \leq \frac{1+\epsilon}{n} \sum_{i=1}^{n} f(z_{n,i})  \Psi\left((1+\epsilon)(\nu_0+\epsilon) f(z_{n,i})\right).
\end{aligned}
\end{equation}
 Letting $n\to\infty$ and using the definition of the Riemann integral, we obtain
\begin{equation}\label{eq:H-liminf}
\liminf_{n\to\infty} H_n^{\TTL}(T_n) \geq (1-\epsilon) \int_0^1 f(x)\Psi((1-\epsilon)(\nu_0-\epsilon) f(x) dx ,
\end{equation}
and
\begin{equation}\label{eq:H-limsup}
\limsup_{n\to\infty} H_n^{\TTL}(T_n) \leq (1+\epsilon) \int_0^1 f(x)\Psi((1+\epsilon)(\nu_0+\epsilon) f(x)) dx.
\end{equation}
The existence of the integrals comes from the fact that $0\leq \Psi \leq 1$ and the integrability of $f$ over $[0,1]$, which follows from the first inequality in \eqref{eq:p_i-bound} by the following,
\[
1 = \sum_{i=1}^n p_{n,i} \geq (1-\epsilon) \frac{1}{n}\sum_{i=1}^n f(z_{n,i})\to (1-\epsilon)  \int_0^1 f(x)dx.
\]
Since $\Psi$ is continuous and $\int_0^1 f(x)dx<\infty$,  letting $\epsilon\to 0$ in \prettyref{eq:H-liminf} and \prettyref{eq:H-limsup}  yields \prettyref{eq:H-limit-TTL} by the Dominated Convergence Theorem.
\end{proof}


\section{Conclusions}\label{sec:concl}

In this paper, we developed an approximation for the aggregate and individual content hit probability of an LRU cache based on a transformation to the TTL cache for the case that content requests are described by independent stationary and ergodic processes.  This approximation extends one first proposed and studied by Fagin \cite{fagin1977asymptotic} for the independent reference model and provides the theoretical basis for approximations introduced in \cite{Leonardi16} for content requests described by independent renewal processes.  We showed that the approximations become exact in the limit as the cache size and the number of contents go to infinity.   Last, we established the rate of convergence for the approximation as number of contents increases.

Future directions include investigation for tighter bounds on the convergence rate and extension of these results to other cache policies such as FIFO and random and to networks of caches perhaps using ideas from \cite{rosensweig,TTL-ValueTools-14,berger14}. In addition, it is desirable to relax independence between different content request streams.


\bibliographystyle{ACM-Reference-Format}
\bibliography{TEX/bib} 

\appendix


\section{Equicontinuity}

\begin{lemma}\label{lem:equicontinuity}
A finite family of continuous cdfs is equicontinuous, so  that  \eqref{ass:R4}$\implies$\eqref{ass:R3}$\implies$\eqref{ass:R2}.
\end{lemma}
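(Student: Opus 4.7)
The plan is to reduce the equicontinuity of a finite family of continuous cdfs to the uniform continuity of each individual cdf, and then derive the stated implications as immediate consequences.

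First, I would show that every continuous cdf $F$ on $\R$ is uniformly continuous. Given $\epsilon>0$, pick $M>0$ large enough that $F(-M)<\epsilon/2$ and $1-F(M)<\epsilon/2$; this is possible because $F(-\infty)=0$ and $F(+\infty)=1$. On the compact interval $[-M-1,M+1]$, the Heine--Cantor theorem gives a $\delta_0>0$ such that $|F(x)-F(y)|<\epsilon$ whenever $|x-y|<\delta_0$ and $x,y\in[-M-1,M+1]$. For points in the tails, the argument is completed by monotonicity: if $x,y\geq M$ with $|x-y|<1$, then $|F(x)-F(y)|\leq 1-F(M)<\epsilon/2$, and similarly on the left tail; the mixed case (one argument in $[-M,M]$, the other outside) is absorbed by choosing $\delta=\min\{\delta_0,1\}$ and using monotonicity to bound each piece by $\epsilon$. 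Hence $F$ is uniformly continuous on $\R$.

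Second, given a finite family $\{F_1,\dots,F_k\}$ of continuous cdfs, apply the previous step to each $F_j$ to obtain $\delta_j(\epsilon)>0$ witnessing uniform continuity of $F_j$. Setting $\delta(\epsilon):=\min_{1\leq j\leq k}\delta_j(\epsilon)>0$, we get a single modulus that works simultaneously for all members of the family, which is exactly the definition of equicontinuity used in the paper (the stronger, uniform version from the footnote following \eqref{ass:R1}).

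Finally, the chain of implications is immediate. \eqref{ass:R4}$\implies$\eqref{ass:R3} is trivial since $\cG=\{\Psi\}$ has $|\cG|=1<\infty$. For \eqref{ass:R3}$\implies$\eqref{ass:R2}, note that each $G^\ast_{n,i}$ is continuous (it is a composition of the continuous cdf $G_{n,i}$ with a linear rescaling), so $\cG$ is a finite family of continuous cdfs, and the first two steps give equicontinuity.

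No serious obstacle is anticipated: the only mildly delicate point is the tail argument in step one, which is handled cleanly by monotonicity together with the compactification of $\R$ to $[-\infty,+\infty]$ on which the extended $F$ is continuous hence uniformly continuous.
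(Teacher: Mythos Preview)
Your proposal is correct and follows essentially the same route as the paper: truncate to a compact interval using the cdf limits at $\pm\infty$, apply Heine--Cantor there, handle the tails by monotonicity, and take the minimum $\delta$ over the finite family. The only cosmetic difference is that the paper first fixes a common truncation level $L=\max_j L_j$ and then obtains each $\delta_j$ on the same interval $[-2L,2L]$, whereas you prove uniform continuity of each $F_j$ separately before taking the minimum; both arrangements work.
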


\begin{proof}
Let the family of cdfs be $\F = \{F_1,\dots,F_J\}$. Fix $\epsilon$. There exists a $L_j\in (0,\infty)$ such that 
\begin{equation}\label{eq:equicontinuity-L}
F_j(-L_j)<\epsilon \quad \text{ and } \quad 1- F_j(L_j)<\epsilon.
\end{equation}
Let $L = \max_{1\leq j\leq J} L_j \in (0,\infty)$. Being continuous, $F_j$ is uniformly continuous on $[-2L, 2L]$ by the Heine-Cantor Theorem. Thus there exists a $\delta_j\in (0, L)$ such that 
\begin{equation}\label{eq:equicontinuity-delta}
|F_j(x_1) - F_j(x_2)|< \epsilon,
\end{equation}
for $x_1,x_2\in [-2L, 2L]$ such that $|x_1 - x_2|<\delta_j$.

Let $\delta = \min_{1\leq j\leq J} \delta_j\in (0,L)$. Consider any $x_1>x_2$ with $|x_1-x_2|<\delta$. There are three cases.
\begin{enumerate}[label=(\roman*)]
\item If $x_1,x_2\in [-2L, 2L]$, then \eqref{eq:equicontinuity-delta} holds for all $j$.
\item If $x_1>2L$, then $x_2 > L$, since $|x_1-x_2|<\delta<L$. Thus $|F_j(x_1)-F_j(x_2)| = F_j(x_1)-F_j(x_2)\leq 1-F_j(L_j)<\epsilon$ by \eqref{eq:equicontinuity-L}, and this holds for all $j$.
\item If $x_2<-2L$, then $x_1 < -L$, since $|x_1-x_2|<\delta<L$. Thus $|F_j(x_1)-F_j(x_2)| = \leq F_j(-L_j)<\epsilon$ by \eqref{eq:equicontinuity-L}, and this holds for all $j$.
\end{enumerate}
Therefore, $\F$ is equicontinuous.
\end{proof}

\section{Proof of \prettyref{PROP:H-LIMIT-J}}\label{app:proof-H-limit-J}

The proof parallels those of Proposition \ref{prop:T_n-asympt} and \ref{prop:H-limit} except for the last step.
The following lemma generalizes \prettyref{lem:beta}.

\begin{lemma}\label{lem:beta-J}
The function 
\[
\beta_J(\nu):=\sum_{j=1}^J b_j \int_0^1 \hat \Psi_j(\nu f_j(x))dx
\]
has the following properties,
\begin{enumerate}[label=(\roman*), ref=\roman*]
\item\label{beta_J1} $\beta_J(0) = 0$, $\lim_{\nu\to\infty}\beta_J(\nu) = 1$;
\item\label{beta_J2} $\beta_J$ is continuous;
\item\label{beta_J3} $\beta_J$ is increasing in $\nu$;
\item\label{beta_J4} $\beta_J$ is strictly increasing at all $\nu$ such that $\beta_J(\nu) < 1$.
\end{enumerate}
\end{lemma}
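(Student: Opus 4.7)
The plan is to mirror the proof of \prettyref{lem:beta} termwise, exploiting the fact that $\beta_J$ is a convex combination (with weights $b_j$ summing to $1$) of integrals of the same form as in \prettyref{lem:beta}. Each summand $\int_0^1 \hat \Psi_j(\nu f_j(x))dx$ enjoys the four properties (i)--(iv) stated in \prettyref{lem:beta}, since the hypotheses on $\Psi_j$ (continuous cdf with $m_{\Psi_j}=1$) and on $f_j$ (continuous, $f_j>0$ a.e., nonnegative) are precisely those used in that lemma. Thus properties (i), (ii), (iii) will follow immediately: $\hat \Psi_j(0)=0$ gives $\beta_J(0)=0$; $\hat \Psi_j(\nu f_j(x))\to 1$ pointwise a.e. as $\nu\to\infty$ together with the Bounded Convergence Theorem and $\sum_j b_j=1$ gives $\lim_{\nu\to\infty}\beta_J(\nu)=1$; continuity of $\hat \Psi_j$ together with the Bounded Convergence Theorem yields continuity of $\beta_J$; and $\nu\mapsto \hat \Psi_j(\nu f_j(x))$ is nondecreasing for every $x$ because $\hat \Psi_j$ is a cdf and $f_j(x)\geq 0$, giving monotonicity of $\beta_J$.

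The only step requiring care is (iv). Here I would argue by contraposition: fix $\nu_1>\nu_2\geq 0$ and suppose $\beta_J(\nu_1)=\beta_J(\nu_2)$; I need to deduce $\beta_J(\nu_2)=1$. Since each of the $J$ summands is a nondecreasing function of $\nu$, equality of the weighted sums together with $b_j>0$ forces equality of each summand:
\[
\int_0^1 \hat \Psi_j(\nu_1 f_j(x))\,dx = \int_0^1 \hat \Psi_j(\nu_2 f_j(x))\,dx, \quad j=1,\dots,J.
\]
At this point I would invoke the exact argument from the proof of property (iv) in \prettyref{lem:beta}, applied separately to each class $j$: continuity of $\hat \Psi_j(\nu f_j(x))$ in $x$, together with the strict inequality $\nu_1 f_j(x)>\nu_2 f_j(x)$ wherever $f_j(x)>0$, forces $\bar\Psi_j$ to vanish on $[\nu_2 f_j(x),\infty)$, hence $\hat \Psi_j(\nu_2 f_j(x))=1$ on $\{f_j>0\}$. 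Since $f_j>0$ a.e. on $[0,1]$ and the integrand is continuous in $x$, this yields $\int_0^1 \hat \Psi_j(\nu_2 f_j(x))\,dx=1$ for every $j$, so $\beta_J(\nu_2)=\sum_j b_j\cdot 1=1$, as desired.

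I expect the main (minor) obstacle to be simply recording cleanly the step ``equality of a sum of nondecreasing functions forces equality of each summand,'' which relies on comparing $\hat \Psi_j(\nu_1 f_j(x))\geq \hat \Psi_j(\nu_2 f_j(x))$ pointwise and on $b_j>0$; once this is in hand, the rest reduces verbatim to the single-class proof. No new analytic machinery beyond what was already used in \prettyref{lem:beta} is required.
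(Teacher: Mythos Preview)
Your proposal is correct and follows essentially the same approach as the paper's proof: both establish (i)--(iii) directly from properties of $\hat\Psi_j$ and the Bounded Convergence Theorem, and handle (iv) by arguing that $\beta_J(\nu_1)=\beta_J(\nu_2)$ forces the pointwise equality $\hat\Psi_j(\nu_1 f_j(x))=\hat\Psi_j(\nu_2 f_j(x))$ for each $j$ (using nonnegativity of the integrands and continuity), from which $\hat\Psi_j(\nu_2 f_j(x))=1$ on $\{f_j>0\}$ and hence $\beta_J(\nu_2)=1$. The only cosmetic difference is that you explicitly isolate the intermediate step ``equality of each summand integral'' before passing to pointwise equality, whereas the paper compresses this into a single sentence.
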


\begin{proof}
By \prettyref{eq:Psi-hat}, $\hat\Psi_j(0) = 0$, which implies $\beta_J(0) = 0$. Since $\lim_{x\to\infty} \hat \Psi_j(x) = 1$, by the Bounded Convergence Theorem,
\[
\lim_{\nu\to\infty} \beta_J(\nu) =\sum_{j=1}^J b_j \int_0^1\lim_{\nu \to\infty} \hat \Psi_j(\nu f_j(x))dx = 1.
\] 
This proves \eqref{beta_J1}. \eqref{beta_J2} follows from the continuity of $\hat\Psi_j$ and the Bounded Convergence Theorem.

Let $\nu_1 > \nu_2$. Since $f_j(x) \geq 0$, it follows that $\hat \Psi_j(\nu_1 f_j(x))\geq\hat \Psi_j(\nu_2 f_j(x))$ and hence $\beta_J(\nu_1)\geq \beta_J(\nu_2)$. This proves \eqref{beta_J3}.

If $\beta_J(\nu_1) = \beta_J(\nu_2)$, then continuity of $\hat \Psi_j(\nu f_j(x))$ implies that $\hat \Psi_j(\nu_1 f_j(x))=\hat \Psi_j(\nu_2 f_j(x))$ for all $x$. If $f_j(x)>0$, then $ \nu_1 f_j(x) > \nu_2 f_j(x)$, and \prettyref{eq:F-hat} implies $\bar \Psi_j(\nu_2 f_j(x)) = 0$, which, by monotonicity of $\bar \Psi_j$, implies $\bar \Psi_j(t) = 0$ for all $t\geq \nu_2 f_j(x)$. Thus
\[
1-\hat \Psi_j(\nu_2 f_j(x)) = \int_{\nu_2 f_j(x)}^\infty \bar \Psi_j(t) dt = 0.
\]
It follows that $\hat \Psi_j(\nu_2 f_j(x)) = \ind{\{f_j(x)>0\}}$ for all $x\in (0,1]$. Since $\hat \Psi_j(\nu_2 f_j(x))$ is continuous in $x$ and $f_j$ is not identically zero, it follows that $\hat \Psi_j(\nu_2 f_j(x)) = 1$. Since this is true for all $j$, it follows that $\beta_J(\nu_2) = 1$. Thus $\beta_J(\nu_1) > \beta_J(\nu_2)$ if $\beta_J(\nu_2)<1$, which completes the proof of \eqref{beta_J4}.
\end{proof}

The following proposition generalizes \prettyref{prop:T_n-asympt}.

\begin{proposition}
\label{prop:T_n-asympt-generalize}
Under the assumptions in \prettyref{prop:H-limit-J} but with the condition \eqref{ass:b} that $p_{n,j,k} \sim n^{-1} f_j(z_{n,j,k})$ generalized to $p_{n,j,k} \sim g_n f_j(z_{n,j,k})$, we have
\begin{equation}\label{eq:T_n-asympt-J}
T_n \sim \frac{\nu_0}{g_n\Lambda_n},
\end{equation}
where $\nu_0$ satisfies \eqref{eq:nu0-prop:H-limit-multi-F}.
\end{proposition}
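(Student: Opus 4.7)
\medskip

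\noindent\textbf{Proof plan for Proposition~\ref{prop:T_n-asympt-generalize}.} The strategy mirrors the proof of Proposition~\ref{prop:T_n-asympt}, but with the single Riemann sum replaced by a sum of $J$ Riemann sums, one per class, and with the limiting equation governed by the multi-class function $\beta_J$ from Lemma~\ref{lem:beta-J}. The defining identity \eqref{eq:CT}, rewritten with the double index and using assumption~(\ref{ass:a}) that $G_{n,j,k}(t)=\Psi_j(\lambda_{n,j,k}t)$ together with $\lambda_{n,j,k}=p_{n,j,k}\Lambda_n$, gives
\[
\frac{C_n}{n}=\frac{1}{n}\sum_{j=1}^{J}\sum_{k=1}^{b_j n}\hat\Psi_j\!\bigl(p_{n,j,k}\,\Lambda_n T_n\bigr).
\]
The goal is to show that $u_n:=g_n\Lambda_n T_n\to\nu_0$, where $\nu_0$ is the unique root of $\beta_J(\nu)=\beta_0$ produced by Lemma~\ref{lem:beta-J}.

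\medskip

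\noindent First, the generalized condition (\ref{ass:b}), $p_{n,j,k}\sim g_n f_j(z_{n,j,k})$ uniformly in $(j,k)$, yields for every $\epsilon>0$ and every $n$ sufficiently large the two-sided bound
\[
(1-\epsilon)g_n f_j(z_{n,j,k})\le p_{n,j,k}\le (1+\epsilon)g_n f_j(z_{n,j,k}),\qquad 1\le j\le J,\ 1\le k\le b_j n.
\]
Let $\nu_+=\limsup_n u_n$ and $\nu_-=\liminf_n u_n$. Along a subsequence $n_\ell$ with $u_{n_\ell}\to\nu_+$, assuming first $\nu_+<\infty$, one obtains for all large $\ell$
\[
(1-\epsilon)(\nu_+-\epsilon)f_j(z_{n_\ell,j,k})\le p_{n_\ell,j,k}\Lambda_{n_\ell}T_{n_\ell}\le (1+\epsilon)(\nu_++\epsilon)f_j(z_{n_\ell,j,k}),
\]
and the monotonicity of each $\hat\Psi_j$ together with the identity above sandwiches $C_{n_\ell}/n_\ell$ between
\[
\frac{1}{n_\ell}\sum_{j=1}^{J}\sum_{k=1}^{b_j n_\ell}\hat\Psi_j\!\bigl((1\mp\epsilon)(\nu_+\mp\epsilon)f_j(z_{n_\ell,j,k})\bigr).
\]
For each fixed $j$, the inner sum is a Riemann sum on $[0,1]$ with mesh $1/(b_j n_\ell)$, weighted by $b_j$, for the continuous integrand $\hat\Psi_j\!\bigl(\cdot f_j(x)\bigr)$ (continuity of the integrand follows from continuity of $\hat\Psi_j$ and $f_j$, and boundedness of $\hat\Psi_j$ handles any singularity of $f_j$ at~$0$). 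Passing to the limit $\ell\to\infty$ and using \eqref{ass:C2} gives
\[
\sum_{j=1}^{J}b_j\!\int_0^1 \hat\Psi_j\!\bigl((1-\epsilon)(\nu_+-\epsilon)f_j(x)\bigr)dx\le\beta_0\le\sum_{j=1}^{J}b_j\!\int_0^1\hat\Psi_j\!\bigl((1+\epsilon)(\nu_++\epsilon)f_j(x)\bigr)dx.
\]
Letting $\epsilon\to 0$, the continuity of $\hat\Psi_j$ and the Bounded Convergence Theorem yield $\beta_J(\nu_+)=\beta_0$.

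\medskip

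\noindent Next, the two degenerate scenarios $\nu_+=+\infty$ and $\nu_-=0$ must be ruled out. For $\nu_+=+\infty$, the lower-sandwich argument applied for arbitrary finite $\nu$ yields $\beta_0\ge\beta_J(\nu)$ for every $\nu>0$, which, combined with $\lim_{\nu\to\infty}\beta_J(\nu)=1$ from Lemma~\ref{lem:beta-J}\eqref{beta_J1}, gives $\beta_0\ge 1$, contradicting $\beta_0\in(0,1)$. For $\nu_-=0$, the upper-sandwich argument along a subsequence $u_{n_\ell}\to 0$ forces $\beta_0\le\beta_J(\nu)$ for every $\nu>0$; together with $\beta_J(0)=0$ and continuity this gives $\beta_0\le 0$, again a contradiction. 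The same chain of sandwich-and-limit arguments applied to a $\liminf$-subsequence yields $\beta_J(\nu_-)=\beta_0$. Since Lemma~\ref{lem:beta-J}\eqref{beta_J4} shows $\beta_J$ is strictly increasing on the pre-image of $(0,1)$, the equation $\beta_J(\nu)=\beta_0$ has the unique solution $\nu_0$, so $\nu_+=\nu_-=\nu_0$ and therefore $u_n\to\nu_0$, i.e.\ $T_n\sim\nu_0/(g_n\Lambda_n)$.

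\medskip

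\noindent\textbf{Main obstacle.} The technical heart is the uniform Riemann-sum convergence over the $J$ classes simultaneously. This is not difficult in itself because each inner sum is already a legitimate Riemann sum over $[0,1]$ for a bounded continuous integrand, but one must be careful that $f_j$ is defined on $(0,1]$ and may blow up at $0$; the boundedness $\hat\Psi_j\le 1$ is what saves the day, and this is why I work with $\hat\Psi_j\circ(\nu f_j)$ rather than with $f_j$ directly. The only other subtlety is the exclusion of the degenerate limits $\nu_+=\infty$ and $\nu_-=0$, which is handled cleanly by the monotonicity and the endpoint values of $\beta_J$ from Lemma~\ref{lem:beta-J}.
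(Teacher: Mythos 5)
Your proposal is correct and follows essentially the same route as the paper's proof: rewrite \eqref{eq:CT} as a sum of $J$ per-class Riemann sums, sandwich $C_n/n$ using the uniform popularity bounds along $\limsup$ and $\liminf$ subsequences of $g_n\Lambda_n T_n$, pass to the limit to obtain $\beta_J(\nu_\pm)=\beta_0$, rule out the degenerate limit at infinity via Lemma~\ref{lem:beta-J}, and conclude by strict monotonicity of $\beta_J$. The only (harmless) difference is your explicit exclusion of $\nu_-=0$, which the paper handles implicitly since $\beta_J(\nu_-)=\beta_0>0=\beta_J(0)$.
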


\begin{proof} 
 Recall   $G_{n,j,k}(x) = \Psi_j(\lambda_{n,j,k} x)$ implies $\hat G_{n,j,k}(x) = \hat \Psi_j(\lambda_{n,j,k} x)$. We obtain from \eqref{eq:K-T} and \eqref{eq:p_i},
\begin{align*}
\frac{C_n}{n} = \frac{1}{n}K_n(T_n) =\frac{1}{n}\sum_{j=1}^{J} \sum_{k=1}^{b_j n} \hat G_{n,j,k}(T_n) = \frac{1}{n} \sum_{j=1}^{J} \sum_{k=1}^{b_j n} \hat \Psi_j(\lambda_{n,j,k} T_n)=\frac{1}{n}\sum_{j=1}^{J} \sum_{k=1}^{b_j n} \hat \Psi_j(p_{n,j,k} \Lambda_n T_n).
\end{align*}

Given any $\epsilon>0$, \eqref{eq:p_i-gf-multi-F} yields that for sufficiently large $n$ and $j=1,\dots,J$, $k=1,\ldots,b_j n$,
\begin{equation}\label{eq:p_i-bound-J}
(1-\epsilon)g_n f_j(z_{n,j,k}) \leq p_{n,j,k}  \leq (1+\epsilon) g_n f_j(z_{n,j,k}).
\end{equation}

Let $\nu_1=\limsup_{n\to\infty}  g_n\Lambda_n T_n$.
Let $\{n_\ell:\ell\geq 1\}$ be the indices of a subsequence that converges to $\nu_1$, i.e.~$\nu_1 = \lim_{\ell\to\infty} g_{n_\ell} \Lambda_{n_\ell}  T_{n_\ell}$. First assume $\nu_1 < \infty$ for all $j$. For sufficiently large $\ell$,
\[
(1-\epsilon) (\nu_1-\epsilon) f_j(z_{n_\ell,i}) \leq p_{n_\ell, j,k}  \Lambda_{n_\ell}  T_{n_\ell}    \leq (1+\epsilon) (\nu_1+\epsilon) f_j(z_{n_\ell,i}).
\]
Since $\hat \Psi_j$ is non-decreasing, for sufficiently large $\ell$,
\begin{align*}
\frac{1}{n_\ell}  \sum_{j=1}^J\sum_{k=1}^{b_j n_\ell} \hat \Psi_j\left((1-\epsilon)(\nu_1-\epsilon) f_j(z_{n_\ell,i})\right)  \leq \frac{C_{n_\ell}}{n_\ell} &= \frac{1}{n_\ell} \sum_{j=1}^J\sum_{k=1}^{b_j n_\ell}\hat \Psi_j(p_{n_\ell,i}  \Lambda_{n_\ell}  T_{n_\ell} )\\
& \leq \frac{1}{n_\ell} \sum_{j=1}^J\sum_{k=1}^{b_j n_\ell} \hat \Psi_j\left((1+\epsilon)(\nu_1+\epsilon) f_j(z_{n_\ell,i})\right).
\end{align*}
Letting $\ell\to\infty$ and using the definition of the Riemann integral, we obtain 
\begin{align*}
\sum_{j=1}^J b_j \int_0^1 \hat \Psi_j((1-\epsilon)(\nu_1-\epsilon) f_j(x))dx \leq \lim_{\ell\to\infty} \frac{C_{n_\ell}}{n_\ell} = \beta_0 \leq \sum_{j=1}^J b_j  \int_0^1 \hat \Psi_j((1+\epsilon)(\nu_1+\epsilon) f_j(x))dx.
\end{align*}
Since $\hat \Psi_j$ is continuous, letting $\epsilon\to 0$ and using the Bounded Convergence Theorem, we obtain 
\[
\beta_0 = \sum_{j=1}^J b_j \int_0^1 \hat \Psi_j(\nu_1 f_j(x))dx=\beta_J(\nu_1).
\]
If $\nu_1 = +\infty$,  repeating the above argument shows that
\[
\beta_0\geq \beta_J(\nu)
\]
for any $\nu$, which would imply $\beta_0 \geq \lim_{\nu\to\infty} \beta_J(\nu) = 1$ by \prettyref{lem:beta-J}, a contradiction. Therefore, $\nu_1$ is finite and satisfies $\beta_J(\nu_1) = \beta_0$.
The same argument shows that $\nu_2 = \liminf_{n\to\infty} g_n\Lambda_n T_n$ satisfies $\beta_0 = \beta_J(\nu_2)$. By \prettyref{lem:beta-J}, $\nu_1 = \nu_2 = \nu_0$, where $\nu_0\in (0,\infty)$ is the unique root of $\beta_J(\nu) = \beta_0$. It follows that \eqref{eq:T_n-asympt-J} holds.
\end{proof}


The following lemma generalizes \prettyref{lem:H-limit-TTL}.

\begin{lemma}\label{lem:H-limit-TTL-J}
Under the assumptions of \prettyref{prop:H-limit-J}, 
\begin{equation}\label{eq:H-limit-TTL-J}
H_n^\TTL(T_n) \to \sum_{j=1}^J b_j \int_0^1 f_j(x) \Psi_j(\nu_0 f_j(x)) dx.
\end{equation}
\end{lemma}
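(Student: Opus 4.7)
The plan is to mimic the proof of \prettyref{lem:H-limit-TTL} with the natural class-by-class decomposition, using \prettyref{prop:T_n-asympt-generalize} (with $g_n=1/n$) in place of \prettyref{prop:T_n-asympt}. Concretely, I would first rewrite the stationary TTL hit probability using assumption \eqref{ass:a}: since $G_{n,j,k}(x)=\Psi_j(\lambda_{n,j,k}x)$ and $\lambda_{n,j,k}=p_{n,j,k}\Lambda_n$ by \eqref{eq:p_i}, \prettyref{lem:Y} and the class-decomposition of the total probability \eqref{eq:total-prob} give
\[
H_n^{\TTL}(T_n)=\sum_{j=1}^J\sum_{k=1}^{b_jn}p_{n,j,k}\,\Psi_j\!\bigl(p_{n,j,k}\Lambda_n T_n\bigr).
\]
The point of this rewriting is that each inner sum is now a Riemann-type sum indexed by $k=1,\dots,b_j n$ over the partition $\{[\tfrac{k-1}{b_j n},\tfrac{k}{b_j n}]\}$ used in assumption \eqref{ass:b}.

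Next I would plug in the asymptotics. Fix $\epsilon>0$. By \eqref{eq:p_i-gf-multi-F} (i.e. \eqref{ass:b} with $g_n=1/n$) and \eqref{eq:T_n-asympt-J} with $g_n=1/n$, for all sufficiently large $n$ and all $(j,k)$,
\[
(1-\epsilon)(\nu_0-\epsilon)f_j(z_{n,j,k})\le p_{n,j,k}\Lambda_n T_n\le(1+\epsilon)(\nu_0+\epsilon)f_j(z_{n,j,k}),
\]
and also $\tfrac{1-\epsilon}{n}f_j(z_{n,j,k})\le p_{n,j,k}\le\tfrac{1+\epsilon}{n}f_j(z_{n,j,k})$. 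Using the monotonicity of each $\Psi_j$, this sandwich yields, in complete analogy with \eqref{eq:H-TTL-bounds},
\[
\tfrac{1-\epsilon}{n}\sum_{j=1}^J\sum_{k=1}^{b_jn}f_j(z_{n,j,k})\Psi_j\bigl((1-\epsilon)(\nu_0-\epsilon)f_j(z_{n,j,k})\bigr)\le H_n^{\TTL}(T_n),
\]
and an analogous upper bound with $(1-\epsilon)$ replaced by $(1+\epsilon)$ throughout.

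Now I would pass to the limit $n\to\infty$ class by class. For fixed $j$, the inner sum is $\tfrac{b_jn}{n}\cdot\tfrac{1}{b_jn}\sum_{k=1}^{b_jn}f_j(z_{n,j,k})\Psi_j(\cdot)$; because $z_{n,j,k}\in[\tfrac{k-1}{b_jn},\tfrac{k}{b_jn}]$ and the integrand is continuous in $x$ on $\{f_j>0\}$, this is a Riemann sum converging to $b_j\int_0^1 f_j(x)\Psi_j((1\mp\epsilon)(\nu_0\mp\epsilon)f_j(x))\,dx$. (Integrability of each $f_j$ on $[0,1]$ follows from $\sum_k p_{n,j,k}\le 1$ and the lower bound in \eqref{ass:b}, exactly as in \prettyref{lem:H-limit-TTL}.) Summing over the finite index $j=1,\dots,J$ gives matching liminf and limsup bounds. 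Finally, letting $\epsilon\downarrow 0$ and invoking continuity of each $\Psi_j$ together with the Dominated Convergence Theorem (dominated by $f_j$) yields \eqref{eq:H-limit-TTL-J}.

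The routine steps are the sandwich bounds and the Riemann-sum identification; the only subtlety I expect is the Riemann-sum step on the set where $f_j$ may be unbounded or vanish (recall $\lim_{x\to 0+}f_j(x)$ may be $+\infty$ or $0$, and $f_j>0$ only a.e.). This is handled as in the proof of \prettyref{prop:H-limit}: one truncates to the set $A_\ell=\{f_j\ge 1/\ell\}\cap[\delta,1]$ where $f_j$ is uniformly continuous by Heine--Cantor, controls the complement using $\sum_k p_{n,j,k}\le 1$ to make it uniformly small, then lets $\ell\to\infty$ after $n\to\infty$. Since $J$ is finite, a single $\epsilon/J$-allocation across classes suffices, so no new difficulty arises from the multi-class structure.
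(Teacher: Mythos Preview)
Your proposal is correct and follows essentially the same approach as the paper's proof: rewrite $H_n^{\TTL}(T_n)$ via \eqref{ass:a} and \eqref{eq:p_i}, sandwich using \eqref{eq:p_i-gf-multi-F} and \eqref{eq:T_n-asympt-J}, pass to Riemann integrals class by class, then let $\epsilon\to 0$ via the Dominated Convergence Theorem. Your added caution about the Riemann-sum step near points where $f_j$ may be unbounded is more explicit than the paper, which simply invokes ``the definition of the Riemann integral'' without further comment.
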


\begin{proof}
Recall that $G_{n,j,k}(x) = \Psi_j(\lambda_{n,j,k} x)$. We obtain from \eqref{eq:H-TTL}, \eqref{eq:H_i-TTL} and \eqref{eq:Y_i-conditional-pmf},
\begin{align*}
H_n^{\TTL}(T_n)=\sum_{j=1}^J \sum_{i=1}^{b_j n} p_{n,j,k} \Psi_j(\lambda_{n,j,k} T_n)=\sum_{j=1}^J \sum_{i=1}^{b_j n} p_{n,j,k} \Psi_j(p_{n,j,k} \Lambda_n T_n).
\end{align*}
Given any $\epsilon>0$, for all sufficiently large $n$, \eqref{eq:p_i-bound-J} and the following hold,
\begin{equation}
\label{inq:fj}
(1-\epsilon) (\nu_0-\epsilon) f_j(z_{n,j,k}) \leq p_{n,j,k}  \Lambda_{n}  T_{n}    \leq (1+\epsilon) (\nu_0+\epsilon) f_j(z_{n,j,k}).
\end{equation}
The monotonicity of $\Psi_j$ then yields
\begin{equation*}
\begin{aligned}
\frac{1-\epsilon}{n}  \sum_{j=1}^J \sum_{k=1}^{b_j n} f_j(z_{n,j,k}) \Psi_j\left((1-\epsilon)(\nu_0-\epsilon) f_j(z_{n,j,k})\right)  \leq H_n^{\TTL}(T_n)\\
 \leq \frac{1+\epsilon}{n} \sum_{j=1}^J \sum_{k=1}^{b_j n}f_j(z_{n,j,k})  \Psi_j\left((1+\epsilon)(\nu_0+\epsilon) f_j(z_{n,j,k})\right).
\end{aligned}
\end{equation*}
Letting $n\to\infty$ and using the definition of the Riemann integral, we find
\begin{equation}\label{eq:H-liminf-J}
\liminf_{n\to\infty} H_n^{\TTL}(T_n) \\
\geq  (1-\epsilon)\sum_{j=1}^J b_j \int_0^1 f_j(x)\Psi_j((1-\epsilon)(\nu_0-\epsilon) f_j(x) dx ,
\end{equation}
and
\begin{equation}\label{eq:H-limsup-J}
\limsup_{n\to\infty} H_n^{\TTL}(T_n)\\
 \leq  (1+\epsilon) \sum_{j=1}^J b_j \int_0^1 f_j(x)\Psi_j((1+\epsilon)(\nu_0+\epsilon) f_j(x)) dx.
\end{equation}
The existence of the integrals comes from the fact that $0\leq \Psi_j \leq 1$ and the integrability of $f_j$ over $[0,1]$, which follows from the first inequality in \eqref{eq:p_i-bound-J} by the following,
\begin{align}
1 = \sum_{j=1}^J \sum_{k=1}^{b_j n} p_{n,j,k} \geq (1-\epsilon) \frac{1}{n}\sum_{j=1}^J \sum_{k=1}^{b_j n} f_j(z_{n,j,k})\to (1-\epsilon)  \sum_{j=1}^J b_j \int_0^1 f_j(x)dx\label{eq:f_j-normalization}.
\end{align}
Since $\Psi_j$ is continuous and $\int_0^1 f_j(x)dx<\infty$,  letting $\epsilon\to 0$ in \prettyref{eq:H-liminf-J} and \prettyref{eq:H-limsup-J}  yields \prettyref{eq:H-limit-TTL-J} by the Dominated Convergence Theorem.
\end{proof}

\begin{proof}[Proof of \prettyref{prop:H-limit-J}]
Thanks to Lemma \ref{lem:H-limit-TTL-J} and the value of $\nu_0$ given in Proposition \ref{prop:T_n-asympt-generalize} that satisfies (\ref{eq:nu0-prop:H-limit-multi-F}), we only need to show the convergence of $H_n^\LRU$ to $H_n^\TTL(T_n)$ as $n\to\infty$.
 For that, we invoke  Corollary \ref{corollary:mean-convergence}.
We use Remark \ref{remark-prop.4-4} and show that \eqref{eq:mu-lower} holds under the conditions of \prettyref{prop:H-limit-J}. 
Repeating the proof of \eqref{eq:H-limit-TTL-J}, we obtain
\begin{equation}
H_n^\TTL((1+x)T_n) \to \sum_{j=1}^J b_j \int_0^1 f_j(y) \Psi_j((1+x)\nu_0 f_j(y)) dy,
\label{HTTLxTn}
\end{equation}
as $n\to\infty$.
Fix $\epsilon>0$. Summing (\ref{inq:fj}) over $j$ and $k$ and letting $g_n=1/n$ yields, for $n$ large enough,
\[
(1-\epsilon)\frac{1}{n}\sum_{j=1}^J\sum_{k=1}^{b_jn}  f(z_{n,i,k})\leq \sum_{j=1}^J\sum_{k=1}^{b_jn} p_{n,j,k}=1\leq (1+\epsilon)\frac{1}{n}\sum_{j=1}^J \sum_{k=1}^{b_jn}f(z_{n,i,k}).
\]
Letting $n\to\infty$ we obtain, by the definition of the Riemann integral, 
\[
(1-\epsilon) \sum_{j=1}^J b_j \int_0^1 f_j(y)  dy\leq 1\leq (1+\epsilon)  \sum_{j=1}^J b_j \int_0^1 f_j(y)  dy,
\]
from which we conclude that
\begin{equation}
\label{normalize-fj}
1=\sum_{j=1}^J b_j \int_0^1 f_j(y)  dy.
\end{equation}
Since $\mu_n(T) = 1-H_n^\TTL(T)$, subtracting (\ref{HTTLxTn}) from (\ref{normalize-fj})  yields
\[
\mu_n((1+x)T_n)  \to \sum_{j=1}^J b_j \int_0^1 f_j(y) \bar \Psi_j((1+x)\nu_0 f_j(y)) dy:=\mu(x).
\]
Note that $\mu$ is continuous by the continuity of $f_j$, $\Psi_j$ and the Dominated Convergence Theorem. If $\mu(0)>0$, then there exists $x_0>0$ such that $\mu(x_0) 
\geq \mu(0)/2>0$. Thus for sufficiently large $n$ and $T\leq (1+x_0) T_n$,
\[
\mu_n(T) \geq \mu_n((1+x_0)T_n) \geq \mu(x_0)/2 \geq \mu(0)/4 > 0.
\]
Since $C_n \leq \Lambda_n T_n$ by \eqref{eq:T_n-lower} with $n_2=0$, the above inequality yields \eqref{eq:mu-lower} with $\phi = \mu(0)/4$, provided that $\mu(0)>0$.

Now we show that $\mu(0) > 0$. Suppose $\mu(0) = 0$. Then 
\[
\int_0^1 f_j(y) \bar \Psi_j(\nu_0 f_j(y)) dy=0
\]
for each $j$. Since $f_j(y)>0$ a.e. on $(0,1]$ and $\bar\Psi_j(\cdot)\geq 0$ for each $j$, it follows that $\bar\Psi_j(\nu_0 f_j(y))=0$ a.e.~on $(0,1]$ for each $j$.
Hence, by (\ref{eq:F-hat}) with $m_{\Psi_j}=1$ for each $j$,
\[
\beta_J(\nu_0)=\sum_{j=1}^J b_j  \int_0^1 \hat \Psi_j(\nu_0 f_j(x))dx= \sum_{j=1}^J b_j  \int_0^1  \int_0^{ \nu_0 f_j(x)} \bar \Psi(y) dy dx=0,
\]
which contradicts the result obtained in Proposition \ref{prop:T_n-asympt-generalize} that $\beta_J(\nu_0)=\beta_0\in (0,1)$ .
Therefore, $\mu(0)>0$, which completes the proof.
 \end{proof}

\begin{remark}
\label{alternative-proof}
A proof similar to that of Proposition \ref{prop:H-limit} can be done if we let $\Psi = \max_{1\leq j\leq J} \Psi_j$ and restrict the range of $\beta_0$ to $\beta_0< m_\Psi=
\int_0^\infty \min_{1\leq j\leq J} \bar \Psi_j(x)dx$ (see Section \ref{subsec:request}), a quantity that is in general strictly less than one.  
This restriction on $\beta_0$ is a consequence of condition \eqref{ass:C1}.
It is also worth noting that, since  Proposition \ref{prop:H-limit-J} reduces to Proposition \ref{prop:H-limit} when $J=1$, the proof of Proposition \ref{prop:H-limit-J}  provides an alternative proof of
 Proposition \ref{prop:H-limit}.
\end{remark}

\end{document}